\journal{SCP}
\def\ttimes{\mathbin{\text{\tt*}}}
\def\Sig{\mathcal{F}}
\def\Vars{\mathcal{V}}
\def\Terms{\mathcal{T}}
\def\Lex{\Const{lex}}
\def\DP{\Const{DP}}
\def\SN{\Const{SN}}
\def\Dom{\Const{Dom}}
\def\RR{{\mathcal{R}}}
\def\PP{\mathcal{P}}
\def\Coef{\mathit{sc}}
\def\CoefOf#1{\Coef(#1)}
\def\VCoef{\mathsf{vc}}
\def\VCoefOf#1{\VCoef(#1)}
\def\A{{\mathcal{A}}}
\def\Asum{{\mathcal{S}um}}
\def\AsumP{{\mathcal{S}um^+}}
\def\Amax{{\mathcal{M}ax}}
\def\Ams{{\mathcal{MS}um}}
\def\Apol{{\mathcal{P}ol}}
\def\Amp{{\mathcal{MP}ol}}
\def\Amat{{\mathcal{M}at}}
\def\REL{\sqsupset}
\def\GT{\succ}
\def\GE{\succeq}
\def\GS{\succsim}
\def\NGE{\nsucceq}
\def\PGT{>_\Sig}
\def\PGS{\gtrsim_\Sig}
\def\PSIM{\sim_\Sig}
\def\PNGT{\ngtr_\Sig}
\def\gs{\gtrsim}
\def\ngs{\mathrel{\SuperImpose\gtrsim{/}}}
\def\AGT{>_{\!\A}}
\def\ANGT{\ngtr_{\!\A}}
\def\AGS{\gs_\A}
\def\GSopt{\mathrel{\SuperImpose\succsim{\lower.5ex\hbox{\rm\tiny(\hskip 1.1em)}}}}
\def\gsopt{\mathrel{\SuperImpose\gtrsim{\lower.5ex\hbox{\rm\tiny(\hskip 1.1em)}}}}
\def\geopt{\mathrel{\SuperImpose\ge{\lower.5ex\hbox{\rm\tiny(\hskip 1.1em)}}}}
\def\GEnotG{%
	\my@mathchoice{\mathrel{\SuperImpose\ge{\raise.3ex\hbox{$\my@style/$}}}}}
\def\KBO{\mathrm{KBO}}
\def\GKBO{\mathrm{GKBO}}
\def\TKBO{\mathrm{TKBO}}
\def\RPO{\mathrm{RPO}}
\def\POLO{\mathrm{POLO}}
\def\RPOLO{\mathrm{RPOLO}}
\def\LPO{\mathrm{LPO}}
\def\WPO{\mathrm{WPO}}
\def\WPOA{{\WPO(\A)}}
\def\WPO{\mathrm{WPO}}
\def\WPOA{{\WPO(\A)}}
\def\WPOsum{{\WPO(\Asum)}}
\def\WPOsumP{{\WPO(\AsumP)}}
\def\WPOmax{{\WPO(\Amax)}}
\def\WPOms{{\WPO(\Ams)}}
\def\WPOpol{{\WPO(\Apol)}}
\def\WPOmp{{\WPO(\Amp)}}
\def\WPOmat{{\WPO(\Amat)}}
\def\Seq#1{%
	\bgroup
		\def\o##1{\overline{##1\hspace{1pt}}\hspace{-1pt}}%
		\hspace{1pt}\o#1
	\egroup
}
\def\AppPerm#1#2#3{[\Seq{{#2}_{#1}}]}
\def\AppPermSubst#1#2#3#4{[\SeqSubst{#2}_{#1}{#4}]}
\def\Weight{w}
\def\WeightOf#1{\Weight\Br{#1}}
\def\Wzero{\Weight_0}
\def\Max{\mathsf{max}}
\def\Pol{\mathsf{pol}}
\def\Wsum{\mathsf{w}}
\def\WsumOf#1{\Wsum(#1)}
\def\Wmax{\mathsf{w}}
\def\WmaxOf#1{\Wmax(#1)}
\def\XW{\overline{\mathsf{w}}}
\def\XWof#1{\XW(#1)}
\def\XWsm{\XW^\Wstatus}
\def\XWsmOf#1{\XWsm(#1)}
\def\Vec#1{\textbf{\textit{#1}}}
\def\VecWeight{\Vec{w}}
\def\VecWeightOf#1{\VecWeight(#1)}
\def\Cmat{SC}
\def\CmatOf#1{\Cmat(#1)}
\def\Wstatus{\mathit{ws}}
\def\WstatusOf#1{\Wstatus(#1)}
\def\Pen{\mathit{sp}}
\def\PenOf#1{\Pen(#1)}
\newcommand\SigD{\mathcal{D}}
\newcommand\Root{\mathsf{root}}
\renewcommand{\p@enumii}{\theenumi}
\renewcommand{\p@enumiii}{\theenumi\theenumii--}
\def\m#1{\mathsf{#1}}
\def\ca{\mathsf{a}}
\def\cb{\mathsf{b}}
\def\f{\mathsf{f}}
\def\g{\mathsf{g}}
\def\h{\mathsf{h}}
\newcommand\ci{\mathsf{i}}
\def\F{\f^\sharp}
\def\s{\mathsf{s}}
\def\p{\mathsf{p}}
\def\PrecVarOf#1{\mathsf{p}_{#1}}
\def\WeiVarOf#1{\mathsf{w}_{#1}}
\def\WeiVarZero{\mathsf{w}_0}
\def\PenVarOf#1{\mathsf{sp}_{#1}}
\def\PermedVarOf#1{\mathsf{st}_{#1}}
\def\PermVarOf#1{\mathsf{st}_{#1}}
\def\CoefVarOf#1{\mathsf{sc}_{#1}}
\begin{document}

\begin{frontmatter}

\title{A Unified Ordering for Termination Proving%
\cut{\footnote{%
	The preliminary version of this paper, reformatted in the
	\texttt{elsarticle} style, is available at
	\url{http://www.trs.cm.is.nagoya-u.ac.jp/papers/SCP2014/reformatted.pdf}
}}}

\author[NU]{Akihisa Yamada\corref{cor1}}
\ead{ayamada@trs.cm.is.nagoya-u.ac.jp}
\cortext[cor1]{Corresponding author}

\author[GU]{Keiichirou Kusakari}
\ead{kusakari@gifu-u.ac.jp}

\author[NU]{Toshiki Sakabe}
\ead{sakabe@is.nagoya-u.ac.jp}

\address[NU]{Graduate School of Information Science, Nagoya University, Japan}
\address[GU]{Faculty of Engineering, Gifu University, Japan}

\begin{abstract}
We introduce a reduction order called the weighted path order (WPO)
that \REV{subsumes}{encompasses} many existing reduction orders.
WPO compares weights of terms as in the Knuth-Bendix order (KBO),
while WPO allows weights to be computed by 
\REV{%
	a wide class of interpretations.
}{%
	an arbitrary interpretation which is weakly monotone and weakly simple.
}%
We investigate summations, polynomials and maximums for such interpretations.
We show that KBO is a restricted case of WPO induced by summations,
the polynomial order (POLO) is subsumed by WPO induced by polynomials,
and the lexicographic path order (LPO) is a restricted case of WPO
induced by maximums.
By combining these interpretations,
we obtain an instance of WPO that unifies KBO, LPO and POLO.
In order to fit WPO in the modern dependency pair framework,
we further 
\REV{%
	provide a reduction pair based on WPO and
}{%
	extend WPO as a reduction pair by introducing
}%
partial statuses.
As a reduction pair, WPO also subsumes matrix interpretations.
We finally present SMT encodings of our techniques,
and demonstrate the significance of our work through experiments.
\end{abstract}

\begin{keyword}
Term rewriting\sep
Reduction order\sep
Termination
\end{keyword}

\end{frontmatter}

\section{Introduction}

Proving \emph{termination} of \emph{term rewrite systems (TRS\REV{s}{})}
is one of the most important task\REV{s}{} in
program verification and automated theorem proving, where
\emph{reduction orders} play a fundamental role.
The classic use of reduction orders in termination proving is 
illustrated in the following example:%
\REV{%

\example\label{ex:fact}
	Consider the following TRS $\RR_\m{fact}$:
	\[
		\RR_\m{fact} \DefEq 
		\left\{
			\begin{array}{rcl}
				\m{fact}(\m{0}) &\to& \s(\m{0})\\
				\m{fact}(\s(x)) &\to& \s(x) \ttimes \m{fact}(x)
			\end{array}
		\right.
	\]
	which defines the factorial function,
	provided the binary symbol $\ttimes$ is defined as multiplication.
}{%

	\[
		\RR_\m{half} \DefEq
		\begin{EqSet}
			\p(\s(x)) &\to x
		\\
			\m{half}(0) &\to 0
		\\
			\m{half}(\s(x)) &\to \s(\m{half}(\p(x)))
		\end{EqSet}
	\]
	The TRS $\RR_\m{half}$ defines the function $\m{half}$ that halves an input natural number.
}%
	We can prove termination of $\RR_{\REV{\m{fact}}{\m{half}}}$ by finding a reduction order $\GT$
	that satisfies the following constraints:%
\REV{%
	\begin{align*}
		\m{fact}(\m{0}) &\GT \s(\m{0})\\
		\m{fact}(\s(x)) &\GT \s(x) \ttimes \m{fact}(x)
	\end{align*}
	\endexample

}{%
	\begin{align*}
		\p(\s(x)) &\GT x
	\\
		\m{half}(0) &\GT 0
	\\
		\m{half}(\s(x)) &\GT \s(\m{half}(\p(x)))
	\end{align*}

}%
A number of reduction orders have been proposed,
and their efficient
\REV{%
	implementation is demonstrated by several
}{%
	implementations are proposed during
	the recent developments of
}%
automatic termination provers such as \AProVE \cite{GST06} or \TTTT \cite{KSZM09}.

One of the most well-known reduction orders is
the \emph{lexicographic path order (LPO)} of Kamin and L\'evy \cite{KL80},
a variant of the \emph{recursive path order (RPO)} of Dershowitz \cite{D82}.
LPO is unified with RPO using \emph{status} \cite{L83}.
Recently,
Codish \etal \cite{CGST12} proposed
an efficient implementation using a SAT solver for
termination proving by
RPO with status.

The \emph{Knuth-Bendix order (KBO)} \cite{KB70} is 
the \REV{oldest}{most historical} reduction order.
KBO has become a practical alternative in automatic termination checking since
Korovin and Voronkov \cite{KV03} discovered a polynomial-time algorithm
for termination proof\REV{s}{} with KBO.
Zankl \etal \cite{ZHM09} proposed another implementation method via SAT/SMT encoding,
and verified a significant improvement
in efficiency over dedicated implementations of
the polynomial-time algorithm.
However,
KBO is disadvantageous compared to LPO when \emph{duplicating} rules
(where a variable occurs more often in the right\REV{-}{ }hand side than in the left\REV{-}{ }hand side)
are considered.
Actually, no duplicating rule can be oriented by KBO.
To overcome this disadvantage,
Middeldorp and Zantema \cite{MZ97} proposed the \emph{generalized KBO (GKBO)},
which generalizes weights over
algebras that are weakly monotone and \emph{strictly simple}:
$f(\dots,x,\dots) > x$.
Ludwig and Waldmann proposed another extension of KBO called
the \emph{transfinite KBO (TKBO)} \cite{LW07,KMV11,WZM12},
which extends the weight function to allow linear polynomials over ordinals.
However, proving termination with TKBO involves \REV{solving the }{}satisfiability problem of
non-linear arithmetic which is undecidable in general.
Moreover, TKBO still does not \REV{subsume}{encompass} LPO.

The \emph{polynomial order (POLO)} of Lankford \cite{L75} interprets 
each function symbol by a strictly monotone polynomial.
Zantema \cite{Z01} extended the method \REV{to}{over} algebras
\REV{%
}{%
	that are weakly monotone and \emph{weakly} simple:
	$f(\dots,x,\dots) \ge x$,
}%
and suggested combining the ``max'' operator with polynomial interpretations
(\emph{max-polynomials} in terms of \cite{FGMSTZ08}).
Fuhs \etal proposed an efficient SAT encoding of POLO in \cite{FGMSTZ07}, and
a general version of POLO with max in \cite{FGMSTZ08}.

\REV{%
	The \emph{dependency pair (DP) method} of \citet{AG00}
}{%
	The \emph{dependency pair (DP) framework} \cite{AG00,HM05,GTSF06}
}%
significantly enhances the classic approach of reduction orders by
analyzing cyclic dependencies between rewrite rules.
In the DP \REV{method}{framework},
reduction orders are \REV{extended}{relaxed} to \emph{reduction pairs} $\Tp{\GS,\GT}$,
and it suffices if one rule in a \REV{recursive}{cyclic} dependency is strictly oriented,
and other rules are only weakly \REV{oriented}{so}.
\REV{%
	\example\label{ex:DP}\mbox{}%
}{}%
Consider again the TRS $\RR_{\REV{\m{fact}}{\m{half}}}$.
There is one cyclic dependency in $\RR_{\REV{\m{fact}}{\m{half}}}$,
that is represented by the \emph{dependency pair}
$\REV{\m{fact}}{\m{half}}^\sharp(\s(x)) \to \REV{\m{fact}}{\m{half}}^\sharp(\p(x))$%
\REV{, where $\REV{\m{fact}}{\m{half}}^\sharp$ is a fresh symbol. We
}{. Hence we
}%
can prove termination of $\RR_{\REV{\m{fact}}{\m{half}}}$ by finding
a reduction pair $\Tp{\GS,\GT}$ that satisfies the following constraints:%
\footnote{%
	The last two constraints can be removed by considering \emph{usable rules}
	\cite{AG00}.
}%
\REV{%
	\begin{align*}
		\m{fact}^\sharp(\s(x)) &\GT \m{fact}^\sharp(x)
	\\
		\m{fact}(\m0) &\GS \s(\m0)
	\\
		\m{fact}(\s(x)) &\GS \s(x) \ttimes \m{fact}(x)
	\end{align*}
\endexample
}{%
	\begin{align*}
		\m{half}^\sharp(\s(x)) &\GT \m{half}^\sharp(\p(x))
	\\
		\p(\s(x)) &\GS x
	\\
		\m{half}(0) &\GS 0
	\\
		\m{half}(\s(x)) &\GS \s(\m{half}(\p(x)))
	\end{align*}

}%
One of the typical methods for designing reduction pairs is
\emph{argument filtering} \cite{AG00},
which generates reduction pairs from arbitrary reduction orders.
Hence, \REV{reduction orders are}{a reduction order is} still an important subject to study
in modern termination proving.
\REV{}{\par}%
Another typical technique is generalizing interpretation methods
to \emph{weakly} monotone ones,
\eg allowing $0$ coefficients for polynomial interpretations \cite{AG00}.
Endrullis \etal \cite{EWZ08} extended polynomial interpretations
to \emph{matrix interpretations},
and presented \REV{their}{its} implementation via SAT encoding.
\REV{}{\par}%
More recently, Bofill \etal \cite{BBRR13} proposed a reduction pair called \emph{RPOLO},
which unifies standard POLO and RPO by choosing either
\emph{RPO-like} or \emph{POLO-like} comparison
depending on function symbols.

These reduction orders and reduction pairs
require different correctness proofs and different implementations.
In this paper, we extract the underlying essence of these reduction orders and
introduce a general reduction order
called the \emph{weighted path order (WPO)}.
Technically, WPO is a further generalization of GKBO that
relaxes the strict simplicity condition of weights to \emph{weak simplicity}.
This relaxation become\REV{s}{} possible
by combining \REV{the }{}recursive checks of LPO with GKBO.
While strict simplicity is so restrictive that
GKBO does not even subsume the standard KBO,
weak simplicity is so \REV{general}{generous} that
WPO \REV{subsumes}{encompasses} not only KBO but also most of the reduction orders
described above (LPO, TKBO, POLO and so on), except for matrix interpretations%
\REV{
	which are not weakly simple in general%
}{}.

\REV{%
	There exist several earlier works on generalizing existing reduction orders.
	The \emph{semantic path order (SPO)} of \citet{KL80} is a generalization of
	RPO where precedence comparison is generalized to
	an arbitrary well-founded order on terms.
	However, to prove termination by SPO
	users have to ensure monotonicity by themselves,
	even if the underlying well-founded order is monotone (\cf \cite{BFR00}).
	On the other hand,
	monotonicity of WPO is guaranteed.
	\citet{BFR00} propose a variant of SPO
	that ensures monotonicity by using an external monotonic order.
	As well as LPO or POLO,
	also WPO can be used as such an external order.
	The \emph{general path order (GPO)} \cite{DH95,G96} is
	a very general framework that many reduction orders are subsumed.
	Due to the generality, however,
	implementing GPO seems to be quite challenging.
	Indeed, we are not aware of any tool that implements GPO.

}{}%
Instances of WPO are characterized by how weights are computed.
In particular, we introduce the following instances of WPO
and investigate their relationships with existing reduction orders:
\begin{itemize}
\item
	$\WPOsum$ which uses summations for weight computation.
	KBO can be obtained as a restricted case of $\WPOsum$,
	where the \emph{admissibility} condition is enforced, and
	weights of constants must be greater than $0$.
	$\WPOsum$ is free from these restrictions, and
	we verify that each extension strictly increases the power of the order.
\item
	$\WPOpol$ which uses monotone polynomial interpretations for weight computation.
	As a reduction order,
	POLO is subsumed by $\WPOpol$.
	TKBO can be obtained as a restricted case of $\WPOpol$,
	where interpretations are linear polynomials,
	\REV{}{the }admissibility is enforced, and
	interpretations of constants are greater than $0$.
\item
	$\WPOmax$ which uses maximums for weight computation.
	LPO can be obtained as a restricted case of $\WPOmax$, 
	where the weights of all symbols are fixed to $0$.
	In order to keep the presentation simple, we omit \emph{multiset status}
	and only consider \emph{LPO with status}.
	Nonetheless, it is easy to extend this result \REV{to}{for} \emph{RPO with status}.
\item
	$\WPOmp$ which combines polynomial\REV{s}{} and maximum for interpretation,
	and its variant $\WPOms$ whose coefficients are fixed to $1$.
	WPO($\Ams$) generalizes KBO and LPO, and
	$\WPOmp$ moreover subsumes POLO (with max)
	as a reduction order.
\end{itemize}
Note that all the instances described above use weakly simple algebras
\REV{which cannot be used for GKBO}{and hence not possible by GKBO}.

Next we extend WPO \REV{to}{as} a reduction pair by incorporating
\emph{partial statuses} \cite{YKS13b}.
This extension further relaxes the weak simplicity condition,
and arbitrary weakly monotone interpretations can be
used for weight computation. Hence as
a reduction pair, WPO also \REV{subsumes}{encompasses the} matrix interpretations,
as well as KBO, TKBO, LPO and POLO.
Though RPOLO also unifies RPO and POLO,
we show that WPO and RPOLO are incomparable in \REV{general}{theory}.%
\footnote{%
	It is possible to consider an instance of WPO that uses RPOLO for weight computation.
}
Moreover in practice,
WPO \REV{brings}{shows} significant benefit \REV{on}{in} the problems
from the \emph{Termination Problem Data Base (TPDB)} \cite{TPDB13},
while (the first-order version of) RPOLO does not,
as reported in \cite{BBRR13}.

Finally, we present an efficient implementation using
\REV{}{the }state-of-the-art SMT solvers.
By extending \cite{ZHM09},
we present SMT encoding techniques for the instances of WPO introduced so far.
In particular, \REV{the }{}orientability problem\REV{s}{} of $\WPOsum$, $\WPOmax$ and $\WPOms$
are reduced to a satisfiability problem of linear arithmetic, which is known to be decidable.
Through experiments in TPDB problems,
we also verify the efficiency of our implementation
and significance of WPO in practice.

The remainder of this paper is organized as follows:
In \prettyref{sec:preliminaries}
we recall some basic notions of term rewriting and
the definitions of existing orders.
\prettyref{sec:WPO order} is devoted \REV{to}{for} WPO as a reduction order.
There we present the definition of WPO, and then
investigate several instances of WPO and
show their relationships with existing orders.
In \prettyref{sec:pair}
we extend WPO \REV{to}{as} a reduction pair.
We present the definition of the reduction pair
and a soundness proof in \prettyref{sec:pair definition}.
Then the definition is refined in \prettyref{sec:pair refinements} and
\REV{the relationship to}{relationships with} existing reduction pairs
\REV{is}{are} shown in \prettyref{sec:pair instances}.
\prettyref{sec:encodings} presents SMT encodings for 
the instances of WPO introduced so far,
and some implementation issues are discussed in \prettyref{sec:optimizations}.
In \prettyref{sec:experiments} we verify the significance of our work
through experiments and we conclude in \prettyref{sec:conclusion}.

A preliminary version of this paper appeared in \cite{YKS13}.
The results for \REV{the reduction orders of}{reduction order in}
\prettyref{sec:WPO order} are basically the same as in \cite{YKS13}\REV{}{,
though the \REV{the presentation is}{presentations are} refined}.
The definition of WPO as a reduction pair in \prettyref{sec:pair} is new, and
hence most of the results in \prettyref{sec:pair} are new.
The \REV{revised}{new} version of WPO further subsumes POLO and matrix interpretations
as a reduction pair\REV{}{, as shown \REV{by}{as} 
Corollaries \ref{cor:WPO>=POLO pair} and \ref{cor:WPO>=MAT pair}, \resp,
in contrast to \cite{YKS13}}.
We also conclude that WPO does not subsume RPOLO by \prettyref{ex:RPOLO-WPO},
which was left open in \cite{YKS13}.
\REV{The}{%
Moreover, the SMT encodings in \prettyref{sec:encodings} are revised to fit
\REV{}{to }the new definition, and
the new} experimental results in
Sections \ref{sec:experiments pair} and \ref{sec:experiments combination} show
a significant improvement due to
the new definition of WPO as a reduction pair.

\section{Preliminaries}\label{sec:preliminaries}

\emph{Term rewrite systems (TRSs)} model first-order functional programs.
We refer the readers to \cite{BN98,Terese} for details \REV{on}{of} rewriting,
and only briefly recall some important notions needed in this paper.

A \Def{signature} $\Sig$ is a finite set of function symbols associated with \REV{an }{}arity.
The set of $n$-ary symbols is denoted by $\Sig_n$.
A \Def{term} is either a variable $x \in \Vars$ or \REV{of the }{in }form
$f(s_1,\dots,s_n)$ where $f \in \Sig_n$ and each $s_i$ is a term.
Throughout the paper, we abbreviate a sequence $a_1,\dots,a_n$ by $\Seq{a_n}$.
The set of terms constructed from $\Sig$ and $\Vars$ is denoted by
$\Terms(\Sig,\Vars)$.
The set of variables occurring in a term $s$ is denoted by $\Var(s)$, and
the number of occurrences of a variable $x$ in $s$ is denoted by $|s|_x$.
\REV{}{\par}%
A TRS is a set $\RR$ of pairs of terms called \Def{rewrite rules}.
A rewrite rule, written $l \to r$ where
$l \notin \Vars$ and $\Var(l) \supseteq \Var(r)$,
indicates that an instance of $l$ should be rewritten to
\REV{the }{}corresponding instance of $r$.
The \Def{rewrite relation} $\to_\RR$ induced by $\RR$ is
\REV{%
	the least relation which includes $\RR$ and is
	monotonic and stable.
	Here,
}{%
	the monotonic stable closure of $\RR$, where
}%
a relation $\REL$ on terms is
\begin{itemize}
\item
	\Def{monotonic} iff $s \REL t$ implies $f(\dots,s,\dots) \REL f(\dots,t,\dots)$
	for every context $f(\dots,\Box,\dots)$, and
\item
	\Def{stable} iff $s \REL t$ implies $s\theta \REL t\theta$
	for every substitution $\theta$.
\end{itemize}
A TRS $\RR$ is \Def{terminating} iff
no infinite rewrite sequence $s_1 \to_\RR s_2 \to_\RR \dots$ exists.

\subsection{Reduction Orders}

A classic method for proving termination is to find a \emph{reduction order}:
A \Def{reduction order} is a well-founded order which is monotonic and stable.
We say an order $\GT$ \Def{orients} a TRS $\RR$ iff
$l \GT r$ for every rule $l \to r \in \RR$; in other words, $\RR \subseteq {\GT}$.
It is easy to see the following:

\begin{theorem}\label{thm:order}\REV{\cite{Z94}}{}
	A TRS is terminating iff
	it is oriented by a reduction order.\qed
\end{theorem}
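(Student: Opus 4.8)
The plan is to prove the two implications separately, relying on the characterization of $\to_\RR$ as the least monotonic and stable relation containing $\RR$.

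For the ``if'' direction, suppose a reduction order $\GT$ orients $\RR$, so that $\RR \subseteq {\GT}$. Since $\GT$ is itself monotonic and stable and contains $\RR$, it must contain the least monotonic and stable relation containing $\RR$; that is, $\to_\RR \subseteq {\GT}$. I would then argue by contradiction: an infinite rewrite sequence $s_1 \to_\RR s_2 \to_\RR \cdots$ would induce an infinite descending chain $s_1 \GT s_2 \GT \cdots$, contradicting well-foundedness of $\GT$. Hence no such sequence exists and $\RR$ is terminating.

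For the ``only if'' direction, suppose $\RR$ is terminating and take $\GT$ to be the transitive closure $\to_\RR^+$ of the rewrite relation. I would verify each required property in turn. Transitivity holds by construction. Well-foundedness reduces to termination, since any infinite $\to_\RR^+$-descending chain decomposes into an infinite $\to_\RR$ sequence; in particular $\GT$ is irreflexive, so it is a strict order. Monotonicity and stability are inherited from $\to_\RR$, as the transitive closure of a monotonic (respectively stable) relation is again monotonic (respectively stable). Finally, $\GT$ orients $\RR$: for each rule $l \to r \in \RR$ we have $l \to_\RR r$ by rewriting at the root with the identity substitution, whence $l \GT r$.

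I do not expect a genuine obstacle here, as the statement is a classic folklore result. The only points that require care are the closure step $\to_\RR \subseteq {\GT}$ in the first direction and the reduction of well-foundedness of $\to_\RR^+$ to termination in the second. Both rest on the single observation that every $\to_\RR^+$-step is a finite composition of $\to_\RR$-steps, so that the existence of infinite chains coincides for the two relations.
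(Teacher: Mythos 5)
Your proof is correct and is precisely the standard argument the paper leaves implicit (it states the theorem with a bare \qed and a citation): the ``if'' direction via $\to_\RR \subseteq {\GT}$ because $\to_\RR$ is the least monotonic and stable relation containing $\RR$, and the ``only if'' direction by checking that $\to_\RR^+$ is itself a reduction order orienting $\RR$. Both directions are carried out without gaps.
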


Ensuring well-foundedness of a reduction order is often a non-trivial task.
\REV{The following}{Following}
is a well-known technique of Dershowitz \cite{D82} for ensuring well-foundedness
based on \emph{Kruskal's tree theorem}\REV{.}{:}
A \Def{simplification order} is a strict order $\GT$ on terms, which is
monotonic and stable and satisfies \REV{the }{}\Def{subterm property}:
$f(\dots,s,\dots) \GT s$.

\begin{theorem}\cite{D82}
	For a finite signature,
	a simplification order is a reduction order.\qed
\end{theorem}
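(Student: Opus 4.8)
The plan is to observe that monotonicity and stability are already part of the definition of a simplification order, so the only thing left to establish is \emph{well-foundedness}; once that is in hand, $\GT$ satisfies all three clauses of a reduction order and we are done. The combinatorial engine for well-foundedness is Kruskal's tree theorem, which I would apply to the homeomorphic embedding relation, exploiting that $\Sig$ is finite.

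The central lemma I would prove first is the containment ${\Emb{\GT}} \subseteq {\GT}$, where $\Emb{\GT}$ is the proper homeomorphic embedding, i.e. the relation generated by the subterm steps $f(\Seq{s_n}) \Emb{\GT} s_i$ together with closure under contexts (so $s \Emb{\GT} t$ holds exactly when $t$ arises from $s$ by deleting some function symbols). This containment follows directly from the hypotheses on a simplification order: the subterm property gives $f(\dots,s,\dots) \GT s$, monotonicity makes each context-closure step $\GT$-preserving, and transitivity of the strict order $\GT$ lets these steps be composed. A routine induction on the generation of $\Emb{\GT}$ then shows every proper embedding is a $\GT$-step.

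Next I would derive well-foundedness by contradiction. Suppose there were an infinite descending chain $t_1 \GT t_2 \GT \cdots$. Since a term may contain free variables and there are infinitely many of them, I would first collapse them: applying the substitution that maps every variable to a single fixed variable $x$ yields, by stability, an infinite descending chain $t_1\theta \GT t_2\theta \GT \cdots$ whose members contain at most $x$; treating $x$ as a nullary symbol, these are terms over a finite signature, so Kruskal's tree theorem applies and produces indices $i < j$ with either $t_i\theta = t_j\theta$ or $t_j\theta \Emb{\GT} t_i\theta$. In the first case, transitivity along the chain gives $t_i\theta \GT t_i\theta$; in the second, the containment lemma gives $t_j\theta \GT t_i\theta$ while the chain gives $t_i\theta \GT t_j\theta$, again yielding $t_i\theta \GT t_i\theta$. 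Either way the irreflexivity of the strict order $\GT$ is contradicted, so no such chain exists.

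The main obstacle is Kruskal's tree theorem itself, which carries the genuine combinatorial weight and which I would simply cite; the finiteness hypothesis on $\Sig$ is precisely what guarantees that its alphabet of labels is a well-quasi-order, so that the theorem is available. The only other delicate point is the presence of free variables, which the stability-based collapse to one variable disposes of cleanly, reducing the problem to ground-like terms over a finite signature.
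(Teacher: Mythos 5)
Your proposal is correct: it is exactly the classical argument of Dershowitz that the paper invokes by citation (the theorem is stated with \cite{D82} and no in-text proof), namely showing ${\Emb{\GT}} \subseteq {\GT}$ from the subterm property, monotonicity and transitivity, collapsing variables by stability, and then deriving well-foundedness from Kruskal's tree theorem over the finite signature. No gaps; the variable-collapse step and the direction of the embedding produced by Kruskal's theorem are both handled correctly.
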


\REV{%
	In the latter, we only consider finite signatures.
}{}%
In the remainder of this section, we recall several existing reduction orders.

\subsubsection{Lexicographic Path Order}

We consider LPO \cite{KL80} with quasi-precedence and status;
a \Def{quasi-precedence} $\PGS$ is a quasi-order 
\REV{(\ie, a reflexive and transitive relation)}{} on $\Sig$,
whose strict part, denoted by $\PGT$, is well-founded.
The equivalence part of $\PGS$ is denoted by $\PSIM$.
A \Def{status function} $\sigma$ assigns 
\REV{to }{}each function symbol $f \in \Sig_n$
a permutation $[\Seq{i_n}]$ of positions
in $\SetOf{1,\dots,n}$.
We denote the list $[s_{i_1},\dots,s_{i_n}]$ by
$\AppPerm{\sigma(f)}{s}{n}$ for $\sigma(f) = [\Seq{i_n}]$.
\REV{%
	A strict order $\GT$ (associated with a quasi-order $\GS$)
	is lifted on lists as follows:
	$[\Seq{s_n}] \GT^\Lex [\Seq{y_m}]$ iff
	there exists $k < n$ \st
	$x_i \GS y_i$ for each $i \in \SetOf{1,\dots,k}$ and
	either $k = m$ or $k < m$ and $x_{k+1} \GT y_{k+1}$.

}{}%
\begin{definition}\label{def:LPO}
	For a quasi-precedence $\PGS$,
	the \emph{lexicographic path order} $\GT_\LPO$
	with status $\sigma$ is recursively defined as follows:
	$s = f(\Seq{s_n}) \GT_\LPO t$ iff
	\makeatletter
	\renewcommand{\p@enumii}{\theenumi--}
	\makeatother
	\renewcommand\theenumi{\alph{enumi}}
	\renewcommand\labelenumi{(\theenumi)}
	\renewcommand\theenumii{\roman{enumii}}
	\renewcommand\labelenumii{\theenumii.}
	\begin{enumerate}
	\item\label{item:LPO-simp}
		$\ForSome{i \in \{1,\dots,n \}} s_i\GE_\LPO t$, or
	\item\label{item:LPO-args}
		$t=g(\Seq{t_m})$, $\ForAll{j\in \{ 1, \dots, m \}} s\GT_\LPO t_j$ and either
		\begin{enumerate}
		\item\label{item:LPO-prec}
			$f\PGT g$, or
		\item
			$f\PSIM g$ and 
			$\AppPerm{\sigma(f)}{s}{n} \GT_\LPO^\Lex
			 \AppPerm{\sigma(g)}{t}{m}$.
		\end{enumerate}
	\end{enumerate}
\end{definition}

\begin{theorem}\cite{KL80}
	$\GT_\LPO$ is a simplification order and hence a reduction order.\qed
\end{theorem}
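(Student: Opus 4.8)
The statement asserts two things: that $\GT_\LPO$ is a simplification order, and that it is consequently a reduction order. The second implication is immediate from the theorem of Dershowitz recalled above, since we work over a finite signature; so the whole task reduces to verifying that $\GT_\LPO$ is a strict order that is monotonic, stable, and enjoys the subterm property. The plan is to establish these properties in a dependency-respecting order, since several feed into one another, and all of the inductions below—as well as the well-definedness of the recursion defining $\GT_\LPO$—are justified by the fact that every recursive appeal in \prettyref{def:LPO} strictly decreases the total term size.

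First I would dispatch the subterm property, which is the cheapest: for $s = f(\Seq{s_n})$ and any argument $s_i$, reflexivity of $\GE_\LPO$ gives $s_i \GE_\LPO s_i$, so clause \prettyref{item:LPO-simp} yields $s \GT_\LPO s_i$ directly. The real work is \emph{transitivity}, which I expect to be the main obstacle. I would prove that $s \GT_\LPO t$ and $t \GT_\LPO u$ imply $s \GT_\LPO u$ by well-founded induction on the combined size $|s| + |t| + |u|$, branching on which of clauses \prettyref{item:LPO-simp} and \prettyref{item:LPO-args} justifies each hypothesis. The delicate branches are those where both hypotheses use \prettyref{item:LPO-args}: there I must chain the root comparisons through transitivity and well-foundedness of the quasi-precedence (the combinations of $f \PGT g$ and $f \PSIM g$) while simultaneously transferring the lexicographic comparison. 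For the latter I would need an auxiliary transitivity lemma for $\GT_\LPO^\Lex$ relative to $\GS$, proved by a sub-induction on list length, together with a strengthening lemma stating that $s \GT_\LPO t$ implies $s \GT_\LPO u$ for every proper subterm $u$ of $t$ (itself following from the subterm property and the induction hypothesis). Keeping these nested inductions well-founded, and treating the equivalence $\PSIM$ and the status permutations coherently, is where the bookkeeping is heaviest.

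With transitivity and the subterm property in hand, irreflexivity follows by a minimal-counterexample argument: were $s \GT_\LPO s$ for some size-minimal $s = f(\Seq{s_n})$, clause \prettyref{item:LPO-simp} would force $s_i \GE_\LPO s$ for a proper subterm $s_i$, which with $s \GT_\LPO s_i$ (subterm property) and transitivity produces $s_i \GT_\LPO s_i$, contradicting minimality; whereas clause \prettyref{item:LPO-args} forces either $f \PGT f$, impossible since $\PGT$ is strict, or $f \PSIM f$ with a list strictly $\GT_\LPO^\Lex$-greater than itself, contradicting irreflexivity of the lexicographic extension. Transitivity together with irreflexivity makes $\GT_\LPO$ a strict order.

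Finally I would establish monotonicity and stability by induction following the recursive structure of the definition. For monotonicity, given $s \GT_\LPO t$ I compare $f(\dots,s,\dots)$ with $f(\dots,t,\dots)$ via clause \prettyref{item:LPO-args} with equal roots: each argument of $f(\dots,t,\dots)$ is dominated, because the unchanged arguments satisfy the subterm property while the changed one satisfies $f(\dots,s,\dots) \GT_\LPO s \GT_\LPO t$ by the subterm property and transitivity, and the two permuted argument lists then agree everywhere except at the single position carrying $s$ versus $t$, so $\GT_\LPO^\Lex$ holds using reflexivity of $\GS$ on the common prefix and $s \GT_\LPO t$ at the first difference. For stability I push a substitution $\theta$ through a derivation of $s \GT_\LPO t$; the only non-routine case is $t = x \in \Var(s)$, handled by observing that $x$ is then a proper subterm of $s$, so $\theta(x)$ is a proper subterm of $s\theta$ and the subterm property gives $s\theta \GT_\LPO \theta(x)$. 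Having verified that $\GT_\LPO$ is a stable, monotonic strict order with the subterm property, it is a simplification order, and the Dershowitz theorem yields that it is a reduction order.
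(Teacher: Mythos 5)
Your outline is sound. The paper itself offers no proof of this theorem --- it is quoted from Kamin and L\'evy \cite{KL80} with the \textsf{qed} attached to the statement --- so there is no in-paper argument to compare against; what you give is the standard textbook development, and its skeleton is correct: well-definedness of the recursion by descent on $|s|+|t|$, the subterm property from reflexivity of $\GE_\LPO$ via clause \prettyref{item:LPO-simp}, transitivity by induction on $|s|+|t|+|u|$ with the case split on which clause fires, irreflexivity by minimal counterexample, monotonicity through clause \prettyref{item:LPO-mono} with identical permuted lists except at one position, stability by pushing $\theta$ through the derivation, and --- importantly, which you got right --- well-foundedness delegated entirely to Dershowitz's theorem over the finite signature rather than proved directly. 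The only place where the sketch is thinner than a complete proof is the transitivity case where both hypotheses use clause \prettyref{item:LPO-args}; there you correctly identify that you need the induction hypothesis applied to $s \GT_\LPO t \GT_\LPO u_k$ to discharge the side condition on the arguments of $u$, and a componentwise induction-hypothesis argument for chaining $\GT_\LPO^\Lex$, both of which go through because every term appearing in those subgoals is a subterm of $s$, $t$ or $u$.
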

\unskip
\REV{%
\begin{example}\label{ex:LPO}
	Termination of $\RR_\m{fact}$ from \prettyref{ex:fact} can be shown by LPO.
	Both rules are oriented by case \prettyref{item:LPO-prec}
	with a precedence \st $\m{fact} \PGT \s$ for the first rule and
	$\m{fact} \PGT {\ttimes}$ for the second rule.
\end{example}\unskip
}{}%
\subsubsection{Polynomial Interpretations}

We basically follow the abstract definitions of \cite{Z01,EWZ08}.
A \Def{well-founded $\Sig$-algebra} $\A$
is a quadruple $\Tp{A,\gs,>,\cdot_\A}$
of a \Def{carrier set} $A$,
a quasi-order $\gs$ on $A$,
a well-founded order $>$ on $A$ which is \emph{compatible} with $\gs$,
\REV{\ie,}{\ie}
${\gs}\circ{>}\circ{\gs} \subseteq {>}$, and
an \Def{interpretation} $f_\A : A^n \to A$ for each $f\in\Sig_n$.
$\A$ is \Def{strictly} (\Def{weakly}) \Def{monotone} iff
$a \REV{\gsopt}{>} b$ implies $f_\A(\dots,a,\dots) \gsopt f_\A(\dots,b,\dots)$, and
\Def{strictly} (\Def{weakly}) \Def{simple} iff
$f_\A(\dots,a,\dots) \gsopt a$ for every $f \in \Sig$.
The relations $\gs$ and $>$ are extended \REV{to}{on} terms as follows:
$s \gsopt_\A t$ iff 
$\widehat\alpha(s) \gsopt \widehat\alpha(t)$ holds
for \REV{every}{all} assignment\REV{}{s} $\alpha : \Vars \to A$\REV{,
where $\widehat\alpha : \Terms \to A$ is the homomorphic extension of $\alpha$.
}{and its homomorphic extension $\widehat\alpha$.

}%
A \Def{polynomial interpretation} $\Apol$ interprets
every function symbol $f \in \Sig$ as a polynomial $f_\Apol$.
The carrier set of $\Apol$ is
$\{ a \in \Nat \mid a \ge \Wzero \}$ for some $\Wzero \in \Nat$%
\REV{, and the}{. The}
orderings are the standard $\ge$ and $>$ on $\Nat$.
$\Apol$ induces a reduction order if it is strictly monotone;
in other words, all arguments have coefficients at least $1$.

\begin{theorem}\cite{MN70,L75}
	If $\Apol$ is strictly monotone, then
	$>_\Apol$ is a reduction order.\qed
\end{theorem}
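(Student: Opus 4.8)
The plan is to verify that $>_\Apol$ meets the three defining conditions of a reduction order — well-foundedness, monotonicity and stability — together with being a strict order. I would obtain each property by lifting the corresponding property of the standard order $>$ on the carrier $A = \{a \in \Nat \mid a \ge \Wzero\}$ through the homomorphic extension $\widehat\alpha$. In fact nothing in the argument uses that the interpretations are polynomials: the same reasoning shows that \emph{any} strictly monotone well-founded $\Sig$-algebra with nonempty carrier induces a reduction order, so I would phrase the proof to invoke only well-foundedness of $>$ on $A$ and strict monotonicity of the interpretations $f_\Apol$.

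First I would dispatch the routine properties. That $>_\Apol$ is transitive and irreflexive is immediate, since for any fixed assignment $\alpha$ the values $\widehat\alpha(\cdot)$ inherit transitivity and irreflexivity of $>$ on $\Nat$. For well-foundedness, suppose toward a contradiction that $s_1 >_\Apol s_2 >_\Apol \cdots$ is an infinite descending chain; evaluating under a single assignment $\alpha$ (for instance the constant assignment $x \mapsto \Wzero$, which exists because $A$ is nonempty) yields $\widehat\alpha(s_1) > \widehat\alpha(s_2) > \cdots$, an infinite strictly descending sequence in $\Nat$ bounded below by $\Wzero$, which is impossible. For monotonicity I would use strict monotonicity of $\Apol$ directly: given $s >_\Apol t$ and a context $f(\dots,\Box,\dots)$, for every $\alpha$ we have $\widehat\alpha(s) > \widehat\alpha(t)$, hence $f_\Apol(\dots,\widehat\alpha(s),\dots) > f_\Apol(\dots,\widehat\alpha(t),\dots)$, which is exactly $\widehat\alpha(f(\dots,s,\dots)) > \widehat\alpha(f(\dots,t,\dots))$; as $\alpha$ was arbitrary, $f(\dots,s,\dots) >_\Apol f(\dots,t,\dots)$.

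The one step that needs an auxiliary lemma is stability, and this is where I would put the real (though still routine) work. The key is the commutation of substitution with evaluation: for every assignment $\alpha$ and substitution $\theta$,
\[
	\widehat\alpha(s\theta) = \widehat{\alpha_\theta}(s),
	\qquad\text{where } \alpha_\theta(x) \DefEq \widehat\alpha(x\theta),
\]
which I would prove by structural induction on $s$ (the variable case is just the definition of $\alpha_\theta$, and the function-symbol case follows since both $\widehat\alpha$ and $\widehat{\alpha_\theta}$ are homomorphic). Granting this, stability is immediate: if $s >_\Apol t$ then $\widehat\beta(s) > \widehat\beta(t)$ holds for \emph{every} assignment $\beta$, in particular for $\beta = \alpha_\theta$, so $\widehat\alpha(s\theta) = \widehat{\alpha_\theta}(s) > \widehat{\alpha_\theta}(t) = \widehat\alpha(t\theta)$; since $\alpha$ ranges over all assignments, $s\theta >_\Apol t\theta$. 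The main obstacle here is bookkeeping rather than genuine difficulty: because $>_\Apol$ is defined by a universal quantification over assignments, every step must be checked uniformly for all $\alpha$, and the commutation lemma above is the only place requiring induction. I would therefore expect the proof to be short, with that lemma isolated as the single technical ingredient.
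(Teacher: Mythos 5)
Your proof is correct: it is the standard argument for well-founded monotone algebras, lifting well-foundedness, (strict) monotonicity and stability from the carrier through the homomorphic extension, with the substitution lemma $\widehat\alpha(s\theta)=\widehat{\alpha_\theta}(s)$ as the only inductive ingredient. The paper states this theorem as a known result cited from the literature and gives no proof of its own, so there is nothing to compare against; your observation that the argument uses nothing specific to polynomials beyond strict monotonicity and well-foundedness of the carrier order is also accurate and is exactly the generalization the paper later exploits.
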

\unskip
\REV{%
\begin{example}\label{ex:POLO}
	Termination of $\RR_\m{fact}$ of \prettyref{ex:fact} can be
	shown by the polynomial interpretation $\Apol$ defined as follows:
	\begin{align*}
		\m{fact}_\Apol(x) &= 2 x + 2&
		\m{0}_\Apol &= 0\\
		\s_\Apol(x) &= 2 x + 1&
		x \ttimes_\Apol y &= x + y
	\end{align*}
	The left- and right-hand sides of 
	the rule $\m{fact}(\m{0}) \to \s(\m{0})$
	are interpreted as $2$ and $1$, \resp, and
	those of the rule 
	$\m{fact}(\s(x)) \to \s(x) \ttimes \m{fact}(x)$
	are interpreted as $4x + 4$ and $4x + 3$, \resp.
\end{example}\unskip
}{}%
\subsubsection{Knuth-Bendix Order}

$\KBO$ \cite{KB70} is induced by a \REV{quasi-}{}precedence and
a \Def{weight function} $\Tp{\Weight,\Wzero}$, where
$\Weight : \Sig \to \Nat$ and $\Wzero \in \Nat$ \st
$\WeightOf c \ge \Wzero$ for every constant $c \in \Sig_0$.
The weight $\WeightOf{s}$ of a term $s$ is defined as follows:
\[
	\WeightOf{s} \DefEq
	\begin{Cases}
	\Wzero
	&\text{ if } s \in \Vars \\
	\WeightOf{f} + \displaystyle\sum_{i=1}^{n}\WeightOf{s_i}
	&\text{ if } s = f(\Seq{s_n})
	\end{Cases}
\]
The weight function $\Weight$ is said
\REV{to be }{}\Def{admissible} for $\PGS$ iff
every unary symbol $f \in \Sig_1$ with $\WeightOf{f} = 0$ is
\REV{\emph{greatest} \wrt $\PGS$, \ie, $f \PGS g$ for every $g \in \Sig$}{maximum \wrt $\PGS$}.
In this paper we also consider status for KBO \cite{S89}.

\begin{definition}\label{def:KBO}
	For a quasi-precedence $\PGS$
	and a weight function $\Tp{\Weight,\Wzero}$,
	the \Def{Knuth-Bendix order}
	$\GT_\KBO$ with status $\sigma$ is
	recursively defined as follows:
	$s = f(\Seq{s_n}) \GT_\KBO t$ iff
	$|s|_x \ge |t|_x$ for all $x \in \Vars$ and either
	\begin{enumerate}
	\item\label{item:KBO-gt}
		$\WeightOf{s} > \WeightOf{t}$, or
	\item\label{item:KBO-ge}
		$\WeightOf{s} = \WeightOf{t}$ and either
		\begin{enumerate}
		\item\label{item:KBO-simp}
			$s = f^k(t)$ and $t \in \Vars$ for some $k > 0$, or
		\item\label{item:KBO-args}
			$t = g(\Seq{t_m})$ and either
			\begin{enumerate}
			\item\label{item:KBO-prec}
				$f\PGT g$, or
			\item\label{item:KBO-mono}
				$f\PSIM g$ and
				$\AppPerm{\sigma(f)}{s}{n} \GT_\KBO^\Lex
				 \AppPerm{\sigma(g)}{t}{m}$.
			\end{enumerate}
		\end{enumerate}
	\end{enumerate}
\end{definition}
Here we follow \cite{ZHM09}, and the range of $\Weight$ is restricted to $\Nat$.
According to \cite{KV03}, this does not decrease the power of $\KBO$ for finite TRSs.
Note that we do not assume $\Wzero > 0$ in the definition.
This assumption, together with \REV{}{the }admissibility is required for $\KBO$ to be a simplification order.
For details of the following result, we refer \eg \REV{to }{}\cite[Theorem~5.4.20]{BN98}.

\begin{theorem}
	If $\Wzero > 0$ and $\Weight$ is admissible for $\PGS$, then
	$\GT_\KBO$ induced by $\Tp{\Weight,\Wzero}$ and $\PGS$ is a simplification order,
	and hence a reduction order.
	\qed
\end{theorem}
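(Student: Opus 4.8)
The plan is to verify that $\GT_\KBO$ meets the four defining conditions of a \emph{simplification order}---that it is a strict order, stable, monotonic, and enjoys the subterm property---whereupon well-foundedness, and hence the status of reduction order, follows for the finite signature by Dershowitz's theorem recalled above. Two elementary facts about the weight drive everything. First, since $\WeightOf c \ge \Wzero$ for constants and $\WeightOf x = \Wzero$ for variables, a straightforward induction on term structure yields $\WeightOf s \ge \Wzero > 0$ for every term $s$; this is exactly where the hypothesis $\Wzero > 0$ enters. Second, the weight is affine in the variable multiplicities, $\WeightOf{s\theta} = \WeightOf s + \sum_{x \in \Var(s)} |s|_x\,(\WeightOf{x\theta} - \Wzero)$, proved by induction on $s$. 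I also record that $\GT_\KBO$ is monotone in the two quantities it tracks: $s \GT_\KBO t$ implies $|s|_x \ge |t|_x$ for all $x$ (immediate from the common guard) and $\WeightOf s \ge \WeightOf t$, with clause \prettyref{item:KBO-gt} giving strict inequality and clause \prettyref{item:KBO-ge} equality.

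Irreflexivity is immediate, since no clause fires when $s = t$. For \textbf{stability}, assume $s \GT_\KBO t$ and fix $\theta$. The variable guard is preserved because $|s\theta|_x = \sum_y |s|_y\,|y\theta|_x \ge \sum_y |t|_y\,|y\theta|_x = |t\theta|_x$. Using the affine lemma together with $\WeightOf{x\theta} \ge \Wzero$ and $|s|_x \ge |t|_x$ gives $\WeightOf{s\theta} - \WeightOf{t\theta} \ge \WeightOf s - \WeightOf t \ge 0$; a strict original weight gap (clause \prettyref{item:KBO-gt}) survives, and if the weights were equal they either become strict---landing in clause \prettyref{item:KBO-gt}---or stay equal, in which case the roots, the precedence relation, and the status-based lexicographic comparison are inherited, the last by an inner induction. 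The one degenerate point is clause \prettyref{item:KBO-simp} with $s = f^k(t)$, $t \in \Vars$: if $t\theta$ is no longer a variable, equal weights force $\WeightOf f = 0$, and $s\theta = f^k(t\theta) \GT_\KBO t\theta$ follows from iterated subterm property.

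\textbf{Monotonicity} reduces to the one-step context $f(\dots,s,\dots) \GT_\KBO f(\dots,t,\dots)$ from $s \GT_\KBO t$. The variable guard is preserved, and since $\WeightOf{f(\dots,s,\dots)} - \WeightOf{f(\dots,t,\dots)} = \WeightOf s - \WeightOf t$, the weight comparison carries over verbatim; if it is strict we are in clause \prettyref{item:KBO-gt}, and otherwise the two terms share the root $f$ (so $f \PSIM f$) and coincide in all status-permuted argument positions but the one slot holding $s$ versus $t$, where $s \GT_\KBO t$ supplies the strict lexicographic decrease (the preceding positions being equal, hence related by the reflexive companion of $\GT_\KBO$). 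For the \textbf{subterm property} $f(\Seq{u_n}) \GT_\KBO u_i$, proved by structural induction, the variable guard is clear and the weight $\WeightOf f + \sum_j \WeightOf{u_j}$ strictly exceeds $\WeightOf{u_i}$ as soon as $n \ge 2$ or $\WeightOf f > 0$, because each $\WeightOf{u_j} \ge \Wzero > 0$. The remaining case $f(u_1) \GT_\KBO u_1$ with $f$ unary and $\WeightOf f = 0$ is where \emph{admissibility} is indispensable: it makes $f$ maximal in $\PGS$, so if $u_1$ is a variable we use clause \prettyref{item:KBO-simp} with $k = 1$, and if $u_1 = g(\Seq{v_m})$ we have $f \PGT g$ (clause \prettyref{item:KBO-prec}) or $f \PSIM g$, the latter closed lexicographically since, writing $\sigma(g) = [\Seq{i_m}]$, the structural induction already gives $u_1 \GT_\KBO v_{i_1}$.

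The main obstacle is \textbf{transitivity}: from $s \GT_\KBO t$ and $t \GT_\KBO u$ conclude $s \GT_\KBO u$, which I would prove by induction on $|s| + |t| + |u|$. The variable guards compose, and the weight chain $\WeightOf s \ge \WeightOf t \ge \WeightOf u$ reduces instantly to clause \prettyref{item:KBO-gt} unless all three weights coincide. In that equal-weight regime both hypotheses used clause \prettyref{item:KBO-ge}, and the remaining work is a case analysis over its subclauses: the subclause \prettyref{item:KBO-simp} for $s \GT_\KBO t$ is vacuous, since its right-hand side is a variable and cannot be the larger term of $t \GT_\KBO u$; transitivity of $\PGT$, compatibility of $\PGT$ with $\PSIM$, and transitivity of the lexicographic extension (whose component comparisons shrink in size, so the induction hypothesis applies) discharge the clause \prettyref{item:KBO-args} combinations; and the mixed cases, where one step is the $f^k(\cdot)$ clause \prettyref{item:KBO-simp}, are the fiddly ones, handled by unfolding that clause into iterated applications of the already-established subterm property. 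Assembling the four properties and invoking Dershowitz's theorem for the finite signature completes the proof.
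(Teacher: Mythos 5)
Your overall strategy---verify irreflexivity, transitivity, stability, monotonicity and the subterm property directly from \prettyref{def:KBO}, then invoke Dershowitz's theorem for the finite signature---is the standard one; the paper itself gives no proof and defers to \cite[Theorem~5.4.20]{BN98}, so there is no in-paper argument to compare against. Most of your cases are sound: the two weight lemmas, stability (including the degenerate instantiation of case \prettyref{item:KBO-simp} under a substitution), monotonicity via the lexicographic clause, and the subterm property, where you correctly isolate the unary weight-zero symbol as the one place admissibility is needed.

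The gap is in transitivity, exactly in the mixed case you call ``fiddly'': $s \GT_\KBO t$ by case \prettyref{item:KBO-args} and $t \GT_\KBO u$ by case \prettyref{item:KBO-simp}, so that $u \in \Vars$ and all three weights equal $\Wzero$. ``Unfolding into iterated applications of the subterm property'' only yields a \emph{chain} from $s$ down to $u$; to conclude $s \GT_\KBO u$ you must exhibit one application of a clause of \prettyref{def:KBO}, and since $u$ is a variable with $\WeightOf{s} = \WeightOf{u}$ the only candidate is \prettyref{item:KBO-simp}, which demands $s = f^k(u)$ for the \emph{single} root symbol $f$. What actually closes the case is a computation you never make: $\WeightOf{s} = \Wzero > 0$ together with $|s|_u \ge 1$ forces $s$ to be a tower $f_1(f_2(\cdots f_p(u)\cdots))$ of unary symbols of weight $0$, and one then needs all the $f_i$ to coincide. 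Under a partial-order precedence (as in the cited \cite{BN98}) admissibility makes the weight-zero unary symbol unique, so this holds; but with the paper's quasi-precedence two distinct unary symbols $\f \neq \g$ of weight $0$ may both be greatest and hence $\f \PSIM \g$, and then $\f(\g(x)) \GT_\KBO \g(x)$ by case \prettyref{item:KBO-mono} and $\g(x) \GT_\KBO x$ by case \prettyref{item:KBO-simp}, yet $\f(\g(x)) \GT_\KBO x$ fails because no clause applies. So the step as you describe it does not go through, and it cannot be repaired without either adding the uniqueness observation (restricting to partial-order precedences, or to at most one unary symbol of weight zero) or reformulating case \prettyref{item:KBO-simp} as ``$t \in \Vars$ and $t \in \Var(s)$''.
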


\REV{%
The \emph{variable condition}
``$|s|_x \ge |t|_x$ for all $x \in \Vars$''
is often said to be a major disadvantage of KBO.
Due to this condition, no duplicating rule can be oriented by KBO.
\begin{example}
	Termination of $\RR_\m{fact}$ of \prettyref{ex:fact} cannot be shown by KBO,
	since the variable $x$ of the rule
	$\m{fact}(\s(x)) \to \s(x) \ttimes \m{fact}(x)$
	violates the variable condition.
\end{example}\unskip
}{}%
\subsubsection{Transfinite KBO}

TKBO \cite{LW07,KMV11,WZM12} extends KBO by introducing
a \Def{subterm coefficient function} $\Coef$,
that assigns a positive integer%
\footnote{We do not use \Def{transfinite} coefficients, since
they do not add power when finite TRSs are considered \cite{WZM12}.
\REV{%
	We will still use the acronym TKBO to denote
	the finite variant.
}{}%
}
$\CoefOf{f,i}$ to each $f \in \Sig_n$ and $i \in \{ 1,\dots,n \}$.
For a weight function $\Tp{\Weight,\Wzero}$ and a subterm coefficient function $\Coef$,
\REV{the refined weight }{}$\WeightOf{s}$ is \REV{defined}{refined} as follows:
\[
	\WeightOf{s} \DefEq
	\begin{Cases}
		\Wzero
		&\text{ if } s \in \Vars
	\\
		\WeightOf{f} + 
		\displaystyle\sum_{i = 1}^{n}\CoefOf{f,i} \cdot \WeightOf{s_i}
		&\text{ if } s = f(\Seq{s_n})
	\end{Cases}
\]
The \Def{variable coefficient}
$\VCoefOf{x,s}$ of $x$ in $s$ is defined recursively as follows:
\[
	\VCoefOf{x,s} \DefEq
	\begin{Cases}
		1 &\text{if } x = s
	\\
		0 &\text{if } x \neq \REV{s}{y} \in \Vars
	\\
		\displaystyle\sum_{i=1}^{n}\CoefOf{f,i} \cdot \VCoefOf{x,s_i}
		&\text{if } s = f(\Seq{s_n})
	\end{Cases}
\]
Then the order $\GT_\TKBO$ is obtained from \prettyref{def:KBO} by replacing
$|\cdot|_x$ by $\VCoefOf{x,\cdot}$ and $w(\cdot)$ by 
\REV{its refined version above}{refined ones}.

\begin{theorem}\cite{LW07}
	If $w_0 > 0$ and $w$ is admissible for $\PGS$, then
	$\GT_\TKBO$ is a simplification order and hence a reduction order.\qed
\end{theorem}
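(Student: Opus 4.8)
The plan is to verify directly that $\GT_\TKBO$ is a simplification order, i.e.\ a strict, stable, monotonic order enjoying the subterm property $f(\Seq{s_n}) \GT_\TKBO s_i$; since the signature is finite, the theorem of Dershowitz recalled above then upgrades it to a reduction order, so well-foundedness need not be argued separately. The backbone of the whole argument is a pair of ``chain-rule'' identities describing the refined weight and the variable coefficient under a substitution $\theta$, each proved by a routine induction on $s$:
\[
	\VCoefOf{x,s\theta} = \sum_{y \in \Vars}\VCoefOf{y,s}\cdot\VCoefOf{x,y\theta},
	\qquad
	\WeightOf{s\theta} = \WeightOf{s} + \sum_{x \in \Vars}\VCoefOf{x,s}\cdot\bigl(\WeightOf{x\theta}-\Wzero\bigr).
\]
These generalize the familiar KBO identities, with occurrence counts replaced by variable coefficients. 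I would also first record the bound $\WeightOf{u}\ge\Wzero$ for every term $u$, by induction on $u$: variables give $\Wzero$, constants give $\WeightOf{c}\ge\Wzero$ by assumption, and for $u=f(\Seq{s_n})$ with $n\ge1$ we have $\WeightOf{u}\ge\CoefOf{f,1}\cdot\WeightOf{s_1}\ge\WeightOf{s_1}\ge\Wzero$, since every subterm coefficient is a positive integer. As $\Wzero>0$, all weights are positive.

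Stability is then immediate. If $s\GT_\TKBO t$ then $\VCoefOf{x,s}\ge\VCoefOf{x,t}$ for all $x$; multiplying by the non-negative factors $\VCoefOf{x,y\theta}$ and summing gives $\VCoefOf{x,s\theta}\ge\VCoefOf{x,t\theta}$, so the variable condition survives instantiation. The weight identity together with $\WeightOf{x\theta}\ge\Wzero$ and the preserved variable condition yields $\WeightOf{s\theta}-\WeightOf{t\theta}\ge\WeightOf{s}-\WeightOf{t}$; hence a strict weight decrease (case~\prettyref{item:KBO-gt}) is preserved, and in the equal-weight case the instantiated weight either drops strictly (and we are done by case~\prettyref{item:KBO-gt}) or stays equal, whereupon the recursive comparison is replayed on the instances, the subterm property handling case~\prettyref{item:KBO-simp} when the variable $t$ is mapped to a non-variable.

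The subterm property and monotonicity I would treat together by induction on term size, and this is where the hypotheses are used essentially. For $f(\Seq{s_n})\GT_\TKBO s_i$, positivity of the coefficients gives both the variable condition and $\WeightOf{f(\Seq{s_n})}\ge\CoefOf{f,i}\cdot\WeightOf{s_i}\ge\WeightOf{s_i}$; if this is strict we finish by case~\prettyref{item:KBO-gt}, and otherwise the inequalities collapse to equalities, which, because every argument weight is $\ge\Wzero>0$, can only happen when $f$ is unary with $\WeightOf{f}=0$ and $\CoefOf{f,1}=1$. Here admissibility enters: such an $f$ is greatest \wrt $\PGS$, so against $s_i=g(\dots)$ we have either $f\PGT g$ (case~\prettyref{item:KBO-prec}) or $f\PSIM g$, in which case the comparison $[s_i]\GT_\TKBO^\Lex[\dots]$ is settled at the first argument by the inductive subterm property; when $s_i$ is a variable, case~\prettyref{item:KBO-simp} applies. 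Monotonicity is easier: from $s\GT_\TKBO t$ the context identities give $\WeightOf{f(\dots,s,\dots)}\ge\WeightOf{f(\dots,t,\dots)}$ and the variable condition, and in the equal-weight case both sides share root $f$ and status $\sigma(f)$, so case~\prettyref{item:KBO-mono} applies, the lexicographic comparison being decided at the changed argument by $s\GT_\TKBO t$ itself.

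Finally, irreflexivity and transitivity follow by induction on term size, reducing at each step either to the total order on weights or to the well-founded precedence and transitivity of the lexicographic extension. I expect the only genuine difficulty to lie in the subterm-property argument for weight-preserving unary symbols: this is precisely the point at which $\Wzero>0$ (forcing the arity to be one, since no other argument can have zero weight) and admissibility (making the symbol maximal) are both indispensable, and it is also where the recursion through the lexicographic comparison must be interleaved with the size induction. All remaining obligations—the two chain-rule identities, the weight bound, and the closure computations—are routine inductions that I would not spell out in full.
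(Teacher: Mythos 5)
The paper gives no proof of this statement: it is imported from \cite{LW07}, and within the paper's own framework it would instead follow from \prettyref{thm:WPO>=TKBO} together with \prettyref{thm:WPO simple} (under the stated hypotheses $\GT_\TKBO$ coincides with $\GT_\WPOpol$ for a strictly monotone linear $\Apol$, which is weakly simple). Your proof is the classical direct argument, and its substance is sound: the two substitution identities for $\Weight$ and $\VCoef$ are correct and do yield stability; the bound $\WeightOf{u}\ge\Wzero$ is right; and your analysis of the subterm property --- weight equality forces $f$ unary with $\WeightOf{f}=0$ and $\CoefOf{f,1}=1$, whereupon admissibility makes $f$ greatest and the lexicographic comparison is settled at the first position by the inductive subterm property --- is exactly where $\Wzero>0$ and admissibility are consumed. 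Monotonicity is also handled correctly.

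The one step I would not accept as ``routine induction'' is transitivity, because with the paper's \emph{quasi}-precedence it is the only genuinely delicate point. Admissibility only requires a weight-zero unary symbol to satisfy $f \PGS g$ for all $g$; it does not make such a symbol unique. Take two \emph{distinct} unary symbols $\f,\g$ with $\WeightOf{\f}=\WeightOf{\g}=0$, hence $\CoefOf{\f,1}=\CoefOf{\g,1}=1$ and $\f\PSIM\g$ with both greatest. Then $\f(\g(x)) \GT_\TKBO \g(x)$ by case \prettyref{item:KBO-mono} and $\g(x)\GT_\TKBO x$ by case \prettyref{item:KBO-simp}, yet $\f(\g(x)) \GT_\TKBO x$ fails: the weights are all equal to $\Wzero$, case \prettyref{item:KBO-simp} literally demands $s=f^k(x)$ for a \emph{single} symbol $f$, and case \prettyref{item:KBO-args} cannot apply to a variable right-hand side. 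So either the precedence must be strict (as it is in \cite{LW07}, which is why the cited result is safe there), or one must argue that $\PSIM$-equivalent weight-zero unary symbols can be identified, or case \prettyref{item:KBO-simp} must be read as allowing strings of greatest unary symbols; your proof needs to commit to one of these. Everything else in your argument goes through as written.
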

\unskip
\REV{%
\begin{example}\label{ex:TKBO}
	Consider again the TRS $\RR_\m{fact}$ of \prettyref{ex:fact}.
	Consider the weight function $\Weight$ defined as follows:
	\begin{align*}
		\WeightOf{\m{0}} &= 1&
		\WeightOf{\m{s}} &= 0&
		\WeightOf{\m{fact}} &= 1&
		\WeightOf{\ttimes} &= 0
	\end{align*}
	and the subterm coefficient function $\Coef$ defined as follows:
	\begin{align*}
		\CoefOf{\s,1} = \CoefOf{\m{fact},1} &= 2&
		\CoefOf{\ttimes,1} = \CoefOf{\ttimes,2} &= 1
	\end{align*}
	Finally, consider an admissible precedence \st
	$\m{s} \PGT \m{fact} \PGT \ttimes$.
	The first rule $\m{fact}(\m{0}) \to \s(\m{0})$ of $\RR_\m{fact}$ is
	oriented by case \prettyref{item:KBO-gt}.
	For both sides of the second rule $\m{fact}(\s(x)) \to \s(x) \ttimes \m{fact}(x)$,
	the variable coefficient of $x$ is $4$ and the weight is $3$.
	Hence, the rule is oriented by case \prettyref{item:KBO-prec}.
\end{example}\unskip
}{}%
\subsubsection{Generalized Knuth-Bendix Order}

GKBO \cite{MZ97} uses a weakly monotone and \emph{strictly} simple algebra for weight computation.
In the following version of GKBO, we
extend \cite{MZ97} with quasi-order $\AGS$ and quasi-precedence $\PGS$,
and omit \emph{multiset status}.

\begin{definition}
	For a \REV{quasi-}{}precedence \REV{$\PGS$}{$\PGT$} and a well-founded $\Sig$-algebra $\A$,
	the \Def{generalized Knuth-Bendix order} $\GT_\GKBO$
	is recursively defined as follows: $s =f(\Seq{s_n}) \GT_\GKBO t$ iff
	\renewcommand\theenumii{\roman{enumii}}%
	\renewcommand\labelenumii{\theenumii.}%
	\begin{enumerate}
	\item\label{item:GKBO-gt}
		$s \AGT t$, or
	\item\label{item:GKBO-ge}
		$s \AGS t = g(\Seq{t_m})$ and either
		\begin{enumerate}
		\item\label{item:GKBO-prec}
			$f \PGT g$, or
		\item
			$f \PSIM g$ and
			$\AppPerm{\sigma(f)}{s}{n} \GT_\GKBO^\Lex
			 \AppPerm{\sigma(g)}{t}{m}$.
		\end{enumerate}
	\end{enumerate}
\end{definition}

\begin{theorem}
	\cite{MZ97}
	If $\A$ is weakly monotone and strictly simple, then
	$\GT_\GKBO$ is a simplification order and hence a reduction order.\qed
\end{theorem}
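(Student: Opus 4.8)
The plan is to establish directly that $\GT_\GKBO$ is a simplification order, that is, a strict order that is stable and monotonic and enjoys the subterm property, and then to appeal to Dershowitz's theorem for finite signatures to conclude it is a reduction order. As a preliminary I would check that the recursion is well founded: the only recursive clause, the lexicographic comparison in the subcase $f \PSIM g$ of \ref{item:GKBO-ge}, recurses on the arguments $\Seq{s_n}$ and $\Seq{t_m}$, which are proper subterms, so the total term size strictly decreases. I would also record two facts about the algebra extensions that are used repeatedly: $\AGT$ and $\AGS$ are stable (because $\widehat\alpha(s\theta) = \widehat\beta(s)$ for the assignment $\beta\colon x \mapsto \widehat\alpha(x\theta)$, so an inequality holding under all assignments survives substitution), and the term relations inherit $\AGT \subseteq \AGS$ together with the compatibility ${\AGS}\circ{\AGT}\circ{\AGS}\subseteq{\AGT}$ from the corresponding properties of $\gs$ and $>$.

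The subterm property is where \emph{strict} simplicity is spent: for every argument position, strict simplicity gives $f_\A(\dots,a,\dots) > a$, hence $\widehat\alpha(f(\dots,s,\dots)) = f_\A(\dots,\widehat\alpha(s),\dots) > \widehat\alpha(s)$ for every assignment $\alpha$, i.e.\ $f(\dots,s,\dots) \AGT s$, so case \ref{item:GKBO-gt} fires. Stability then follows by induction on the definition: in case \ref{item:GKBO-gt} stability of $\AGT$ gives $s\theta \AGT t\theta$ directly; in case \ref{item:GKBO-ge} the root symbols of $s$ and $t=g(\Seq{t_m})$ are unchanged under $\theta$, stability of $\AGS$ preserves $s \AGS t$, the precedence tests are untouched, and the induction hypothesis preserves the lexicographic comparison.

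For monotonicity I would first prove the auxiliary fact $s \GT_\GKBO t \Rightarrow s \AGS t$ (immediate in case \ref{item:GKBO-ge}, and from $\AGT \subseteq \AGS$ in case \ref{item:GKBO-gt}). Writing $u = f(\dots,s,\dots)$ and $v = f(\dots,t,\dots)$, which differ in a single argument, \emph{weak} monotonicity of $\A$ upgrades $s \AGS t$ to $u \AGS v$. If moreover $u \AGT v$ we finish by case \ref{item:GKBO-gt}; otherwise the roots coincide so $f \PSIM f$ holds and the $\sigma(f)$-permuted argument lists of $u$ and $v$ agree everywhere except in the position carrying $s$ against $t$, where $s \GT_\GKBO t$, so the lexicographic subcase of \ref{item:GKBO-ge} applies.

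I expect transitivity to be the main obstacle, and would prove it simultaneously with irreflexivity by induction on $|s|+|t|+|u|$. Given $s \GT_\GKBO t$ and $t \GT_\GKBO u$ (so all three terms are function applications), I would case-split on the clauses used. Whenever at least one step uses case \ref{item:GKBO-gt}, combining the algebra comparisons via transitivity of $\AGT$ and the compatibility ${\AGS}\circ{\AGT}\circ{\AGS}\subseteq{\AGT}$ yields $s \AGT u$, so case \ref{item:GKBO-gt} is recovered; this compatibility assumption is exactly what makes the mixed strict/weak situations go through, and is the crux of the argument. When both steps use case \ref{item:GKBO-ge}, transitivity of $\gs$ gives $s \AGS u$, transitivity of $\PGT$ and $\PSIM$ handles the precedence part, and in the fully $\PSIM$-equivalent case transitivity of the lexicographic extension — supplied by the induction hypothesis on the strictly smaller arguments — closes the goal. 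Irreflexivity falls out of the same induction: $s \AGT s$ and $f \PGT f$ are impossible, and $\GT_\GKBO^\Lex$ on a list against itself would demand some $s_i \GT_\GKBO s_i$, contradicting the hypothesis. With $\GT_\GKBO$ shown to be a stable, monotonic strict order having the subterm property, the claim follows from the theorem of Dershowitz for finite signatures.
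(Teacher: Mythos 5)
The paper offers no proof of this statement---it is imported wholesale from \cite{MZ97} with the \textsf{qed} attached to the statement itself---so there is no in-paper argument to measure yours against directly. The closest the paper comes is an indirect route: by \prettyref{thm:WPO>=GKBO}, strict simplicity of $\A$ gives ${\GT_\GKBO} = {\GT_\WPO}$, and \prettyref{thm:WPO simple} (proved in the appendix) then yields that this order is a simplification order, since a strictly simple algebra is in particular weakly simple. Your direct proof is essentially sound and takes the classical route: strict simplicity buys the subterm property through case \prettyref{item:GKBO-gt}; stability follows from stability of $\AGT$ and $\AGS$ plus induction; weak monotonicity of $\A$ lifts $s \AGS t$ into the context, after which either case \prettyref{item:GKBO-gt} fires or the equal prefix in the permuted argument lists (handled by reflexivity of the associated weak order) lets the lexicographic subcase of \prettyref{item:GKBO-ge} close monotonicity; transitivity and irreflexivity go by simultaneous induction, with the compatibility ${\AGS}\circ{\AGT}\circ{\AGS}\subseteq{\AGT}$ absorbing the mixed strict/weak cases; and well-foundedness is correctly delegated to Kruskal's tree theorem via Dershowitz's result for finite signatures, which is exactly what ``simplification order and hence a reduction order'' asks for.

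Two small points deserve attention, neither fatal. First, your monotonicity argument (and the implication from strict to weak simplicity) silently uses ${\AGT}\subseteq{\AGS}$, \ie\ ${>}\subseteq{\gs}$ on the carrier; the paper's definition of a well-founded algebra only postulates the compatibility ${\gs}\circ{>}\circ{\gs}\subseteq{>}$, not this inclusion. The paper's own appendix proofs make the same tacit assumption (\eg\ deriving $s_i \AGS s_i'$ from $s_i \GS_\WPO s_i'$ in the weak-monotonicity lemma), so this is a shared convention rather than a gap in your argument, but it is worth stating explicitly. Second, in the transitivity step the parenthetical ``so all three terms are function applications'' is not quite right: $u$ may be a variable when $t \GT_\GKBO u$ holds by case \prettyref{item:GKBO-gt} (only $s$ and $t$ are forced to be non-variables, as each occurs on the left of a comparison). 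Your blanket clause for ``at least one step uses case \prettyref{item:GKBO-gt}'' already covers that situation via compatibility, so nothing breaks, but the claim as phrased is false.
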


\REV{%
	One of the most important advantage of GKBO is
	that it admits weakly monotone interpretations such as $\max$.
	\begin{example}
		Termination of $\RR_\m{fact}$ of \prettyref{ex:fact} can be shown
		by GKBO induced by an algebra $\A$ on $\Nat$ with interpretation \st
		\begin{align*}
		\m{fact}_\A(x) &= x + 2&
		\m{0}_\A &= 0\\
		\s_\A(x) &= x + 1&
		x \ttimes_\A y &= \max\{x,y\} + 1
		\end{align*}
		and a precedence \st $\m{fact} \PGT {\ttimes}$.
		The first rule $\m{fact}(\m{0}) \to \s(\m{0})$ is oriented by case \prettyref{item:GKBO-gt}.
		The second rule $\m{fact}(\s(x)) \to \s(x) \ttimes \m{fact}(x)$ is oriented by
		case \prettyref{item:GKBO-prec},
		since $x + 3 \ge \max\{x + 1, x + 2 \} + 1$.
	\end{example}

	Note that the strict simplicity condition is crucial for GKBO
	to be well-founded.
	If we modify the interpretation of the above example
	by $x \ttimes_\A y = \max\{x,y\}$,
	then GKBO will admit the following infinite sequence:
}{}%
	\[
\REV{%
		\m{fact}(\m{0}) \GT_\GKBO
		\m{0} \ttimes \m{fact}(\m{0}) \GT_\GKBO
		\m{0} \ttimes (\m{0} \ttimes \m{fact}(\m{0})) \GT_\GKBO \dots
}{}%
	\]
\subsection{The Dependency Pair Framework and Reduction Pairs}

The
\REV{%
	\emph{dependency pair (DP) method} \cite{AG00}
}{%
	\emph{dependency pair (DP) framework} \cite{AG00,HM05,GTS04,GTSF06}
}%
significantly enhances the classical method of reduction orders
by analyzing dependencies between rewrite rules.
We briefly recall the essential notions for
\REV{%
	its successor, the \emph{DP framework} \cite{HM05,GTS04,GTSF06}.

}{%
	the DP framework.
}%
Let $\RR$ be a TRS over a signature $\Sig$.
The \emph{root symbol} of a term $s = f(\Seq{s_n})$ is $f$ and denoted by
$\Root(s)$.
The set of \emph{defined symbols} \wrt $\RR$ is defined as
$\SigD \DefEq \{ \Root(l) \mid l \to r \in \RR \}$.
For each $f \in \SigD$, the signature $\Sig$ is extended by
a fresh \emph{marked symbol} $f^\sharp$ \REV{having the same arity}{whose arity is the same} as $f$.
For $s = f(\Seq{s_n})$ with $f \in \SigD$, the term
$f^\sharp(\Seq{s_n})$ is denoted by $s^\sharp$.
The set of \emph{dependency pairs} for $\RR$ is defined as
\(
	\DP(\RR) \DefEq 
	\{
		l^\sharp \to t^\sharp \mid l \to r \in \RR,
		t\text{ is a subterm of }r, \Root(t) \in \SigD
	\}
\).
A \emph{DP problem} is a pair $\Tp{\PP,\RR}$
of a TRS $\RR$ and a set $\PP$ of dependency pairs for $\RR$.
A DP problem $\Tp{\PP,\RR}$ is \emph{finite} iff $\to_\PP\cdot\to_\RR^*$ is well-founded,
where $\PP$ is viewed as a TRS.
The main result of the DP framework is the following:

\begin{theorem}\cite{AG00,GTSF06}
	A TRS $\RR$ is terminating 
	\REV{iff}{if} the DP problem $\Tp{\DP(\RR),\RR}$ is finite.
	\qed
\end{theorem}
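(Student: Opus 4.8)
The plan is to prove the contrapositive: assuming $\RR$ is non-terminating, I will construct an infinite sequence witnessing that $\to_{\DP(\RR)} \cdot \to_\RR^*$ is not well-founded, so that $\Tp{\DP(\RR),\RR}$ is not finite. The engine of the argument is the notion of a \emph{minimal non-terminating term}: a term $t$ that admits an infinite $\to_\RR$-sequence while every proper subterm of $t$ is $\to_\RR$-terminating. First I would observe that such a term exists: starting from any non-terminating term, descend into a non-terminating proper subterm as long as one exists; since a term has only finitely many subterms and the proper-subterm relation is well-founded, this descent terminates at a minimal non-terminating term.

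Next I would analyze the shape of a minimal non-terminating term $t$. Because all proper subterms of $t$ terminate, only finitely many rewrite steps can occur strictly below the root before a step at the root becomes necessary, since otherwise an infinite sequence would live entirely inside the terminating arguments. Hence any infinite sequence from $t$ factors as $t \to_\RR^* l\sigma \to_\RR r\sigma$, where the displayed step is a root step using a rule $l \to r \in \RR$, the intermediate steps are below the root, and $r\sigma$ is again non-terminating. In particular $\Root(t) = \Root(l) \in \SigD$, so the marked terms $t^\sharp$ and $l^\sharp$ are defined.

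The third and most delicate step is to extract the next dependency pair. Since $r\sigma$ is non-terminating, it has a minimal non-terminating subterm $u$. I would argue that $u$ cannot sit inside an instance $x\sigma$ of a variable $x$ of $r$: such an instance is a subterm of $l\sigma$ and hence terminating. Therefore $u = s\sigma$ for a non-variable subterm $s$ of $r$, and since $u$ is non-terminating its root $\Root(s)$ must be defined. By definition this yields a dependency pair $l^\sharp \to s^\sharp \in \DP(\RR)$. Lifting to marked terms, the below-root steps give $t^\sharp \to_\RR^* l^\sharp\sigma$ (marked symbols never occur in $\RR$, so rewriting below the marked root is an ordinary $\to_\RR$-sequence), and applying the dependency pair at the root gives $l^\sharp\sigma \to_{\DP(\RR)} s^\sharp\sigma = u^\sharp$, where marking the root commutes with $\sigma$.

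Finally, since $u$ is itself minimal non-terminating, I would iterate the construction from $u$ in place of $t$, obtaining an infinite sequence $t^\sharp \to_\RR^* \cdot \to_{\DP(\RR)} u^\sharp \to_\RR^* \cdot \to_{\DP(\RR)} \cdots$. This shows that $\to_\RR^* \cdot \to_{\DP(\RR)}$, equivalently $\to_{\DP(\RR)} \cdot \to_\RR^*$, is not well-founded, contradicting finiteness of $\Tp{\DP(\RR),\RR}$. I expect the main obstacle to be the bookkeeping in the third step: pinning down that the minimal non-terminating subterm of $r\sigma$ lands at a non-variable position of $r$ and has a defined root symbol, which is precisely what turns it into an honest dependency pair rather than an arbitrary subterm.
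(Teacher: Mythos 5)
Your argument for the direction ``$\Tp{\DP(\RR),\RR}$ finite $\Rightarrow$ $\RR$ terminating'' is correct and complete: the descent to a minimal non-terminating term, the factorization of an infinite reduction from it into below-root steps followed by a root step, the observation that the next minimal non-terminating subterm of $r\sigma$ cannot lie inside a substituted variable (because every $x\sigma$ with $x\in\Var(r)\subseteq\Var(l)$ is a proper subterm of $l\sigma$, whose arguments are reducts of the terminating arguments of $t$), and the lifting to marked terms are exactly the classical Arts--Giesl construction. The paper itself gives no proof --- it cites \cite{AG00,GTSF06} --- and those sources prove this direction in precisely the way you describe, so on this half you are fully aligned with the intended argument.

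The gap is that the theorem is stated as a biconditional, and you prove only one implication. The converse --- termination of $\RR$ implies that ${\to_{\DP(\RR)}}\cdot{\to_\RR^*}$ is well-founded --- is not addressed anywhere in your proposal. It is the easier direction but not vacuous: given an infinite chain $v_1 \to_{\DP(\RR)} w_1 \to_\RR^* v_2 \to_{\DP(\RR)} w_2 \to_\RR^* \cdots$, each dependency-pair step $l^\sharp\sigma \to_{\DP(\RR)} s^\sharp\sigma$ unmarks to $l\sigma \to_\RR r\sigma \unrhd s\sigma$, and the intermediate $\to_\RR^*$-steps take place below the marked root and hence also unmark to ordinary $\RR$-steps. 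This yields an infinite $({\to_\RR}\cup{\rhd})$-sequence containing infinitely many $\to_\RR$-steps, which contradicts termination of $\RR$ because ${\rhd}\cdot{\to_\RR}\subseteq{\to_\RR}\cdot{\rhd}$ and $\rhd$ is well-founded, so the $\rhd$-steps can be commuted past the $\to_\RR$-steps to produce an infinite $\to_\RR$-sequence. You should either supply this argument or note explicitly that you are only proving the soundness direction, which is the one actually used by the paper.
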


Finiteness of a DP problem is proved by \emph{DP processors}:
A sound \emph{DP processor} 
\REV{gets a DP problem as input}{inputs a DP problem}
and outputs a set of (hopefully simpler) DP problems
\st the input problem is finite if all the output problems are finite.
Among other DP processors for transforming or simplifying DP problems
(\cf \cite{GTSF06} for a summary),
we recall the most important one:
A \emph{reduction pair} $\Tp{\GS,\GT}$ is a pair of relations on terms \st
$\GS$ is a monotonic and stable quasi-order, and
$\GT$ is a well-founded stable order which is \emph{compatible} with $\GS$,
\REV{\ie,}{\ie}
${\GS} \circ {\GT} \circ {\GS} \subseteq {\GT}$.

\begin{theorem}\cite{AG00,GTS04,HM05,GTSF06}\label{thm:reduction pair}
	Let $\Tp{\GS,\GT}$ be a reduction pair \st
	$\PP\cup\RR \subseteq {\GS}$ and $\PP' \subseteq {\GT}$.
	Then the DP processor that maps 
	$\Tp{\PP,\RR}$ to $\{ \Tp{\PP\setminus\PP',\RR} \}$
	is sound.
	\qed
\end{theorem}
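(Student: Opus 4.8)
The plan is to establish soundness via its contrapositive. Unfolding the definitions, soundness of this processor amounts to the implication: if $\Tp{\PP\setminus\PP',\RR}$ is finite then $\Tp{\PP,\RR}$ is finite. I would instead assume $\Tp{\PP,\RR}$ is not finite and produce an infinite $(\to_{\PP\setminus\PP'} \cdot \to_\RR^*)$-chain, which exactly witnesses non-finiteness of $\Tp{\PP\setminus\PP',\RR}$. By definition of finiteness, non-finiteness of $\Tp{\PP,\RR}$ yields an infinite chain
\[
	s_0 \mathrel{(\to_\PP \cdot \to_\RR^*)} s_1 \mathrel{(\to_\PP \cdot \to_\RR^*)} s_2 \mathrel{(\to_\PP \cdot \to_\RR^*)} \cdots,
\]
so that for every $i$ there is a term $t_i$ with $s_i \to_\PP t_i \to_\RR^* s_{i+1}$.

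First I would translate each rewrite step into the relations of the reduction pair. Since $\RR \subseteq {\GS}$ and $\GS$ is a monotonic and stable quasi-order, the rewrite relation $\to_\RR$ --- being the monotonic stable closure of $\RR$ --- is contained in $\GS$; by reflexivity and transitivity of $\GS$ this gives $t_i \GS s_{i+1}$ for every $i$. Each $\PP$-step is applied at the root, so only stability is needed: if $s_i \to_\PP t_i$ uses a rule $l \to r$ then $s_i = l\theta$ and $t_i = r\theta$, whence $l \GS r$ (as $\PP \subseteq {\GS}$) gives $s_i \GS t_i$, and if moreover $l \to r \in \PP'$ then $l \GT r$ (as $\PP' \subseteq {\GT}$) gives $s_i \GT t_i$. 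Combining with $t_i \GS s_{i+1}$, every composite step satisfies $s_i \GS s_{i+1}$, and the \emph{strict} relation $s_i \GT s_{i+1}$ holds whenever the chosen $\PP$-rule lies in $\PP'$; here I invoke compatibility ${\GS} \circ {\GT} \circ {\GS} \subseteq {\GT}$ together with reflexivity of $\GS$ to absorb the trailing $\GS$ into the preceding $\GT$.

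The key step is to show that only finitely many composite steps can use a rule from $\PP'$. Suppose for contradiction that infinitely many do, at indices $i_0 < i_1 < i_2 < \cdots$. Between two consecutive such indices every step is a $\GS$-step, and by transitivity of $\GS$ these collapse into a single $\GS$; prefixing the $\GT$-step at $i_k$ and again applying compatibility, I obtain $s_{i_k} \GT s_{i_{k+1}}$ for every $k$. This produces an infinite descending chain $s_{i_0} \GT s_{i_1} \GT s_{i_2} \GT \cdots$, contradicting well-foundedness of $\GT$. Hence there is an index $N$ beyond which no step uses a rule of $\PP'$, \ie every composite step from $s_N$ on is a $(\to_{\PP\setminus\PP'} \cdot \to_\RR^*)$-step. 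The resulting infinite chain $s_N \mathrel{(\to_{\PP\setminus\PP'} \cdot \to_\RR^*)} s_{N+1} \mathrel{(\to_{\PP\setminus\PP'} \cdot \to_\RR^*)} \cdots$ shows that $\Tp{\PP\setminus\PP',\RR}$ is not finite, completing the contraposition.

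The main obstacle I anticipate is the bookkeeping around compatibility: the infinite chain interleaves weak ($\GS$) and strict ($\GT$) steps in an arbitrary pattern, and one must argue carefully that the strict steps, once the surrounding weak steps are absorbed via ${\GS}\circ{\GT}\circ{\GS}\subseteq{\GT}$, genuinely assemble into a single infinite $\GT$-chain. The asymmetry in the reduction-pair requirements --- $\GS$ monotonic (needed because $\RR$ rewrites in arbitrary contexts) but $\GT$ only stable (sufficient because $\PP$ rewrites only at the root) --- is precisely what makes the translation in the second paragraph valid, and keeping track of which closure property each kind of step relies on is where the care is required.
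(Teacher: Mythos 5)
Your argument is the standard proof of the reduction pair processor theorem from the cited literature; the paper states this result without proof, so there is no in-paper argument to compare against. The overall structure is correct: contraposition, translating $\to_\RR^*$-steps into $\GS$ via monotonicity, stability, reflexivity and transitivity, translating $\PP$-steps via stability, absorbing weak steps into strict ones by compatibility, and concluding from well-foundedness of $\GT$ that only finitely many steps can use a rule of $\PP'$, so that a tail of the sequence witnesses non-finiteness of $\Tp{\PP\setminus\PP',\RR}$.

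One step deserves scrutiny: you assert that each $\PP$-step is applied at the root, so that stability of $\GT$ suffices to obtain $s_i \GT t_i$. That is the convention in the chain-based definition of finiteness used in the cited sources, but this paper literally defines finiteness as well-foundedness of $\to_\PP\cdot\to_\RR^*$ ``where $\PP$ is viewed as a TRS'', i.e.\ with $\to_\PP$ closed under arbitrary contexts. Under that literal reading, a $\PP'$-step below the root gives $s_i = C[l\theta]$ and $t_i = C[r\theta]$ for a non-empty context $C$, and deriving $s_i \GT t_i$ would require monotonicity of $\GT$, which a reduction pair does not provide (only $\GS$ is monotonic). Your proof therefore establishes the theorem for the standard, root-step notion of finiteness; to match the paper's phrasing verbatim you would either have to restrict attention to root $\PP$-steps explicitly (justifying why that suffices) or note that the definition is intended in the chain-based sense. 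This is best viewed as a mismatch between the paper's informal definition and the cited results rather than a flaw in your argument, but the assumption should be made explicit since, as you yourself observe, the asymmetry between the closure properties of $\GS$ and $\GT$ is exactly where the proof could break.
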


\subsubsection{Weakly Monotone Interpretations}

To define a reduction pair by polynomial
interpretations, an interpretation
$\Apol$ need not be strictly monotone but only weakly \REV{monotone}{so};
in other words, $0$ coefficients are allowed.

\begin{theorem}\cite{AG00}
	If $\Apol$ is weakly monotone, then
	$\Tp{\ge_\Apol,>_\Apol}$ forms a reduction pair.\qed
\end{theorem}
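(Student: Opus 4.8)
The plan is to verify directly that $\Tp{\ge_\Apol,>_\Apol}$ satisfies every clause in the definition of a reduction pair, treating $\Apol$ as a concrete instance of a weakly monotone well-founded $\Sig$-algebra so that the same argument in fact establishes the more general statement for any such algebra. Throughout I would exploit the defining equivalences: $s \ge_\Apol t$ holds exactly when $\widehat\alpha(s) \ge \widehat\alpha(t)$ for all assignments $\alpha$, and $s >_\Apol t$ exactly when $\widehat\alpha(s) > \widehat\alpha(t)$ for all $\alpha$. Most of the required properties then follow by lifting the corresponding property of $\ge$ and $>$ on the carrier pointwise over all assignments.

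First I would record the order-theoretic skeleton. Reflexivity and transitivity of $\ge_\Apol$, and irreflexivity and transitivity of $>_\Apol$, are immediate, since $\ge$ and $>$ are a quasi-order and a well-founded strict order on the carrier and the universal quantifier over $\alpha$ preserves these properties. Compatibility ${\ge_\Apol}\circ{>_\Apol}\circ{\ge_\Apol}\subseteq{>_\Apol}$ follows from the same pointwise lifting of the algebra's compatibility ${\gs}\circ{>}\circ{\gs}\subseteq{>}$: from $s \ge_\Apol u >_\Apol v \ge_\Apol t$ we obtain $\widehat\alpha(s) \ge \widehat\alpha(u) > \widehat\alpha(v) \ge \widehat\alpha(t)$ for every $\alpha$, hence $\widehat\alpha(s) > \widehat\alpha(t)$, i.e. $s >_\Apol t$. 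Well-foundedness of $>_\Apol$ I would prove by contradiction: an infinite chain $s_1 >_\Apol s_2 >_\Apol \cdots$ instantiated at any single assignment (say the constant map $x \mapsto \Wzero$) yields an infinite $>$-descending chain in the carrier, contradicting well-foundedness of $>$.

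The two remaining clauses concern $\ge_\Apol$ as a monotonic, stable quasi-order. For stability I would prove the standard substitution lemma $\widehat\alpha(s\theta) = \widehat\beta(s)$, where $\beta(x) \DefEq \widehat\alpha(x\theta)$, by induction on $s$; then $s \ge_\Apol t$ gives $\widehat\beta(s) \ge \widehat\beta(t)$ for all $\beta$ of this shape, so $\widehat\alpha(s\theta) \ge \widehat\alpha(t\theta)$ for all $\alpha$, i.e. $s\theta \ge_\Apol t\theta$; the identical argument handles $>_\Apol$. For monotonicity I would use weak monotonicity of $\Apol$: assuming $s \ge_\Apol t$, for each $\alpha$ we have $\widehat\alpha(s) \ge \widehat\alpha(t)$, and since $\widehat\alpha(f(\dots,s,\dots)) = f_\Apol(\dots,\widehat\alpha(s),\dots)$, weak monotonicity (using reflexivity of $\ge$ to absorb the equality case) yields $f_\Apol(\dots,\widehat\alpha(s),\dots) \ge f_\Apol(\dots,\widehat\alpha(t),\dots)$, that is, $f(\dots,s,\dots) \ge_\Apol f(\dots,t,\dots)$.

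The work is essentially bookkeeping over assignments; the only genuinely delicate point is monotonicity, which is exactly where the weak monotonicity hypothesis is consumed. Noting that \emph{strict} monotonicity is \emph{not} needed---$>_\Apol$ is never required to be monotonic in a reduction pair---is precisely why relaxing the algebra from strictly to weakly monotone still suffices.
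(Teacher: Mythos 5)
Your verification is correct. The paper does not prove this statement at all---it is quoted from the cited reference \cite{AG00} with an immediate \textsf{qed}---so there is no in-paper argument to compare against; your pointwise lifting over assignments, the substitution lemma $\widehat\alpha(s\theta)=\widehat\beta(s)$ for stability, the instantiation at a single assignment (legitimate because $\Wzero$ lies in the carrier, so an assignment exists) for well-foundedness, and the observation that weak monotonicity is consumed only for $\ge_\Apol$ and never for $>_\Apol$ together constitute the standard and complete proof of this result.
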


Endrullis \etal \cite{EWZ08} extend\REV{}{s} linear polynomial interpretations
to \emph{matrix interpretations}.

\begin{definition}
	Given a fixed \emph{dimension} $d \in \Nat$,
	the well-founded algebra $\Amat$ consists of
	the carrier set $\Nat^d$ and
	the strict and quasi\REV{-}{ }orders on $\Nat^d$ defined as follows:
	\[
		\left(\begin{matrix}v_1\\\vdots\\v_d\end{matrix}\right)
		\gsopt
		\left(\begin{matrix}u_1\\\vdots\\u_d\end{matrix}\right)
		\DefIff
		v_1 \geopt u_1 \AND v_j \ge u_j \FORALL j \in \SetOf{2,\dots,d}
	\]
	The interpretation in $\Amat$ is induced by
	a function $\VecWeight$ that assigns
	a $d$-dimension vector $\VecWeightOf{f}$ to each $f \in \Sig$, and
	a function $\Cmat$ that assigns
	a $d \ttimes d$ matrix $\CmatOf{f,i}$ to each $f \in \Sig_n$ and
	$i \in \SetOf{1,\dots,n}$\REV{. It is}{, and} defined as follows:
	\[
		f_\Amat(\Seq{{\Vec{x}}_n}) =
		\VecWeightOf{f} + \sum_{i=1}^n \CmatOf{f,i}\cdot\Vec{x}_i
	\]
	The $i$-th row and $j$-th column element of a matrix $M$
	is denoted by $M^{i,j}$.
\end{definition}

\begin{theorem}\cite{EWZ08}
	For a matrix interpretation $\Amat$,
	$\Tp{\gs_\Amat,>_\Amat}$ forms a reduction pair.\qed
\end{theorem}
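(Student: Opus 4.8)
The plan is to verify the three defining conditions of a reduction pair (\prettyref{thm:reduction pair}): that $\gs_\Amat$ is a monotonic and stable quasi-order, that $>_\Amat$ is a well-founded and stable strict order, and that $>_\Amat$ is compatible with $\gs_\Amat$. Each reduces to a property of the algebra $\Amat$ together with the generic lifting of carrier-level relations to terms. First I would record the purely order-theoretic facts about $\Amat$. The componentwise order $\gs$ on $\Nat^d$ is reflexive and transitive, and $>$ is irreflexive and transitive; moreover they are compatible, since $\Vec a \gs \Vec b > \Vec c \gs \Vec d$ forces $a_1 \ge b_1 > c_1 \ge d_1$ on the first component and $a_j \ge b_j \ge c_j \ge d_j$ on the rest, hence $\Vec a > \Vec d$. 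Well-foundedness of $>$ is immediate because $\Vec v > \Vec u$ requires $v_1 > u_1$, so any $>$-descending chain yields a strictly decreasing sequence of naturals in the first component. Thus $\Amat$ is a genuine well-founded $\Sig$-algebra.

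The substantive, matrix-specific step is monotonicity of $\gs_\Amat$. I would show that each interpretation $f_\Amat(\Seq{{\Vec{x}}_n}) = \VecWeightOf f + \sum_i \CmatOf{f,i}\cdot\Vec x_i$ is monotone w.r.t. the componentwise quasi-order in each argument: if $\Vec x_i \gs \Vec y_i$, then, because every $\CmatOf{f,i}$ has entries in $\Nat$, multiplication by a non-negative matrix preserves componentwise $\ge$, so $\CmatOf{f,i}\cdot\Vec x_i \gs \CmatOf{f,i}\cdot\Vec y_i$; adding the common remaining summands together with $\VecWeightOf f$ preserves $\gs$. Lifting to terms gives monotonicity of $\gs_\Amat$: if $s \gs_\Amat t$, then for every assignment $\alpha$ we have $\widehat\alpha(s) \gs \widehat\alpha(t)$, whence $\widehat\alpha(f(\dots,s,\dots)) = f_\Amat(\dots,\widehat\alpha(s),\dots) \gs f_\Amat(\dots,\widehat\alpha(t),\dots) = \widehat\alpha(f(\dots,t,\dots))$, so $f(\dots,s,\dots) \gs_\Amat f(\dots,t,\dots)$. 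Note that $>_\Amat$ need not, and in general will not, be monotone, but a reduction pair only requires monotonicity of the weak component, which is exactly what makes weakly monotone interpretations usable here.

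The remaining conditions are generic. Stability of both $\gs_\Amat$ and $>_\Amat$ follows from the substitution lemma $\widehat\alpha(s\theta) = \widehat{\alpha'}(s)$ with $\alpha'(x) = \widehat\alpha(x\theta)$: since the defining condition quantifies over all assignments, instantiating it at $\alpha'$ transports $s \mathrel{R} t$ to $s\theta \mathrel{R} t\theta$ for $R \in \{\gs_\Amat,>_\Amat\}$. Reflexivity and transitivity of $\gs_\Amat$, and irreflexivity and transitivity of $>_\Amat$, hold because the defining pointwise relations do; well-foundedness of $>_\Amat$ follows since an infinite $>_\Amat$-descending chain would, under any single fixed assignment, produce an infinite $>$-descending chain in $\Nat^d$, which is impossible as its first component would strictly decrease forever. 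Finally, compatibility ${\gs_\Amat}\circ{>_\Amat}\circ{\gs_\Amat}\subseteq{>_\Amat}$ follows by applying the carrier-level compatibility established above under every assignment. The only place where the specific shape of $\Amat$ enters is the monotonicity step, so I expect that to be the main point to get right; everything else is the standard algebra-to-term machinery and mirrors the proof behind the weakly monotone polynomial reduction pair cited earlier.
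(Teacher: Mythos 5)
Your proof is correct. Note that the paper itself gives no proof of this statement --- it is imported verbatim from \cite{EWZ08} with a bare \textnormal{qed} --- so there is nothing to diverge from; your argument (carrier-level compatibility and well-foundedness via the first component, weak monotonicity of each $f_\Amat$ from non-negativity of the matrices, and the generic lifting of reflexivity, transitivity, stability and compatibility to $\gs_\Amat$ and $>_\Amat$ over all assignments) is exactly the standard one found in the cited reference, and you correctly identify that only the weak part $\gs_\Amat$ needs to be monotone.
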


\subsubsection{Argument Filtering}

\emph{Argument filtering} \cite{AG00,KNT00} is a typical technique to design 
a reduction pair from a reduction order:
An \emph{argument filter} $\pi$ maps each $f \in \Sig_n$ to
either a position $i \in \{ 1,\dots, n \}$ or 
a list $[\Seq{i_m}]$ of positions \st $1 \le i_1 < \dots < i_m \le n$.
The signature $\Sig^\pi$ consists of
every $f \in \Sig$ \st $\pi(f) = [\Seq{i_m}]$,
and \REV{the }{}arity of $f$ is $m$ in $\Sig^\pi$.
An argument filter $\pi$ induces a mapping
$\pi : \Terms(\Sig,\Vars) \to \Terms(\Sig^\pi,\Vars)$ as follows:
\[
	\pi(s) \DefEq
	\begin{Cases}
		s			&\text{if } s \in \Vars
	\\
		\pi(s_i)	&\text{if } s = f(\Seq{s_n})\text{, }\pi(f) = i
	\\
		f(\pi(s_{i_1}),\dots,\pi(s_{i_m}))
		&\text{if }s = f(\Seq{s_n})\text{, }\pi(f) = [\Seq{i_m}]
	\end{Cases}
\]
For an argument filter $\pi$ and a reduction order $\GT$ on $\Terms(\Sig^\pi,\Vars)$,
the relations $\GS^\pi$ and $\GT^\pi$ on $\Terms(\Sig,\Vars)$ are defined as follows:
$s \GS^\pi t$ iff $\pi(s) \GE \pi(t)$, and 
$s \GT^\pi t$ iff $\pi(s) \GT \pi(t)$.

\begin{theorem}\cite{AG00}
	For a reduction order $\GT$ and an argument filter $\pi$,
	$\Tp{{\GS}^\pi,{\GT}^\pi}$ forms a reduction pair.
	\qed
\end{theorem}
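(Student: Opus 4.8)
The plan is to lift every required property of the reduction pair $\Tp{{\GS}^\pi,{\GT}^\pi}$ from the corresponding property of the underlying reduction order $\GT$ (and its reflexive closure $\GE$) on $\Terms(\Sig^\pi,\Vars)$, transporting each statement along the filtering map $\pi$. Since ${\GS}^\pi$ and ${\GT}^\pi$ are by definition the $\pi$-preimages of $\GE$ and $\GT$, the purely order-theoretic facts come almost for free: reflexivity and transitivity of ${\GS}^\pi$, irreflexivity and transitivity of ${\GT}^\pi$, and compatibility ${\GS}^\pi \circ {\GT}^\pi \circ {\GS}^\pi \subseteq {\GT}^\pi$ all reduce to the inclusions ${\GE} \circ {\GE} \subseteq {\GE}$, ${\GT} \circ {\GT} \subseteq {\GT}$ and ${\GE} \circ {\GT} \circ {\GE} \subseteq {\GT}$, which hold because $\GE$ is the reflexive closure of the strict order $\GT$. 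Well-foundedness of ${\GT}^\pi$ is equally direct: an infinite chain $s_1 \mathrel{{\GT}^\pi} s_2 \mathrel{{\GT}^\pi} \cdots$ would yield an infinite chain $\pi(s_1) \GT \pi(s_2) \GT \cdots$, contradicting well-foundedness of the reduction order $\GT$.

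The technical heart is a commutation lemma relating filtering and substitution: for every term $s$ and substitution $\theta$ we have $\pi(s\theta) = \pi(s)\theta^\pi$, where $\theta^\pi$ is the substitution $x \mapsto \pi(\theta(x))$ (note that $\pi$ fixes variables). I would prove this by structural induction on $s$, with one case per clause in the definition of $\pi$: the variable case is immediate; the collapsing case $\pi(f) = i$ reduces to the induction hypothesis on $s_i$; and the list case $\pi(f) = [\Seq{i_m}]$ distributes $\pi$ over the retained arguments and then applies the induction hypothesis to each. Stability of both relations follows at once: if $\pi(s) \GT \pi(t)$, then stability of $\GT$ gives $\pi(s)\theta^\pi \GT \pi(t)\theta^\pi$, that is $\pi(s\theta) \GT \pi(t\theta)$ by the lemma, so $s\theta \mathrel{{\GT}^\pi} t\theta$; the same argument with $\GE$ in place of $\GT$ handles ${\GS}^\pi$.

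The step I expect to be the main obstacle is monotonicity of ${\GS}^\pi$, because argument filtering can erase exactly the position at which the two compared terms differ, so one cannot simply appeal to monotonicity of $\GT$. Assuming $s \mathrel{{\GS}^\pi} t$, i.e.\ $\pi(s) \GE \pi(t)$, I would establish $f(\dots,s,\dots) \mathrel{{\GS}^\pi} f(\dots,t,\dots)$ for the context with hole at position $j$ by case analysis on $\pi(f)$. If $\pi(f) = i$ with $i = j$, the two filtered terms are precisely $\pi(s)$ and $\pi(t)$, so $\GE$ carries over; if $\pi(f) = i$ with $i \neq j$, or if $\pi(f) = [\Seq{i_m}]$ with $j \notin \{i_1,\dots,i_m\}$, the position $j$ is deleted and the two filtered terms are syntactically equal, so ${\GS}^\pi$ holds by reflexivity of $\GE$. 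The remaining case $\pi(f) = [\Seq{i_m}]$ with $j = i_k$ yields two filtered terms differing only in their $k$-th argument, namely $\pi(s)$ versus $\pi(t)$, and here I would invoke monotonicity of $\GE$ (the reflexive closure of the monotonic order $\GT$, hence itself monotonic) to conclude. It is worth emphasising that ${\GT}^\pi$ need \emph{not} be monotonic---the erasing and collapsing cases break it---which is exactly why the weaker notion of a reduction pair, rather than a reduction order, is the correct target of the theorem.
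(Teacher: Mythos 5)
Your proof is correct and follows exactly the standard argument for this result, which the paper does not prove itself but cites from \cite{AG00}: the order-theoretic and well-foundedness properties transport along $\pi$ as preimages, stability rests on the commutation lemma $\pi(s\theta)=\pi(s)\theta^\pi$, and monotonicity of ${\GS}^\pi$ (but not of ${\GT}^\pi$) is recovered by the case analysis on whether the hole position survives filtering. Nothing is missing.
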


\REV{%
The effect of argument filtering is especially apparent for KBO;
it relaxes the variable condition.
\begin{example}
	By applying an argument filter $\pi$ \st
	$\pi(\ttimes) = 1$ for the constraints in \prettyref{ex:DP},
	we obtain the following constraints:
	\[
		\m{fact}(\m{0}) \GS \s(\m{0})\qquad \m{fact}(\s(x)) \GS \s(x) \qquad
		\m{fact}^\sharp(\s(x)) \GT \m{fact}^\sharp(x)
	\]
	The first constraint can be satisfied by KBO with
	\eg $\WeightOf{\m{fact}} > \WeightOf{\s}$.
	The other constraints are satisfied by any instance of KBO.
\end{example}\unskip
}{}%
\section{WPO as a Reduction Order}\label{sec:WPO order}

In this section, we introduce a reduction order called
the \emph{weighted path order (WPO)} that further generalizes GKBO
by weakening the simplicity condition on algebras.
After showing some properties for the order,
we then introduce several instances of WPO by fixing algebras.
We investigate relationships between these instances of WPO and
existing reduction orders and show the potential of WPO.

We first introduce the definition of WPO.
In order to admit algebras that are not strictly but only weakly simple,
we employ the recursive checks
\REV{which ensure that LPO is}{that ensures LPO to be} a simplification order.

\def\WPOAS{{\WPO(\A,\sigma)}}
\begin{definition}[WPO]\label{def:WPO}
	For a quasi-precedence $\PGS$,
	a well-founded algebra $\A$ and a status $\sigma$,
	the \emph{weighted path order}
	$\GT_\WPOAS$
	is defined as follows:
	$s = f(\Seq{s_n}) \GT_\WPOAS t$ iff
	\begin{enumerate}
	\item
		$s \AGT t$, or
	\item
		$s \AGS t$ and
		\begin{enumerate}
		\item
			$\ForSome{i \in \SetOf{1,\dots,n}}s_i \GE_\WPOAS t$, or
		\item
			$t=g(\Seq{t_m})$,
			$\ForAll{j \in \SetOf{1,\dots,m}}s \GT_\WPOAS t_j$ and either
			\begin{enumerate}
			\item
				$f\PGT g$ or
			\item
				$f\PSIM g$ and 
				$\AppPerm{\sigma(f)}{s}{n} \GT_\WPOAS^\Lex
				 \AppPerm{\sigma(g)}{t}{m}$.
			\end{enumerate}
		\end{enumerate}
	\end{enumerate}
	We abbreviate $\WPOAS$ by $\WPOA$ and $\WPO$ when no confusion arises.
\end{definition}

Case \prettyref{item:WPO-gt} and the 
\REV{precondition of case}{ in} \prettyref{item:WPO-ge} are
the same as $\GKBO$.
Case \prettyref{item:WPO-simp} and the 
\REV{precondition of case}{condition in} \prettyref{item:WPO-args}
are the recursive checks that
correspond to \prettyref{item:LPO-simp} and \prettyref{item:LPO-args} of $\LPO$.
Note that here we may restrict \REV{}{\eg }$i$ in \prettyref{item:WPO-simp}
\REV{and $j$ in \prettyref{item:WPO-args}}{} to positions
\st $f(\dots,x_i,\dots) \ANGT x_i$, since otherwise we have $s \AGT t$,
which is \REV{covered by}{considered in} \prettyref{item:WPO-gt}.
Cases \prettyref{item:WPO-prec} and \prettyref{item:WPO-mono} are common among
$\WPO$, $\GKBO$ and $\LPO$.
\REV{%
In the appendix we prove the following soundness result:
\begin{theorem}\label{thm:WPO simple}
	If $\A$ is weakly monotone and weakly simple, then
	$\GT_\WPO$ is a simplification order and hence a reduction order.
\end{theorem}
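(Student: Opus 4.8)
The plan is to verify the four defining properties of a simplification order --- that $\GT_\WPO$ is a strict order, monotonic, stable, and satisfies the subterm property --- and then to invoke the theorem of Dershowitz for finite signatures to conclude that it is a reduction order. Every argument proceeds by induction following the recursive structure of \prettyref{def:WPO}; the relation is well defined because each recursive comparison there is on structurally smaller terms. I rely on two preliminary facts. First, by inspecting \prettyref{def:WPO}, every $s \GT_\WPO t$ satisfies $s \AGS t$ (case~2 provides this by its precondition, and case~1 does so since the strict order of $\A$ refines its quasi-order, ${\AGT} \subseteq {\AGS}$); this is the bridge that lets the algebra comparison govern the path-order recursion. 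Second, the lexicographic extension $\GT_\WPO^\Lex$ inherits transitivity, irreflexivity and stability from its component relation, which I record as a routine lemma.

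I would first dispatch the \emph{subterm property}. For $s = f(\Seq{s_n})$ and an argument $s_i$, weak simplicity of $\A$ gives $s \AGS s_i$, so the precondition of case~2 is met, and case~(2a) applies with the same index $i$ via $s_i \GE_\WPO s_i$. This step uses only weak simplicity and assumes nothing else about $\GT_\WPO$, so it is self-contained.

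Next comes \emph{transitivity}, which I expect to be the main obstacle. I would prove it together with the compatibility laws ${\GE_\WPO} \circ {\GT_\WPO} \subseteq {\GT_\WPO}$ and ${\GT_\WPO} \circ {\GE_\WPO} \subseteq {\GT_\WPO}$ by a simultaneous well-founded induction on the total size of the three terms involved. Given $s \GT_\WPO t$ and $t \GT_\WPO u$, one performs a case analysis on how each comparison was derived. The delicate part is the interaction of the two comparison mechanisms: when one step decreases the algebra value strictly (case~1) and the other only weakly (case~2), the compatibility ${\gs} \circ {>} \circ {\gs} \subseteq {>}$ of $\A$ together with the bridging observation $\GT_\WPO \subseteq \AGS$ yields $s \AGT u$, hence $s \GT_\WPO u$ by case~1. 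When both steps keep the algebra value equal, $s \AGS u$ holds and one must reconstruct a case-2 derivation: the subterm subcase~(2a) is pushed to strictly smaller terms by the induction hypothesis, the precedence subcase~(2b-i) closes by transitivity of $\PGT$ and $\PSIM$, and the status subcase~(2b-ii) closes by the lexicographic lemma and the induction hypothesis on the permuted argument lists. Checking that the side conditions ``$\ForAll{j} s \GT_\WPO t_j$'' propagate correctly across the composition is where the bookkeeping is heaviest.

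With transitivity and the subterm property in hand, \emph{irreflexivity} follows by induction on the size of $s$: case~1 contradicts irreflexivity of $\AGT$; in case~(2a) an argument $s_i$ with $s_i \GE_\WPO s$ combined with $s \GT_\WPO s_i$ from the subterm property gives $s_i \GT_\WPO s_i$ by transitivity, contradicting the induction hypothesis; and case~(2b) contradicts irreflexivity of $\PGT$ or, via the lexicographic lemma, of $\GT_\WPO^\Lex$. Thus $\GT_\WPO$ is a strict order. \emph{Stability} is a direct induction on \prettyref{def:WPO}, using stability of $\AGT$ and $\AGS$ (which holds as they quantify over all assignments) and the fact that precedence and status are purely structural. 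Finally, for \emph{monotonicity}, given $s \GT_\WPO t$ set $u = f(\dots,s,\dots)$ and $v = f(\dots,t,\dots)$; from $s \AGS t$ and weak monotonicity of $\A$ we obtain $u \AGS v$, meeting the precondition of case~2, and I close by case~(2b) with equal root symbols, using the subterm property for the arguments $j \ne i$, transitivity with the subterm property to get $u \GT_\WPO t$ at position $i$, and the trivial lexicographic step in which the two argument lists agree except at position $i$, where $s \GT_\WPO t$. Having shown $\GT_\WPO$ to be a strict, monotonic, stable order with the subterm property, Dershowitz's theorem for finite signatures completes the proof.
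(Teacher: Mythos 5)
Your proof is correct, but it reaches well-foundedness by a genuinely different route than the paper. The paper proves well-foundedness of $\GT_\WPO$ \emph{directly}: its key lemma (\prettyref{lem:WPO compose}) shows that if all status-selected arguments of $s$ are strongly normalizing then so is every $t$ with $s \GT_\WPO t$, by induction on the lexicographic combination of $\AGT$, $\PGT$, $\GT_\WPO^\Lex$ and term size; irreflexivity is then a corollary of well-foundedness, and Kruskal's tree theorem is never actually needed. You instead prove irreflexivity by a direct structural induction (via transitivity and the subterm property) and delegate well-foundedness entirely to Dershowitz's theorem for finite signatures. For the statement at hand your route is sufficient and arguably more economical, since it avoids the most delicate lemma; what the paper's direct argument buys is reusability in \prettyref{sec:pair}, where weak simplicity holds only with respect to a partial status, the subterm property fails, and Kruskal is therefore unavailable --- there the compose-style lemma is the only proof of well-foundedness on offer. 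The remaining ingredients (transitivity/compatibility by simultaneous induction on the sizes of the three terms, stability, monotonicity via case \prettyref{item:WPO-mono} with the subterm property supplying the side conditions) match the paper's arguments. One small remark applying equally to both proofs: the bridging fact ${\AGT} \subseteq {\AGS}$, which you invoke explicitly and the paper uses tacitly in \prettyref{lem:WPO compatible}, is not literally part of the definition of a well-founded algebra (only ${\gs}\circ{>}\circ{\gs} \subseteq {>}$ is required), though it holds for every instance considered.
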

}{}%

Note that \prettyref{thm:WPO simple} gives an alternative proof for
the following result of Zantema \cite{Z01}:
\begin{theorem}\cite{Z01}
	If a TRS $\RR$ is oriented by $\AGT$ for
	a weakly monotone and weakly simple algebra $\A$,
	then $\RR$ is simply terminating,
	\REV{\ie,}{\ie} its termination is shown by a simplification order.
\end{theorem}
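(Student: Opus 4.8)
The plan is to obtain this as an immediate corollary of \prettyref{thm:WPO simple}, exactly as the preceding remark indicates. First I would instantiate $\WPOA$ with any quasi-precedence and status---say the identity quasi-precedence, so that $\PGT$ is empty and hence trivially well-founded, together with the identity status $\sigma$. Since $\A$ is weakly monotone and weakly simple by hypothesis, \prettyref{thm:WPO simple} tells us that the resulting order $\GT_\WPOA$ is a simplification order, and this conclusion does not depend on the particular precedence or status chosen.

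The key step is to show that $\GT_\WPOA$ already orients $\RR$. Consider any rule $l \to r \in \RR$. By the definition of a rewrite rule, $l \notin \Vars$, so $l$ has the form $f(\Seq{l_n})$ demanded by \prettyref{def:WPO}. Because $\RR$ is oriented by $\AGT$, we have $l \AGT r$, which is precisely case \prettyref{item:WPO-gt} of that definition; hence $l \GT_\WPOA r$. Since this holds for every rule, $\GT_\WPOA$ orients $\RR$.

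Putting the two observations together, $\RR$ is oriented by the simplification order $\GT_\WPOA$, and therefore $\RR$ is simply terminating by definition. I do not expect any genuine obstacle here: the entire difficulty has been absorbed into \prettyref{thm:WPO simple}, and the only remaining point is the one-line implication ``$l \AGT r$ yields $l \GT_\WPOA r$,'' which follows directly from case \prettyref{item:WPO-gt} once we observe that every left-hand side of a rule is a non-variable term. The essential content of the result is thus that weak simplicity---rather than the strict simplicity required by GKBO---already suffices to embed the algebra order into a simplification order, which is precisely the improvement that \prettyref{thm:WPO simple} secures through the recursive LPO-style checks.
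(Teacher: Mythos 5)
Your proof is correct and follows exactly the paper's own one-line argument: ${\AGT} \subseteq {\GT_\WPO}$ via case \prettyref{item:WPO-gt}, so $\RR$ is oriented by the simplification order $\GT_\WPO$ guaranteed by \prettyref{thm:WPO simple}. Your additional observations (that left-hand sides are non-variable terms and that the choice of precedence and status is immaterial) are accurate elaborations of the same route.
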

\begin{proof}
	Since ${\AGT} \subseteq {\GT_\WPO}$, $\RR$ is oriented by
	the simplification order $\GT_\WPO$.
	\qedhere
\end{proof}

Moreover,
we can verify that $\WPO$ is a \REV{}{further }generalization of $\GKBO$.

\begin{theorem}\label{thm:WPO>=GKBO}
	If $\A$ is strictly simple, then
	${\GT_\GKBO} = {\GT_\WPO}$.
\end{theorem}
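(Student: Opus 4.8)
The plan is to prove the two inclusions ${\GT_\GKBO} \subseteq {\GT_\WPO}$ and ${\GT_\WPO} \subseteq {\GT_\GKBO}$ simultaneously by induction on $|s| + |t|$, comparing the two definitions case by case. The definitions differ only in case (2): $\WPO$ adds the subterm check ``$s_i \GE_\WPO t$ for some $i$'' and, in the recursive subcase, the extra precondition ``$s \GT_\WPO t_j$ for all $j$''. The whole argument rests on one consequence of strict simplicity, namely the strict subterm property at the algebra level: if $v$ is a proper subterm of $u$ then $u \AGT v$. This follows by induction from $f_\A(\dots,a,\dots) > a$ and transitivity of $\AGT$. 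Combined with the compatibility instances ${\AGS}\circ{\AGT} \subseteq {\AGT}$ and ${\AGT}\circ{\AGS} \subseteq {\AGT}$ (both special cases of ${\gs}\circ{>}\circ{\gs}\subseteq{>}$), this lets me collapse the extra $\WPO$ conditions.

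For ${\GT_\WPO} \subseteq {\GT_\GKBO}$, suppose $s = f(\Seq{s_n}) \GT_\WPO t$. If $s \AGT t$, then $\GKBO$ applies by its case (1). Otherwise $s \AGS t$ and we are in $\WPO$ case (2). In the subterm subcase there is an $i$ with $s_i \GE_\WPO t$. By strict simplicity $s \AGT s_i$; if $s_i = t$ this already gives $s \AGT t$, and if $s_i \GT_\WPO t$ the induction hypothesis yields $s_i \GT_\GKBO t$, so $s_i \AGT t$ or $s_i \AGS t$, and in either case compatibility with $s \AGT s_i$ gives $s \AGT t$. Hence $\GKBO$ applies by case (1). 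In the recursive subcase we have $t = g(\Seq{t_m})$ together with the precedence and lexicographic conditions; discarding the unused precondition $s \GT_\WPO t_j$, the remaining conditions are exactly those of $\GKBO$ case (2), and the lexicographic comparison transfers from $\GT_\WPO$ to $\GT_\GKBO$ by the induction hypothesis on the strictly smaller arguments.

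For the converse ${\GT_\GKBO} \subseteq {\GT_\WPO}$, suppose $s \GT_\GKBO t$. Case (1) transfers verbatim. In case (2) we have $s \AGS t = g(\Seq{t_m})$ and the precedence and lexicographic conditions, and we aim at the recursive subcase of $\WPO$. The one condition $\WPO$ demands but $\GKBO$ does not is $s \GT_\WPO t_j$ for every $j$; this is exactly where strict simplicity is indispensable. Since each $t_j$ is a proper subterm of $t$, strict simplicity gives $t \AGT t_j$, and together with $s \AGS t$ and compatibility we obtain $s \AGT t_j$, so $s \GT_\WPO t_j$ by $\WPO$ case (1). The precedence condition carries over directly, and the lexicographic comparison transfers by the induction hypothesis, so $\WPO$ case (2) applies.

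The routine but delicate part is the lexicographic step: the lexicographic extension uses both the strict order and its associated quasi-order, so one must check that $\GE_\WPO$ and $\GE_\GKBO$ coincide on argument pairs as well. Since both are the reflexive closures of the respective strict orders, this follows immediately from the induction hypothesis, and the lexicographic extensions then agree. I expect this bookkeeping, rather than any single case, to be the only real obstacle.
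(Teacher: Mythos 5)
Your proposal is correct and follows essentially the same route as the paper: both arguments rest on the observation that strict simplicity plus compatibility makes the subterm case (2a) subsumed by the algebra-comparison case (1) (via $s \AGT s_i \AGS t$) and makes the precondition $s \GT_\WPO t_j$ in case (2b) hold automatically (via $s \AGS t \AGT t_j$), after which the two definitions coincide. The paper merely states this as a direct syntactic identification of the definitions, whereas you spell it out as an explicit induction on $|s|+|t|$ with the lexicographic bookkeeping made precise.
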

\begin{proof}
	The condition $s_i \GE_\WPO t$ of case \prettyref{item:WPO-simp}
	\REV{may be dropped}{is ignorable},
	since in that case we have $s \AGT s_i \AGS t$ by the assumption.
	Analogously,
	the condition $s \GT_\WPO t_j$ of case \prettyref{item:WPO-args} always holds,
	since $s \AGS t \AGT t_j$.
	Hence, case \prettyref{item:WPO-simp} and the condition in
	\prettyref{item:WPO-args} can be ignored, and
	the definition of $\WPO$ becomes equivalent to that of $\GKBO$.\qedhere
\end{proof}

\REV{%
	The relaxation of strict simplicity to weak simplicity is an important step.
	While GKBO does not even subsume the standard KBO,
	WPO subsumes not only KBO but also many other reduction orders.
}{}%
In the remainder of this section,
we investigate several instances of $\WPO$.

\subsection{WPO$(\Asum)$}

The first instance $\WPOsum$ is induced by an algebra $\Asum$, which
interprets function symbols as the summation operator $\sum$.
We obtain KBO as a restricted case of $\WPOsum$.
We design the algebra $\Asum$ from a weight function $\Tp{\Weight,\Wzero}$,
so that ${\GT_\WPOsum} = {\GT_\KBO}$ 
when $\Wzero > 0$ and \REV{}{the }admissibility is satisfied.

\begin{definition}
	The $\Sig$-algebra $\Asum$ induced by a weight function $\Tp{\Weight,\Wzero}$
	consists of the carrier set $\{ a \in \Nat \mid a \ge \Wzero \}$ and
	the interpretation which is defined as follows:
	\[
		f_\Asum(\Seq{a_n}) = \WeightOf{f} + \sum_{i=1}^{n}a_i
	\]
	If $\Wzero > 0$ is satisfied, we also write $\AsumP$ for $\Asum$.
\end{definition}

Obviously, $\Asum$ is strictly (and hence weakly) monotone and weakly simple.
We obtain the following as a corollary of \prettyref{thm:WPO simple}:

\begin{corollary}\label{cor:WPOsum}
	$\GT_\WPOsum$ is a reduction order.\qed
\end{corollary}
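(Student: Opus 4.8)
The plan is to obtain the statement directly from \prettyref{thm:WPO simple}, whose only hypotheses on the underlying algebra are weak monotonicity and weak simplicity. Thus it suffices to check that $\Asum$ satisfies these two conditions, after first confirming that $\Asum$ is a legitimate well-founded $\Sig$-algebra. Since the work reduces to verifying the premises of an already-established theorem, the proof is essentially a bookkeeping argument.

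First I would verify that $\Asum$ is well defined. Its carrier $\{ a \in \Nat \mid a \ge \Wzero \}$ carries the standard orders $\ge$ and $>$ on the naturals, which are well-founded and compatible (if $a \ge b > c \ge d$ then $a > d$), so the only genuine point to check is that each interpretation lands inside the carrier. For a constant $c$ we have $c_\Asum = \WeightOf{c} \ge \Wzero$ by the defining condition on the weight function, and for $f \in \Sig_n$ with $n \ge 1$ we have $\WeightOf{f} + \sum_{i=1}^n a_i \ge a_1 \ge \Wzero$ whenever each $a_i \ge \Wzero$; hence $f_\Asum$ indeed maps into the carrier.

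Next I would check the two algebraic conditions. Since $f_\Asum(\Seq{a_n}) = \WeightOf{f} + \sum_{i=1}^n a_i$ is obtained by summing the arguments, increasing any single argument by a positive amount strictly increases the value, so $\Asum$ is even strictly monotone and a fortiori weakly monotone. For weak simplicity, $\WeightOf{f} + \sum_{i=1}^n a_i \ge a_j$ holds for every $j$ because all weights and arguments are non-negative. Both premises therefore hold, and \prettyref{thm:WPO simple} immediately yields that $\GT_\WPOsum$ is a simplification order and hence a reduction order.

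There is no real obstacle here; the only point worth flagging is that $\Asum$ is in general only weakly simple and not strictly simple, since a unary symbol $f$ with $\WeightOf{f} = 0$ gives $f_\Asum(a) = a$ rather than $f_\Asum(a) > a$. Consequently the identification $\GT_\GKBO = \GT_\WPO$ of \prettyref{thm:WPO>=GKBO}, which needs strict simplicity, does not apply, and it is precisely the relaxation to weak simplicity furnished by \prettyref{thm:WPO simple} that makes the corollary go through. This is exactly why $\WPOsum$, unlike $\GKBO$, is able to cover KBO.
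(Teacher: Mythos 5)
Your proposal is correct and matches the paper's argument: the paper likewise notes that $\Asum$ is strictly (hence weakly) monotone and weakly simple and then invokes \prettyref{thm:WPO simple}. Your additional checks (closure of the carrier under the interpretations, and the remark that $\Asum$ is only weakly simple so \prettyref{thm:WPO>=GKBO} does not apply) are accurate elaborations of what the paper leaves implicit.
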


Now let us prove that $\GT_\KBO$ is obtained as a special case of $\GT_\WPOsumP$.
The following lemma verifies that $\Asum$ indeed works as the weight of KBO.

\begin{lemma}\label{lem:weight}
	$s \geopt_\Asum t$ iff
	$|s|_x \ge |t|_x$ for all $x \in \Vars$ and
	$\WeightOf{s} \geopt \WeightOf{t}$.
\end{lemma}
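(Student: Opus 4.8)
Looking at this lemma, I need to prove an equivalence between the algebraic ordering $s \geopt_\Asum t$ and the KBO-style condition combining variable counts and weights.

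Let me understand the setup. The algebra $\Asum$ interprets $f_\Asum(\Seq{a_n}) = \WeightOf{f} + \sum_{i=1}^n a_i$. The ordering $\geopt_\Asum$ means: for all assignments $\alpha$, $\widehat\alpha(s) \geopt \widehat\alpha(t)$.

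Let me compute what $\widehat\alpha(s)$ is. For a variable $x$, $\widehat\alpha(x) = \alpha(x)$. For $s = f(\Seq{s_n})$, we get $\widehat\alpha(s) = \WeightOf{f} + \sum_i \widehat\alpha(s_i)$. By induction, this gives $\widehat\alpha(s) = \WeightOf{s} + \sum_{x} |s|_x \cdot (\alpha(x) - \Wzero)$ ... wait, let me think more carefully.

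Actually since $\WeightOf{x} = \Wzero$ for variables, and the weight function sums weights, we have $\WeightOf{s} = $ (the interpretation where each variable contributes $\Wzero$). Let me reconsider: $\widehat\alpha(s)$ is linear in the $\alpha(x)$ values. Specifically, $\widehat\alpha(s) = c_s + \sum_x |s|_x \cdot \alpha(x)$ where $c_s$ is the constant part (sum of all function symbol weights). And $\WeightOf{s} = c_s + \sum_x |s|_x \cdot \Wzero$.

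So the key is: $\widehat\alpha(s) - \widehat\alpha(t) = (c_s - c_t) + \sum_x (|s|_x - |t|_x)\alpha(x)$. This is $\geq 0$ for all valid assignments $\alpha$ (where $\alpha(x) \ge \Wzero$) iff the coefficient of each $\alpha(x)$ is nonnegative AND the value at the minimum is nonnegative.

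=== BEGIN PROOF PROPOSAL ===

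The plan is to reduce both sides to an explicit arithmetic condition on the linear function $\widehat\alpha(s) - \widehat\alpha(t)$ in the assignment values $\alpha(x)$, and then compare. First I would establish, by structural induction on $s$, the key formula
\[
	\widehat\alpha(s) = \WeightOf{s} + \sum_{x \in \Var(s)} |s|_x \cdot \bigl(\alpha(x) - \Wzero\bigr),
\]
which follows immediately from the definitions of $f_\Asum$, of $\WeightOf{\cdot}$ (where $\WeightOf{x} = \Wzero$), and of $|s|_x$: both $\widehat\alpha(s)$ and $\WeightOf{s}$ accumulate the same symbol weights $\WeightOf{f}$, and they differ only in that each occurrence of a variable $x$ contributes $\alpha(x)$ in the former and $\Wzero$ in the latter. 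Writing $\delta(\alpha) \DefEq \widehat\alpha(s) - \widehat\alpha(t)$, this gives
\[
	\delta(\alpha) = \bigl(\WeightOf{s} - \WeightOf{t}\bigr)
		+ \sum_{x} \bigl(|s|_x - |t|_x\bigr)\cdot\bigl(\alpha(x) - \Wzero\bigr),
\]
a function that is affine in the shifted assignment values $\alpha(x) - \Wzero \ge 0$.

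Next I would handle the two directions separately, both reducing to this affine form. For the ``if'' direction, assume $|s|_x \ge |t|_x$ for all $x$ and $\WeightOf{s} \geopt \WeightOf{t}$. Since every coefficient $|s|_x - |t|_x$ is nonnegative and each factor $\alpha(x) - \Wzero$ is nonnegative (as $\alpha(x) \ge \Wzero$ on the carrier of $\Asum$), the summation is nonnegative, so $\delta(\alpha) \geopt \WeightOf{s} - \WeightOf{t}$; the strict and nonstrict cases of $\geopt$ then transfer because the added quantity is nonnegative. For the ``only if'' direction, I would use specific assignments to extract the two conditions. Taking $\alpha(x) = \Wzero$ for all $x$ gives $\delta(\alpha) = \WeightOf{s} - \WeightOf{t}$, yielding the weight condition directly. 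For the variable-count condition, I would fix a variable $y$ and take an assignment sending $y$ to a large value $N$ and every other variable to $\Wzero$, giving $\delta(\alpha) = \bigl(\WeightOf{s} - \WeightOf{t}\bigr) + (|s|_y - |t|_y)(N - \Wzero)$; if $|s|_y < |t|_y$, then for $N$ large enough this becomes negative, contradicting $s \geopt_\Asum t$, so $|s|_y \ge |t|_y$.

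The main obstacle, and the point needing the most care, is the uniform treatment of the two modes hidden in the $\geopt$ notation: the superscripted $\geopt$ abbreviates both the strict order $>$ and the nonstrict quasi-order $\ge$, and the lemma asserts the equivalence simultaneously for both readings. I expect the nonstrict case to be routine, but in the strict case the ``only if'' direction requires that $\delta(\alpha) > 0$ for all $\alpha$ forces $\WeightOf{s} > \WeightOf{t}$ together with $|s|_x \ge |t|_x$; here the zero assignment $\alpha(x) = \Wzero$ witnesses $\delta = \WeightOf{s} - \WeightOf{t}$ and hence pins down the strict weight inequality, while the large-value assignments again force the count inequalities, so no subtlety beyond bookkeeping arises. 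I would organize the proof so that both cases are discharged by the single affine representation of $\delta(\alpha)$, invoking it with the two families of assignments above.

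=== END PROOF PROPOSAL ===
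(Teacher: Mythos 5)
Your proof is correct and follows essentially the same route as the paper: the ``if'' direction is immediate from the affine form of $\widehat\alpha$, and the ``only if'' direction instantiates the all-$\Wzero$ assignment to obtain the weight condition and a single-variable-boosting assignment to obtain the variable-count condition. The only difference is that you send the distinguished variable to an arbitrarily large value $N$ where the paper uses the concrete value $\WeightOf{s}+\Wzero$; your variant is, if anything, slightly more robust in the degenerate case $\WeightOf{s}=0$.
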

\begin{proof}
	The ``if'' direction is easy. For the ``only-if'' direction,
	suppose $s \geopt_\Asum t$.
	Define the assignment $\alpha_0$ which maps all variables to $\Wzero$.
	We have $\widehat\alpha_0(s) \geopt \widehat\alpha_0(t)$, that is
	$\WeightOf{s} \geopt \WeightOf{t}$.
	Furthermore, define the assignment $\alpha_x$ which maps
	$x$ to $\WeightOf{s} + \Wzero$ and \REV{other variables}{others} to $\Wzero$.
	We have $\widehat\alpha_x(s) \geopt \widehat\alpha_x(t)$, which implies
	$\WeightOf{s} + |s|_x \cdot \WeightOf{s} \geopt \WeightOf{t} + |t|_x \cdot \WeightOf{s}$.
\REV{%
	Here, $|s|_x < |t|_x$ cannot hold since $\WeightOf{t} \ge 0$.
	We conclude
}{%
	Hence, we get
}%
	$|s|_x \ge |t|_x$.
	\qedhere
\end{proof}

\begin{theorem}\label{thm:WPO>=KBO}
	If $\Wzero > 0$ and $\Weight$ is admissible for $\PGS$, then
	${\GT_\WPOsum} = {\GT_\KBO}$.
\end{theorem}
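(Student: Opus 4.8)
The plan is to prove both inclusions ${\GT_\WPOsum} \subseteq {\GT_\KBO}$ and ${\GT_\KBO} \subseteq {\GT_\WPOsum}$ simultaneously, by induction on $|s| + |t|$. The engine of the proof is \prettyref{lem:weight}, which lets me replace every algebraic comparison by its combinatorial meaning: $s \AGT t$ iff $|s|_x \ge |t|_x$ for all $x$ together with $\WeightOf s > \WeightOf t$, and $s \AGS t$ iff the same variable condition holds together with $\WeightOf s \ge \WeightOf t$. With this translation the outermost layers of the two definitions line up at once: WPO's case \prettyref{item:WPO-gt} ($s \AGT t$) is literally KBO's case \prettyref{item:KBO-gt} carrying the variable condition, and the precondition $s \AGS t$ of \prettyref{item:WPO-ge}, once we know we are not in \prettyref{item:WPO-gt}, forces $\WeightOf s = \WeightOf t$ together with $|s|_x \ge |t|_x$, which is exactly the guard of \prettyref{item:KBO-ge}. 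So it remains to match the recursive subcases under a common hypothesis $\WeightOf s = \WeightOf t$ and $|s|_x \ge |t|_x$.

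For the precedence and status subcases I match \prettyref{item:WPO-prec}/\prettyref{item:WPO-mono} with \prettyref{item:KBO-prec}/\prettyref{item:KBO-mono}: the conditions $f \PGT g$ and $f \PSIM g$ are shared verbatim, and the two lexicographic comparisons $\GT_\WPOsum^\Lex$ and $\GT_\KBO^\Lex$ on the permuted argument lists agree by the induction hypothesis, since each componentwise comparison is on strictly smaller terms. The only asymmetry is the extra guard $\ForAll{j}\, s \GT_\WPOsum t_j$ present in WPO's argument case \prettyref{item:WPO-args} but absent from KBO. This guard is free in the WPO\,$\Rightarrow$\,KBO direction (I simply drop it), whereas in the KBO\,$\Rightarrow$\,WPO direction I must establish it. Here is where the hypotheses $\Wzero > 0$ and admissibility do their work: under them $\GT_\KBO$ is a simplification order, so the subterm property gives $t \GT_\KBO t_j$, transitivity gives $s \GT_\KBO t_j$, and the induction hypothesis upgrades this to $s \GT_\WPOsum t_j$. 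Thus the guard holds automatically whenever KBO succeeds.

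The delicate step, and the one I expect to be the main obstacle, is reconciling WPO's subterm case \prettyref{item:WPO-simp} ($\ForSome{i}\, s_i \GE_\WPOsum t$) with KBO's variable case \prettyref{item:KBO-simp} ($s = f^k(t)$ with $t \in \Vars$), since these are syntactically quite different. For WPO\,$\Rightarrow$\,KBO I do not try to land in \prettyref{item:KBO-simp} directly: from $s_i \GT_\WPOsum t$ (or $s_i = t$) the induction hypothesis gives $s_i \GT_\KBO t$ (\resp\ $s_i = t$), and since $\GT_\KBO$ is a simplification order the subterm property yields $s \GT_\KBO s_i$, whence $s \GT_\KBO t$ by transitivity. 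For KBO\,$\Rightarrow$\,WPO, \prettyref{item:KBO-simp} together with $\WeightOf s = \WeightOf t = \Wzero$ forces $\WeightOf f = 0$; peeling one occurrence of $f$ leaves $f^{k-1}(t) \GT_\KBO t$ (when $k > 1$) or $t$ itself (when $k = 1$), and the induction hypothesis supplies a witness $s_1 = f^{k-1}(t) \GE_\WPOsum t$ for \prettyref{item:WPO-simp}. Care is needed to check that this case is genuinely entered, \ie\ that $s \AGS t$ but not $s \AGT t$, which again follows from \prettyref{lem:weight}; and admissibility is what keeps the tower of weight-zero unary symbols consistent with the precedence, so that the recursion on \prettyref{item:KBO-simp} remains sound.
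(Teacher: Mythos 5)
Your proposal is correct and follows essentially the same route as the paper's proof: induction on $|s|+|t|$, \prettyref{lem:weight} to translate $\AGT$ and $\AGS$ into the weight comparison plus variable condition, the subterm property and transitivity of $\GT_\KBO$ (available because admissibility and $\Wzero>0$ make it a simplification order) to discharge WPO's recursive guards in one direction and to absorb case \prettyref{item:WPO-simp} in the other, and the peeling of one $f$ to handle KBO's variable case. No gaps.
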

\begin{proof}
	For arbitrary terms $s = f(\Seq{s_n})$ and $t$,
	we show $s \GT_\WPOsum t$ iff $s \GT_\KBO t$ by induction on $|s| + |t|$.
\REV{%
	Because of the admissibility assumption,
	we may assume that $\GT_\KBO$ is a simplification order.
}{}%
	\begin{itemize}
	\item
		Suppose $s \GT_\KBO t$.
		If $\WeightOf{s} > \WeightOf{t}$, then we have
		$s >_\Asum t$ by \prettyref{lem:weight} and 
		$s \GT_\WPOsum t$ by \prettyref{item:WPO-gt} of \prettyref{def:WPO}.
		Let us consider that
		$\WeightOf{s} = \WeightOf{t}$.
		\begin{itemize}
		\item
			Suppose $s = f^k(t)$ and $t \in \Vars$ for some $k > 0$.
			Since $\WeightOf{s} = \WeightOf{t}$, $\WeightOf{f} = 0$.
			If $k = 1$, then we are done by case \prettyref{item:WPO-simp}.
			Otherwise $f^{k-1}(t) \GT_\KBO t$ by case
			\prettyref{item:KBO-simp} of \prettyref{def:KBO}.
			By the induction hypothesis we get $f^{k-1}(t) \GT_\WPOsum t$, and hence
			case \prettyref{item:WPO-simp} of \prettyref{def:WPO} applies.
		\item
			Suppose $t = g(\Seq{t_m})$ and case \prettyref{item:KBO-prec} or
			\prettyref{item:KBO-mono} applies.
			For all $j \in \{ 1, \dots, m \}$,
			we have $t \GT_\KBO t_j$ by the subterm property of $\GT_\KBO$,
			and we get $s \GT_\KBO t_j$ by the transitivity.
			By the induction hypothesis, $s \GT_\WPOsum t_j$.
			Hence, the side condition in \prettyref{item:WPO-args} of
			\prettyref{def:WPO} is satisfied, and subcase
			\prettyref{item:WPO-prec} or \prettyref{item:WPO-mono} applies.
		\end{itemize}
	\item
		Suppose $s \GT_\WPOsum t$.
		If $s >_\Asum t$, then $\WeightOf{s} > \WeightOf{t}$ by \prettyref{lem:weight} and
		$s \GT_\KBO t$ by \prettyref{item:KBO-gt} of \prettyref{def:KBO}.
		Otherwise we get $\WeightOf{s} = \WeightOf{t}$ by \prettyref{lem:weight}.
		\begin{itemize}
		\item
			Suppose $s_i \GE_\WPOsum t$ for some $i \in \{1,\dots,n\}$.
			By the induction hypothesis, we have $s_i \GE_\KBO t$.
			The subterm property of $\GT_\KBO$ ensures $s \GT_\KBO s_i$.
			Hence by \REV{}{the }transitivity, we get $s \GT_\KBO t$.
		\item
			Suppose $t = g(\Seq{t_m})$.
			If $f \PGT g$, then case \prettyref{item:KBO-prec} of \prettyref{def:KBO} applies.
			If $f = g$ and $[\Seq{s_n}] \GT_\WPOsum^\Lex [\Seq{t_m}]$, 
			then by the induction hypothesis we get
			$[\Seq{s_n}] \GT_\KBO^\Lex [\Seq{t_m}]$,
			and hence case \prettyref{item:KBO-mono} applies.
			\qedhere
		\end{itemize}
	\end{itemize}
\end{proof}

Note that we need neither admissibility nor $\Wzero > 0$ in \prettyref{cor:WPOsum}.
Let us see that removal of these conditions \REV{is}{are} indeed advantageous.
The following example illustrates that $\WPOsumP$ properly
\REV{extends}{enhances} KBO because \REV{}{the }admissibility is relaxed.

\begin{example}\label{ex:WPO>KBO+LPO}
	Consider the following TRS $\RR_1$:
	\[
		\RR_1 \DefEq
		\begin{EqSet}
			\f(\g(x))&\to\g(\f(\f(x)))\\
			\f(\h(x))&\to\h(\h(\f(x)))\\
		\end{EqSet}
	\]
	The first rule
\REV{%
	is oriented from right to left
}{%
	cannot be oriented
}%
	by $\LPO$ in any precedence.
	The second rule
\REV{%
	is oriented from right to left by KBO, since
	$\WeightOf{\h} > 0$ implies increase in weights and
	$\WeightOf{\h} = 0$ implies $\h \PGS \f$ by the admissibility.
}{%
	cannot be oriented by $\KBO$,
	since it requires that $\f \PGT \h$ and $\WeightOf{\h} = 0$
	which is not admissible.
}%
	On the other hand, $\WPOsumP$ with precedence $\f \PGT \g$, $\f \PGT \h$ and
	$\WeightOf{\g} > \WeightOf{\f} = \WeightOf{\h} = 0$ orients all the rules.
	Hence, $\RR_1$ is orientable by $\WPOsumP$, but not by $\KBO$ or $\LPO$.
	Note that there is no need to consider a status for $\RR_1$,
	since all symbols are unary.
\end{example}

Moreover, allowing $\Wzero = 0$ is also a proper enhancement.

\begin{example}\label{ex:WPOsum}
	Consider the following TRS $\RR_2$:
	\[
		\RR_2 \DefEq
		\begin{EqSet}
			\f(\ca,\cb) &\to \f(\cb,\f(\cb,\ca))\\
			\f(\ca,\f(\cb,x)) &\to \f(x,\f(\cb,\cb))
		\end{EqSet}
	\]
	The first rule cannot be oriented by $\KBO$ or $\WPOsumP$, since
	$\WeightOf{\cb} = 0$ is required.
	The second rule is not orientable by $\LPO$
	no matter \REV{how one chooses}{the choice of} $\sigma$.
	On the other hand, $\WPOsum$ with
	$\WeightOf{\ca} > \WeightOf{\cb} = \WeightOf{\f} = 0$,
	$\ca \PGT \cb$ and $\sigma(f) = [1,2]$ orients \REV{}{the }both rules.
	Hence, $\RR_2$ is orientable by $\WPOsum$ with $\Wzero = 0$,
	but not by $\LPO$, $\KBO$, or $\WPOsumP$.
\end{example}

\subsection{WPO$(\Apol)$}

Next we consider generalizing $\WPOsum$ using monotone polynomial interpretations.
According to Zantema \cite[Proposition 4]{Z01},
every monotone interpretation on a totally ordered set is weakly simple.
Hence a monotone polynomial interpretation
$\Apol$ is weakly simple and we obtain the following:

\begin{corollary}
	If $\Apol$ is strictly monotone, then
$\GT_\WPOpol$ is a reduction order.\qed
\end{corollary}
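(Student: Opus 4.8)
The plan is to obtain the statement as an immediate corollary of \prettyref{thm:WPO simple}, which guarantees that $\GT_\WPO$ is a simplification order (and hence a reduction order) as soon as the underlying algebra is weakly monotone and weakly simple. Accordingly, it suffices to verify that a strictly monotone polynomial interpretation $\Apol$ meets both of these hypotheses.

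First I would check weak monotonicity, which follows from strict monotonicity on the totally ordered carrier of $\Apol$: if $a \gs b$ then either $a > b$, so that $f_\Apol(\dots,a,\dots) > f_\Apol(\dots,b,\dots)$ and a fortiori $f_\Apol(\dots,a,\dots) \gs f_\Apol(\dots,b,\dots)$, or $a = b$, in which case the two interpreted values coincide. Next I would establish weak simplicity by invoking Zantema's Proposition~4, already cited above, which states that every (strictly) monotone interpretation on a totally ordered set is weakly simple. Since the carrier $\{a \in \Nat \mid a \ge \Wzero\}$ of $\Apol$ is totally ordered by the standard $\ge$ and $\Apol$ is strictly monotone, this yields $f_\Apol(\dots,a,\dots) \gs a$ for every $f$, i.e.\ weak simplicity. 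Having verified both conditions, \prettyref{thm:WPO simple} applies directly and gives the claim.

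There is essentially no obstacle of substance here, since all the work is carried by \prettyref{thm:WPO simple} together with the cited proposition; note in particular that strict monotonicity is genuinely needed, as a merely weakly monotone polynomial interpretation (e.g.\ a constant interpretation) need not be weakly simple. The only point requiring a moment's attention is confirming that the orders $\gs$ and $>$ attached to $\Apol$ are indeed the standard total order on the naturals restricted to its carrier, so that the ``totally ordered set'' hypothesis of Zantema's proposition is met; this is immediate from the definition of $\Apol$.
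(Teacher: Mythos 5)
Your proposal is correct and follows exactly the paper's (implicit) argument: the paper derives this corollary by citing Zantema's Proposition~4 to obtain weak simplicity of the strictly monotone $\Apol$ and then applying \prettyref{thm:WPO simple}. Your additional explicit check that strict monotonicity implies weak monotonicity on the totally ordered carrier is a harmless elaboration of a step the paper leaves unstated.
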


Trivially, POLO is subsumed by $\WPOpol$ as a reduction order.
More precisely, the following relation holds:

\begin{theorem}\label{thm:WPO>=POLO}
	${>_\Apol} \subseteq {\GT_\WPOpol}$.\qed
\end{theorem}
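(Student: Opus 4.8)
The plan is to read this inclusion straight off the structure of \prettyref{def:WPO}. Its very first alternative, case~\prettyref{item:WPO-gt}, declares $s \GT_\WPOpol t$ as soon as $s \AGT t$, and for the polynomial algebra $\Apol$ the term relation $\AGT$ is literally $>_\Apol$. So the content of the theorem is almost entirely definitional: given $s >_\Apol t$, I want to invoke this case. The only thing standing in the way is that \prettyref{def:WPO} defines $\GT_\WPOpol$ only when the left-hand side has the form $s = f(\Seq{s_n})$, i.e.\ when $s$ is not a variable. Hence the single nontrivial point to settle is that $>_\Apol$ can never hold with a variable on the left.

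First I would handle exactly that point, which is where all the (very mild) work lies. Recall that the carrier of $\Apol$ is $\{a \in \Nat \mid a \ge \Wzero\}$, so every assignment takes values $\ge \Wzero$ and, by homomorphic extension, every term evaluates to a carrier element, hence to a value $\ge \Wzero$. Now suppose for contradiction that $x >_\Apol t$ for a variable $x \in \Vars$. Taking the assignment $\alpha$ with $\alpha(x) = \Wzero$ gives $\widehat\alpha(x) = \Wzero$, while $\widehat\alpha(t) \ge \Wzero$; thus $\widehat\alpha(x) \not> \widehat\alpha(t)$, contradicting $x >_\Apol t$. Therefore whenever $s >_\Apol t$ holds, $s$ must be a non-variable term.

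With this in hand the argument closes immediately: from $s >_\Apol t$ I conclude $s = f(\Seq{s_n})$ and $s \AGT t$, so case~\prettyref{item:WPO-gt} of \prettyref{def:WPO} applies and yields $s \GT_\WPOpol t$. Since this covers every pair with $s >_\Apol t$, we get ${>_\Apol} \subseteq {\GT_\WPOpol}$. I do not anticipate a genuine obstacle: the result is structural rather than computational, and the only care required is the elementary carrier-bound observation ruling out a variable left-hand side — which is precisely why the statement can be asserted with its proof essentially inline.
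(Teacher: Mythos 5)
Your proof is correct and matches the paper's (implicit) justification: the theorem is asserted with no written proof precisely because case \prettyref{item:WPO-gt} of \prettyref{def:WPO} makes ${\AGT} \subseteq {\GT_\WPO}$ immediate, which is the same observation the paper uses explicitly when deriving Zantema's result. Your extra check that $>_\Apol$ cannot hold with a variable on the left (via the carrier bound $\Wzero$) is a sound and worthwhile tightening of the same argument, not a different route.
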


In the remainder of this paper, we consider 
\REV{an algebra $\Apol$ that}{$\Apol$}
consists of linear polynomial interpretations induced by 
a weight function $\Tp{\Weight,\Wzero}$ and a subterm coefficient function $\Coef$,
which is defined as follows:
\[
	f_\Apol(\Seq{a_n}) \DefEq
	\WeightOf{f} + \sum_{i=1}^{n} \CoefOf{f,i} \cdot a_i
\]
Analogous to \prettyref{thm:WPO>=KBO}, we also obtain the following:

\begin{theorem}\label{thm:WPO>=TKBO}
	If $\Wzero > 0$ and $\Weight$ is admissible for $\PGS$, then
	${\GT_\WPOpol} = {\GT_\TKBO}$.\qed
\end{theorem}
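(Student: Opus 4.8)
The plan is to mirror the proof of \prettyref{thm:WPO>=KBO} verbatim, letting the linear polynomial algebra $\Apol$, the refined weight $\WeightOf{\cdot}$ and the variable coefficients $\VCoefOf{x,\cdot}$ of TKBO play the roles of $\Asum$, the KBO weight and $|\cdot|_x$, respectively. The only genuinely new ingredient is the polynomial counterpart of \prettyref{lem:weight}, which I would state as: $s \geopt_\Apol t$ iff $\VCoefOf{x,s} \ge \VCoefOf{x,t}$ for all $x \in \Vars$ and $\WeightOf{s} \geopt \WeightOf{t}$. Everything else is an unchanged replay of the order-theoretic induction, once this lemma is available.

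The engine of the lemma is the observation that, since every $f_\Apol(\Seq{a_n}) = \WeightOf{f} + \sum_{i=1}^n \CoefOf{f,i}\cdot a_i$ is a linear polynomial, the homomorphic extension decomposes as $\widehat\alpha(s) = \WeightOf{s} + \sum_{x \in \Vars}\VCoefOf{x,s}\cdot(\alpha(x)-\Wzero)$ for every assignment $\alpha$ into the carrier $\{a \in \Nat \mid a \ge \Wzero\}$. I would prove this identity by a routine induction on $s$: the base case $s = x$ gives $\alpha(x)$ on both sides, and the step case pushes the sum through $f_\Apol$ using precisely the recursions defining $\WeightOf{\cdot}$ and $\VCoefOf{x,\cdot}$. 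Given the decomposition, both directions follow as in \prettyref{lem:weight}: for ``if,'' every $\alpha$ satisfies $\alpha(x)-\Wzero \ge 0$, so dominance of the weight and of all variable coefficients makes $\widehat\alpha(s)-\widehat\alpha(t)$ nonnegative (and strictly positive once the weights are strict); for ``only-if,'' evaluating under $\alpha_0$ sending every variable to $\Wzero$ isolates $\WeightOf{s} \geopt \WeightOf{t}$, while evaluating under $\alpha_x$ sending $x$ to a sufficiently large value and all others to $\Wzero$ forces $\VCoefOf{x,s} \ge \VCoefOf{x,t}$ by comparing slopes.

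With the lemma in hand, I would prove $s \GT_\WPOpol t$ iff $s \GT_\TKBO t$ by induction on $|s| + |t|$, reusing the case analysis of \prettyref{thm:WPO>=KBO} after substituting the refined notions; the admissibility and $\Wzero > 0$ hypotheses guarantee (via the cited soundness theorem of Ludwig and Waldmann) that $\GT_\TKBO$ is a simplification order, so its subterm property is available wherever the KBO proof invoked it. In the forward direction, a strict weight decrease yields $s >_\Apol t$ and hence \prettyref{item:WPO-gt}; equal weights give $s \AGS t$, and the TKBO subcases \prettyref{item:KBO-simp} and \prettyref{item:KBO-args} translate into \prettyref{item:WPO-simp} and \prettyref{item:WPO-args} using the induction hypothesis, the subterm property (to discharge the side condition $s \GT_\WPOpol t_j$), and preservation of the status comparison. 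In the backward direction, $s >_\Apol t$ gives $\WeightOf{s} > \WeightOf{t}$ together with the variable-coefficient condition, i.e.\ \prettyref{item:KBO-gt}; otherwise $s \AGS t$ holds without strictness, which forces $\WeightOf{s} = \WeightOf{t}$ (the variable condition already holds, so failure of $s >_\Apol t$ rules out a strict weight gap), and the remaining WPO cases map back to TKBO through the induction hypothesis and the subterm property.

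The main obstacle is entirely confined to the lemma: ensuring that the subterm coefficients combine correctly, so that the slope of $\widehat\alpha(\cdot)$ in each variable $x$ is exactly $\VCoefOf{x,\cdot}$, and checking that the carrier is closed under the interpretations, which needs $\CoefOf{f,i} \ge 1$ together with $\WeightOf{c} \ge \Wzero$ for constants $c$. Once this bookkeeping is pinned down, the inductive argument carries over from \prettyref{thm:WPO>=KBO} without modification.
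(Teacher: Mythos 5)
Your proposal is correct and matches the paper's intent exactly: the paper proves this theorem by declaring it ``analogous to \prettyref{thm:WPO>=KBO}'', and your argument is precisely that analogy carried out in detail, with the decomposition $\widehat\alpha(s) = \WeightOf{s} + \sum_{x}\VCoefOf{x,s}\cdot(\alpha(x)-\Wzero)$ supplying the correct polynomial counterpart of \prettyref{lem:weight}. The remaining case analysis, including the use of admissibility and $\Wzero>0$ to secure the subterm property of $\GT_\TKBO$, transfers exactly as you describe.
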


Moreover, we can verify that $\WPOpol$ strictly enhances both POLO and TKBO.
\REV{%
	More precisely, we show that both POLO and TKBO do not subsume even 
	WPO($\Asum$).
}{%
}%
\begin{example}\label{ex:POLO+TKBO<WPO}
	POLO cannot orient the first rule of $\RR_1$:
	\[
		l_1 = \f(\g(x)) \to \g(\f(\f(x))) = r_1
	\]
	since it is not $\omega$-terminating \cite{Z94}.
	Suppose that $\RR_1$ is oriented by TKBO. For the first rule, we need
	\[
		\VCoefOf{x,l_1} = \CoefOf{\f,1}\cdot\CoefOf{\g,1} \ge
		\CoefOf{\g,1}\cdot\CoefOf{\f,1}^2 = \VCoefOf{x,r_1}
	\]
	Hence $\CoefOf{\f,1} = 1$.
	Moreover,
	\begin{align*}
		\WeightOf{l_1} 
		&= \WeightOf{\f} + \WeightOf{\g} + \CoefOf{\g,1} \cdot \Wzero
	\\
		&\ge
		\WeightOf{\g} + \CoefOf{\g,1} \cdot (2 \cdot \WeightOf{\f} + \Wzero) =
		\WeightOf{r_1}
	\end{align*}
	Hence $\WeightOf{\f} = 0$.
	Analogously, for the second rule of $\RR_1$:
	\[
		l_2 = \f(\h(x)) \to \h(\h(\f(x))) = r_2
	\]
	we need 
	$\CoefOf{\h,1} = 1$ and $\WeightOf{\h} = 0$.
	Hence $\WeightOf{l_2} = \WeightOf{r_2}$.
\REV{%
	The admissibility imposes $\f \PSIM \h$,
	and thus the rule is oriented only from right to left.

	Note also that it is not possible to
	orient one of the rules in $\RR_1$ by $>_\Apol$ and
	the other by $\ge_\Apol$.
	Thus, togather with the discussion in \prettyref{ex:WPO>KBO+LPO},
	we conclude that 
	$\RR_1$ cannot be oriented by any lexicographic composition of
	POLO, (T)KBO, and LPO.
}{%
	By the admissibility,
	$\f \PGT \h$ cannot hold and this rule
	cannot be oriented by TKBO.
}%
\end{example}

\subsection{WPO$(\Amax)$}

Note that $\Apol$ is strictly monotone.
$\WPO$ also admits \emph{weakly} monotone interpretations;
a typical example is $\max$.%
\footnote{
\REV{%
	Note that weakly monotone polynomials with $0$ coefficients are not weakly simple,
	and hence cannot be applied for WPO of this section.
	We consider such polynomials in \prettyref{sec:pair}.
}{}%
}
Let us consider an instance of $\WPO$ using $\max$ for interpretation.

\begin{definition}
	A \emph{subterm penalty function} $\Pen$ is a mapping \st
	$\PenOf{f,i} \in \Nat$ is defined for each $f \in \Sig_n$ and
	$i \in \{ 1, \dots, n \}$.
	A weight function $\Tp{\Weight,\Wzero}$ and $\Pen$ induce
	the $\Sig$-algebra $\Amax$, which
	consists of the carrier set $\{ a \in \Nat \mid a \ge \Wzero \}$ and
	interpretations given by:
	\[
		f_\Amax(\Seq{a_n}) \DefEq
		\max
		\Big(\WeightOf{f}, \max_{i=1}^n \big(\PenOf{f,i} + a_i\big)
		\Big)
	\]
\end{definition}

\begin{lemma}
	$\Amax$ is weakly monotone and weakly simple.
\end{lemma}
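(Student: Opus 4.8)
The plan is to verify the two defining properties of a weakly monotone and weakly simple algebra directly from the interpretation formula for $\Amax$, namely
\[
	f_\Amax(\Seq{a_n}) = \max\Big(\WeightOf{f}, \max_{i=1}^n \big(\PenOf{f,i} + a_i\big)\Big).
\]
Both properties are pointwise facts about the $\max$ operator, so each reduces to an elementary monotonicity argument once we unfold the definition. Since the orders $\geopt$ and $\gsopt$ on the carrier set $\{a \in \Nat \mid a \ge \Wzero\}$ are just the standard $>$ and $\ge$ on $\Nat$, there is no subtlety in the underlying algebra beyond the behaviour of $\max$.

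For \emph{weak monotonicity}, I would fix an argument position, say the $k$-th, and suppose $a_k > b_k$ (equivalently $a_k \ge b_k$ for the weak case). The goal is $f_\Amax(\dots,a_k,\dots) \ge f_\Amax(\dots,b_k,\dots)$. The key observation is that replacing $b_k$ by the larger $a_k$ can only increase the term $\PenOf{f,k} + a_k$ inside the inner $\max$, while leaving $\WeightOf{f}$ and all other $\PenOf{f,i}+a_i$ unchanged. Since $\max$ is monotone in each of its arguments, the whole expression can only increase. Note that only \emph{weak} monotonicity is claimed here, which is exactly what is needed: $\max$ is not strictly monotone because an increase in $a_k$ may be absorbed if some other argument or $\WeightOf{f}$ already dominates.

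For \emph{weak simplicity}, I need $f_\Amax(\dots,a_i,\dots) \ge a_i$ for each $i$. This is immediate from the structure of the formula: the inner maximum contains the summand $\PenOf{f,i} + a_i$, and since $\PenOf{f,i} \in \Nat$ we have $\PenOf{f,i} + a_i \ge a_i$; the outer $\max$ with $\WeightOf{f}$ only makes the value larger. Hence $f_\Amax(\Seq{a_n}) \ge \PenOf{f,i} + a_i \ge a_i$. Again only the weak inequality holds, since $\PenOf{f,i}$ may be $0$ and the $i$-th argument may be the dominant one, giving equality.

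I do not anticipate a genuine obstacle here: the lemma is deliberately elementary, serving only to license the use of $\Amax$ within the WPO framework of \prettyref{def:WPO} and \prettyref{thm:WPO simple}. The only point requiring any care is bookkeeping the interplay between the outer $\max$ with $\WeightOf{f}$ and the inner maximum over the penalised arguments, and confirming that the carrier-set constraint $a \ge \Wzero$ is respected by $f_\Amax$ (which holds provided $\WeightOf{f} \ge \Wzero$, an implicit well-definedness assumption on the weight function). Once weak monotonicity and weak simplicity are established, $\Amax$ qualifies as an algebra to which \prettyref{thm:WPO simple} applies, so $\GT_\WPOmax$ is a reduction order as a corollary.
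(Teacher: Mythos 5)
Your proof is correct and follows essentially the same route as the paper: weak simplicity is read off directly from $\max(\dots,a,\dots)\ge a$, and weak monotonicity from the monotonicity of $\max$ in each argument (the paper merely spells out the latter via a short case analysis on whether the $k$-th summand is the dominant one). No gap.
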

\begin{proof}
	Weak simplicity is obvious from the fact that $\max( \dots, a, \dots ) \ge a$.
	For weak monotonicity, suppose $a > b$ and let us show
	\[
		a' = f_\Amax(\Seq{c_{k}}, a, \Seq{d_l})\ge
		f_\Amax(\Seq{c_{k}}, b, \Seq{d_l}) = b'
	\]
	To this end, let $c = f_\Amax(\Seq{c_{k}}, 0, \Seq{d_l} )$.
	If $c \ge \PenOf{f,k+1} + a$, then $a' = b' = c$.
	Otherwise, we have $a' = \PenOf{f,k+1} + a$ and either
	$a' > \PenOf{f,k+1} + b = b' > c$ or $a' > b' = c$.\qedhere
\end{proof}
Note that $\Amax$ can be considered as the \REV{1-dimensional}{dimension-1} variant of
\emph{arctic interpretations} \cite{KW09}.
The weak monotonicity of $\Amax$ is also shown there.

\begin{corollary}
	$\GT_\WPOmax$ is a reduction order.\qed
\end{corollary}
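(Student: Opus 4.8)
The plan is to obtain the statement as an immediate instance of the general soundness result \prettyref{thm:WPO simple}. That theorem asserts that for any well-founded $\Sig$-algebra $\A$ that is weakly monotone and weakly simple, the order $\GT_\WPO$ is a simplification order and hence a reduction order. Thus the entire task reduces to checking that the specific algebra $\Amax$ satisfies these two hypotheses, exactly as was done for $\WPOsum$ in \prettyref{cor:WPOsum}.

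First I would invoke the preceding lemma, which establishes precisely that $\Amax$ is weakly monotone and weakly simple. Weak simplicity there is read off from $\max(\dots,a,\dots) \ge a$, and weak monotonicity is settled by the case analysis on whether the maximum defining $f_\Amax$ is already attained without the varied argument. With both properties in hand, \prettyref{thm:WPO simple} applies verbatim to $\A = \Amax$, yielding that $\GT_\WPOmax$ is a simplification order and therefore a reduction order. The argument is consequently a two-line deduction: the lemma supplies the hypotheses and \prettyref{thm:WPO simple} supplies the conclusion.

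I do not expect any genuine obstacle in the corollary itself, since all the substantive work has already been carried out elsewhere, namely in the weak-monotonicity case split of the preceding lemma and in the appendix proof underlying \prettyref{thm:WPO simple}. If one wished to avoid citing \prettyref{thm:WPO simple}, the only alternative would be to re-derive well-foundedness, monotonicity, and stability of $\GT_\WPOmax$ directly from the shape of $f_\Amax$; but this merely re-proves the general theorem in a special case and offers no simplification. Hence the cleanest route is the direct appeal to \prettyref{thm:WPO simple}, and the \emph{only} point that requires any care---the weak monotonicity of $\max$---has been isolated into the lemma that immediately precedes the statement.
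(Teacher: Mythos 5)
Your proposal is correct and matches the paper exactly: the corollary is stated with no written proof precisely because it follows immediately from the preceding lemma (that $\Amax$ is weakly monotone and weakly simple) together with \prettyref{thm:WPO simple}, just as was done for $\WPOsum$ in \prettyref{cor:WPOsum}. You have correctly identified that the only substantive content lies in the lemma's weak-monotonicity case split, not in the corollary itself.
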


Now we show that LPO is obtained as a restricted case of $\WPOmax$.

\begin{theorem}\label{thm:WPO>=LPO}
	If $\Wzero = 0$, $\WeightOf{f} = 0$ and
	$\PenOf{f,i} = 0$ for all $f \in \Sig_n$ and $i \in \{ 1, \dots, n \}$, then
	${\GT_\LPO} = {\GT_\WPOmax}$.
\end{theorem}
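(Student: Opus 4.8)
The plan is to prove $\GT_\LPO = \GT_\WPOmax$ by showing mutual inclusion, arguing by induction on the combined term size $|s| + |t|$ that $s \GT_\LPO t$ iff $s \GT_\WPOmax t$. The crucial preliminary observation is that under the stated hypotheses ($\Wzero = 0$, all weights and penalties zero), the algebra $\Amax$ collapses to the \emph{term height}: one checks by a straightforward induction that $\widehat\alpha(s) = 0$ for every assignment $\alpha$ mapping variables to $0$, and more generally that $f_\Amax$ reduces to $\max_{i} a_i$ (with the convention that the max over the empty argument list is $0$). From this I expect to extract the key structural fact that for \emph{any} terms $s,t$ we always have $s \gs_\Amax t$ (i.e. $s \ge_\Amax t$ fails only never — the weak inequality is universally true since all interpreted values collapse appropriately), so that the precondition $s \AGS t$ of case~\prettyref{item:WPO-ge} is automatically satisfied. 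This is what makes the $\Amax$ instance behave exactly like LPO, where no weight comparison is ever needed.

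The main work is then a case analysis matching the two definitions. First I would dispose of case~\prettyref{item:WPO-gt}: I claim $s \AGT t$ never holds under these hypotheses, because the strict order $>$ on the carrier reflects a strict height comparison that an assignment can always defeat (sending the variables high enough forces equality of interpretations whenever $\Var(s) \cap \Var(t)$ interacts, or more simply, weak simplicity with all-zero data means no strict inequality survives universal quantification over assignments). Granting that, $\GT_\WPOmax$ reduces to its case~\prettyref{item:WPO-ge}, whose three subcases~\prettyref{item:WPO-simp}, \prettyref{item:WPO-prec}, \prettyref{item:WPO-mono} are syntactically identical to the subterm case~\prettyref{item:LPO-simp} and the two recursive subcases of~\prettyref{item:LPO-args} in \prettyref{def:LPO}. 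The one genuine discrepancy is that $\GT_\WPOmax$ carries the extra side condition $\ForAll{j} s \GT_\WPOmax t_j$ inside~\prettyref{item:WPO-args}, which LPO states explicitly as well (the $\ForAll{j} s \GT_\LPO t_j$ clause of~\prettyref{item:LPO-args}); these correspond directly under the induction hypothesis, since each $t_j$ is a proper subterm of $t$ and hence the pair $(s,t_j)$ has strictly smaller combined size.

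For each direction I would feed the induction hypothesis into the recursive subcomparisons: the lexicographic comparison $\AppPerm{\sigma(f)}{s}{n} \GT^\Lex \AppPerm{\sigma(g)}{t}{m}$ compares subterms of $s$ and $t$, so each component comparison has smaller size and the two lexicographic extensions agree; similarly for the subterm case and the ``$s$ beats every $t_j$'' side condition. Closing the induction then yields ${\GT_\LPO} = {\GT_\WPOmax}$.

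The hard part will be the two collapse lemmas that I flagged: precisely establishing that $s \AGS t$ holds universally and that $s \AGT t$ never holds, under the all-zero weight/penalty regime. These require care because $\gsopt_\A$ and $\AGT$ quantify over \emph{all} assignments $\alpha : \Vars \to A$, and I must check that no choice of $\alpha$ produces a spurious strict inequality — this is where weak simplicity (\prettyref{thm:WPO simple}'s hypotheses) and the explicit $\max$ shape of $f_\Amax$ do the real work. Everything after those two lemmas is a mechanical subcase-by-subcase matching of the two definitions, driven by the induction on $|s| + |t|$, which I expect to be routine.
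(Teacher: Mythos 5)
There is a genuine gap, and it sits exactly where you flagged the ``hard part'': the collapse lemma you intend to prove is false. Under the all-zero hypotheses the interpretation of a term $s$ evaluates to $\max\{\alpha(x) \mid x \in \Var(s)\}$ (with value $0$ when $\Var(s)=\Empty$), so $s \ge_\Amax t$ is \emph{not} universally true --- it holds if and only if $\Var(s) \supseteq \Var(t)$. A counterexample to your claim: for a constant $\cc$ and a variable $x$ we have $\widehat\alpha(\cc) = 0$ but $\widehat\alpha(x) = \alpha(x)$ arbitrary, so $\cc \ge_\Amax x$ fails; the same happens whenever $t$ contains a variable not occurring in $s$. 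Consequently the precondition $s \AGS t$ of case~\prettyref{item:WPO-ge} is not vacuous, and the direction ${\GT_\LPO} \subseteq {\GT_\WPOmax}$ does not follow from ``syntactically identical subcases'': at every recursive step you must additionally discharge $\Var(s) \supseteq \Var(t)$. That is the actual content of the theorem in this direction, and it is supplied by a property of LPO itself, namely that $s \GT_\LPO t$ implies $\Var(s) \supseteq \Var(t)$ (provable from $s \NGE_\LPO x$ for $x \notin \Var(s)$). The paper's proof consists of exactly these two observations --- $>_\Amax$ is empty (which you do argue correctly, if somewhat vaguely; the one-line reason is that the all-zero assignment equates both sides) and $\ge_\Amax$ coincides with variable containment, which LPO already guarantees --- after which the two definitions literally coincide.

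A secondary inaccuracy: the algebra does not collapse to ``term height'' (all weights and penalties being $0$ kills any height contribution); it collapses to the maximum of the assigned variable values, as your own formula $f_\Amax(\Seq{a_n}) = \max_i a_i$ shows. Once you replace the false universal claim by the characterization $s \ge_\Amax t \Iff \Var(s) \supseteq \Var(t)$ and insert the LPO variable-containment lemma, the rest of your plan (the induction on $|s|+|t|$ and the mechanical matching of subcases) goes through and reproduces the paper's argument.
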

\begin{proof}
	From the assumptions, $s >_\Amax t$ never holds.
	Hence, case \prettyref{item:WPO-gt} of \prettyref{def:WPO} can be ignored.
	Moreover,
	$s \ge_\Amax t$ is equivalent to $\Var(s) \supseteq \Var(t)$.
	One can easily verify the latter holds whenever $s \GT_\LPO t$,
	using the fact that $s \NGE_\LPO x$ for $x \notin \Var(s)$.
	Hence, the condition of 
	\REV{case }{}\prettyref{item:WPO-ge} can be ignored and
	\prettyref{def:LPO} and \prettyref{def:WPO} become equivalent.\qedhere
\end{proof}

The following example illustrates that $\WPOmax$ properly enhances $\LPO$.

\begin{example}
	Consider the following TRS $\RR_3$:
	\[
		\RR_3 \DefEq
		\begin{EqSet}
			\f(x,y) &\to \g(x)\\
			\f(\g(x),y)&\to \f(x,\g(x))\\
			\f(x,\g(y)) &\to \f(y,y)
		\end{EqSet}
	\]
	To orient the first two rules by $\LPO$,
	we need $\f \PGT \g$ and $\sigma(\f) = [1,2]$.
	$\LPO$ cannot orient the third rule by this precedence and status,
	while $\PenOf{\g,1} > \PenOf{\f,1} = 0$ suffices for $\WPOmax$.
	Since the last two rules are duplicating,
	$\KBO$ or $\WPOsum$ cannot apply for $\RR_3$.
\end{example}

However, $\WPOmax$
\REV{%
	covers neither $\WPOsum$ nor even KBO.
}{%
	does not cover $\WPOsum$, \REV{and}{or} not even $\KBO$.
}%
In the next section, we consider unifying $\WPOsum$ and $\WPOmax$
to cover both KBO and LPO.

\subsection{WPO$(\Amp)$ and WPO$(\Ams)$}

Now we consider unifying $\WPOmax$ and $\WPOpol$.
To this end, we introduce the \emph{weight status} to choose a polynomial or $\max$
for each function symbol.

\begin{definition}
	A \emph{weight status function} is a mapping $\Wstatus$ which maps
	each function symbol $f$
\REV{%
	either to the symbol $\Pol$ or to the symbol $\Max$.
}{%
	to either symbol $\Pol$ or $\Max$.
}%
	The $\Sig$-algebra $\Amp$ consists of the carrier set 
	$\{ a \in \Nat \mid a \ge \Wzero \}$ and the interpretation which is defined as follows:
	\[
		f_\Amp(\Seq{a_n}) \DefEq
		\begin{Cases}
			\WeightOf{f} + \displaystyle\sum_{i = 1}^{n} \CoefOf{f,i} \cdot a_i
			&\text{if } \WstatusOf{f} = \Pol
		\smallskip\\
			\max
			\Big( \WeightOf{f}, \displaystyle\max_{i = 1}^{n} \big(\PenOf{f,i} + a_i\big)
			\Big)
			&\text{if } \WstatusOf{f} = \Max
		\end{Cases}
	\]
	We denote $\Amp$ by $\Ams$ if coefficients are at most $1$.
\end{definition}

\begin{corollary}
	$\WPOmp$ is a reduction order.\qed
\end{corollary}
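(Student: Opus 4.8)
The plan is to reduce the claim to the general soundness result \prettyref{thm:WPO simple}. Since $\WPOmp$ is the instance of WPO induced by the algebra $\Amp$, it suffices to verify that $\Amp$ is weakly monotone and weakly simple; the corollary then follows immediately, because \prettyref{thm:WPO simple} guarantees that $\GT_\WPO$ is a reduction order for any weakly monotone and weakly simple algebra.

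The key observation I would exploit is that both weak monotonicity and weak simplicity are conditions imposed \emph{per function symbol} (and per argument position): weak monotonicity asks, for each $f$ and each argument slot, that $a > b$ imply $f_\Amp(\dots,a,\dots) \ge f_\Amp(\dots,b,\dots)$, while weak simplicity asks that $f_\Amp(\dots,a,\dots) \ge a$. Because $\Amp$ interprets each symbol $f$ independently---as a linear polynomial when $\WstatusOf{f} = \Pol$ and exactly as $f_\Amax$ when $\WstatusOf{f} = \Max$, all over the common carrier $\{ a \in \Nat \mid a \ge \Wzero \}$ with the standard $\ge$ and $>$---I can check both conditions symbol by symbol and then simply assemble the results.

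For a symbol with $\WstatusOf{f} = \Pol$ the interpretation is $\WeightOf{f} + \sum_i \CoefOf{f,i}\cdot a_i$ with positive subterm coefficients $\CoefOf{f,i} \ge 1$; weak (indeed strict) monotonicity in each argument holds because the coefficients are positive, and weak simplicity follows from $f_\Amp(\dots,a_j,\dots) \ge \CoefOf{f,j}\cdot a_j \ge a_j$, which is precisely the reasoning already used for $\Apol$ and $\Asum$. For a symbol with $\WstatusOf{f} = \Max$ we have $f_\Amp = f_\Amax$, so weak monotonicity and weak simplicity are exactly the content of the lemma that established these two properties for $\Amax$. Combining the two cases shows that $\Amp$ is weakly monotone and weakly simple, which closes the argument via \prettyref{thm:WPO simple}.

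I do not expect a serious obstacle here; the only points needing care are bookkeeping ones, and each has already been handled for the component algebras. First, the two interpretation styles must live on the same ordered carrier so that $\Amp$ is genuinely a well-founded $\Sig$-algebra, which holds since both use $\{ a \in \Nat \mid a \ge \Wzero \}$ with the standard orders. Second, well-definedness requires each $f_\Amp$ to map into the carrier: for arities $n \ge 1$ this is immediate, while for constants it reduces to $\WeightOf{f} \ge \Wzero$, the same (admissibility-free) side condition already present for $\Asum$. With these routine checks in place, the per-symbol analysis yields the result.
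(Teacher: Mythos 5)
Your proposal is correct and matches the paper's intended (and essentially omitted) argument: the corollary is an immediate consequence of \prettyref{thm:WPO simple} once one notes that weak monotonicity and weak simplicity are per-symbol conditions, inherited from the strictly monotone linear polynomials (coefficients $\ge 1$) for $\Pol$-status symbols and from the lemma establishing both properties for $\Amax$ for $\Max$-status symbols. Your bookkeeping remarks about the common carrier and constants are sound and consistent with the paper's conventions.
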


Trivially, $\WPOms$ \REV{subsumes}{encompasses} both $\WPOsum$ and $\WPOmax$.
Hence, we obtain the following more \REV{interesting}{influential} result:

\begin{theorem}\label{thm:WPO>=LPO+KBO}
	$\WPOms$ \REV{subsumes}{encompasses} both LPO and KBO.\qed
\end{theorem}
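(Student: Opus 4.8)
The plan is to derive this theorem by composing the two coincidence results already established, \prettyref{thm:WPO>=KBO} and \prettyref{thm:WPO>=LPO}, with the trivial observation that $\Ams$ specializes to both $\Asum$ and $\Amax$. First I would make precise the claim stated just before the theorem, namely that $\WPOms$ encompasses $\WPOsum$ and $\WPOmax$. For the former, instantiate $\Ams$ by setting $\WstatusOf{f} = \Pol$ and $\CoefOf{f,i} = 1$ for every $f$ and $i$; then $f_\Ams(\Seq{a_n}) = \WeightOf{f} + \sum_{i=1}^{n} a_i = f_\Asum(\Seq{a_n})$, and since all coefficients equal $1$ the ``coefficients at most $1$'' requirement of $\Ams$ is met. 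Hence this instance of $\GT_\WPOms$ is literally $\GT_\WPOsum$. For the latter, setting $\WstatusOf{f} = \Max$ for every $f$ turns $f_\Ams$ into the $\max$-interpretation of $\Amax$ (the $\Max$ branch uses no coefficients), so the corresponding instance of $\GT_\WPOms$ is $\GT_\WPOmax$.

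Next I would chain these specializations with the coincidence theorems. Given any KBO that is a genuine reduction order, its defining weight function satisfies $\Wzero > 0$ and is admissible for $\PGS$; by \prettyref{thm:WPO>=KBO} the induced $\GT_\WPOsum$ equals $\GT_\KBO$, and by the first specialization above this $\GT_\WPOsum$ is an instance of $\GT_\WPOms$. Thus every TRS oriented by KBO is oriented by $\WPOms$ with $\WstatusOf{\cdot} = \Pol$ and unit coefficients. Symmetrically, given any LPO with precedence $\PGS$ and status $\sigma$, take $\Wzero = 0$, $\WeightOf{f} = 0$ and $\PenOf{f,i} = 0$ everywhere; by \prettyref{thm:WPO>=LPO} the induced $\GT_\WPOmax$ equals $\GT_\LPO$, and by the second specialization this is an instance of $\GT_\WPOms$ with $\WstatusOf{\cdot} = \Max$. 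Hence every TRS oriented by LPO is oriented by $\WPOms$.

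I do not expect a genuine obstacle here: the argument is purely a reduction to previously proved equalities, and the only points needing care are bookkeeping. Specifically, one must note that the two directions use \emph{different} parameter settings of $\Ams$ (all-$\Pol$ with unit coefficients versus all-$\Max$ with vanishing weights and penalties), so the theorem asserts subsumption of the \emph{families} KBO and LPO by the family $\WPOms$, not that a single fixed instance dominates both. One should also confirm that the hypotheses of \prettyref{thm:WPO>=KBO}, namely $\Wzero > 0$ and admissibility of $\Weight$ for $\PGS$, are exactly the standing assumptions under which KBO is a reduction order, so that no orientable KBO instance is lost in the translation; the analogous check for \prettyref{thm:WPO>=LPO} is immediate, since its hypotheses merely fix the weight data to zero and impose no restriction on the precedence or status.
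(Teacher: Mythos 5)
Your proposal is correct and follows exactly the route the paper intends: the paper states just before the theorem that $\WPOms$ trivially subsumes $\WPOsum$ and $\WPOmax$ (via the all-$\Pol$/unit-coefficient and all-$\Max$ instantiations you describe), and then the theorem is an immediate consequence of Theorems \ref{thm:WPO>=KBO} and \ref{thm:WPO>=LPO}. You have merely made explicit the bookkeeping that the paper leaves implicit.
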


As far as we know, this is the first reduction order
that \REV{unifies}{unify} LPO and KBO.
The following example illustrates that
$\WPOms$ is strictly stronger than the union of $\WPOsum$ and $\WPOmax$.

\begin{example}
	Consider the following TRS $\RR_4$:
	\[
		\RR_4 \DefEq
		\begin{EqSet}
			\f(\f(x,y),z) &\to \f(x,\f(y,z))\\
			\g(\f(\ca,x),\cb) &\to \g(\f(x,\cb),x)
		\end{EqSet}
	\]
	If $\WstatusOf{\f} = \Max$, then the first rule requires $\PenOf{\f,2} = 0$.
	Under this restriction the second rule cannot be oriented.
	If $\WstatusOf{\f} = \Pol$, then the first rule is
	oriented iff $\CoefOf{f,1} = \CoefOf{f,2} = 1$ and $\sigma(f) = [1,2]$.
	On the other hand, the duplicating variable $x$ in the second rule requires
	$\WstatusOf{\g} = \Max$.
	Hence, $\RR_4$ is orientable by $\WPOms$ only if $\WstatusOf{\f} = \Pol$ and
	$\WstatusOf{\g} = \Max$.
\end{example}

\REV{%
	The following example
}{%
	Let us close this section with an example that
}%
suggests \REV{that }{}$\WPOms$
advances the state-of-the-art of automated termination proving.

\begin{example}
	The most powerful termination provers including
	\AProVEver{2013} and \TTTTver{1.11}
	fail to prove termination of the following TRS $\RR_5$:
	\[
		\RR_5 \DefEq
		\begin{EqSet}
			\f(\g(\g(x,\ca),\g(\cb,y))) &\to \f(\g(\g(\h(x,x),\cb),\g(y,\ca)))
		\\
			\g(x,y) &\to x
		\\
			\h(x,\h(y,z)) &\to y
		\end{EqSet}
	\]

	We show that $\WPOms$ with
	$\WstatusOf{\g} = \Pol$, $\WstatusOf{\h} = \Max$,
	$\WeightOf{\ca} > \WeightOf{\cb}$,
	$\WeightOf{\h} = \PenOf{\h,1} = \PenOf{\h,2} = 0$
	and $\sigma(\f) = \sigma(\g) = [1,2]$
	orients all the rules.
	For the first rule, applying case \prettyref{item:WPO-mono} twice
	it yields orienting $\g(x,\ca) \GT_\WPOms \g(\h(x,x),\cb)$ where
	case \prettyref{item:WPO-gt} applies.
	The other rules are trivially oriented.
\end{example}

\REV{%
\newcommand\mat[1]{
	\begin{pmatrix}
		#1
	\end{pmatrix}
}%
	In general,
	matrix interpretations cannot be combined with WPO,
	since a matrix interpretation is often not weakly simple.
	Consider the following TRS from \cite{EWZ08}:
	\[
		\RR_6 \DefEq \left\{ \f(\f(x)) \to \f(\g(\f(x)))\right\}
	\]
	which is shown terminating by the following matrix interpretation $\Amat$ \st
	\begin{align*}
		\f_\Amat(\vec{x}) &= \mat{1&1\\0&0} \cdot \vec{x} + \mat{0\\1}&
		\g_\Amat(\vec{x}) &= \mat{1&0\\0&0} \cdot \vec{x}
	\end{align*}
	However, $\g_\Amat$ is not weakly simple. For example,
	\[
		\g_\Amat(\mat{0\\1}) = \mat{1&0\\0&0} \cdot \mat{0\\1} = \mat{0\\0}
		\ngeq \mat{0\\1}
	\]
	Hence to unify the matrix interpretation with WPO,
	we have to further relax the weak simplicity condition.
	This is achieved in the next section by extending WPO to a reduction pair.
}{}%

\section{WPO as a Reduction Pair}\label{sec:pair}

In this section, we extend WPO to a reduction pair.
First we introduce the basic definition and prove its soundness,
and then \REV{we }{}present two refinements.
\REV{Afterwards}{Then} we introduce some instances of WPO as reduction pairs
and investigate relationships with existing reduction pairs.
In particular, matrix interpretations are
also subsumed by WPO as a reduction pair.

\subsection{WPO with Partial Status}\label{sec:pair definition}

In the preliminary version of this paper \cite{YKS13},
we simply applied argument filtering to obtain the reduction pair
$\Tp{{\GS_\WPO^\pi},{\GT_\WPO^\pi}}$ from WPO.
In this paper, we fully revise this approach and directly define a
reduction pair by incorporating
\emph{partial statuses} \cite{YKS13b} into WPO.
A partial status is a generalization of status
that admits \emph{non-permutations}.

\begin{definition}
	A \Def{partial status function} $\sigma$ is a mapping
	that assigns \REV{to }{}each $n$-ary symbol $f$ a list $\ListOf{\Seq{i_m}}$
	of (distinct) positions in $\{ 1, \dots, n \}$.
	We also view $\sigma(f)$ as the set $\SetOf{\Seq{i_m}}$
	for $\sigma(f) = \ListOf{\Seq{i_m}}$.
	A well-founded algebra $\A$ is \Def{(weakly) simple}
	\wrt $\sigma$ iff
	$f_\A$ is (weakly) simple in \REV{its }{}$i$-th argument
	for every $f \in \Sig$ and $i \in \sigma(f)$.
\end{definition}

Note that here $\sigma(f)$ need not be a permutation,
\REV{since}{and}
some positions may be ignored.
If every $\sigma(f)$ is a permutation,
then we say \REV{that }{}$\sigma$ is \Def{total}.
Conversely if $\sigma(f) = \List\Empty$ for every $f$, then
we call $\sigma$ the \Def{empty} status.

\begin{definition}[WPO with Partial Status]\label{def:WPOpS}
	Let $\A$ be a well-founded algebra and $\sigma$ a partial status.
	The pair $\Tp{\GS_\WPOAS,\GT_\WPOAS}$ of relations 
	is defined mutually recursively as follows:
	$x \GS_\WPOAS x$, and
	$s = f(\Seq{s_n}) \GSopt_\WPOAS t$ iff
	\begin{enumerate}
	\item\label{item:WPO-gt}
		$s \AGT t$, or
	\item\label{item:WPO-ge}
		$s \AGS t$ and
		\begin{enumerate}
		\item\label{item:WPO-simp}
			$\ForSome{i \in \sigma(f)} s_i \GS_\WPOAS t$, or
		\item\label{item:WPO-args}
			$t = g(\Seq{t_m})$,
			$\ForAll{j \in \sigma(g)} s \GT_\WPOAS t_j$ and either
			\begin{enumerate}
			\item\label{item:WPO-prec}
				$f\PGT g$ or
			\item\label{item:WPO-mono}
				$f\PSIM g$ and
				\(
					\AppPerm{\sigma(f)}{s}{n} \GSopt_\WPOAS^\Lex
					\AppPerm{\sigma(g)}{t}{m}
				\).
			\end{enumerate}
		\end{enumerate}
	\end{enumerate}
\end{definition}

\REV{%
	In the appendix we prove that 
	the pair $\Tp{\GS_\WPO,\GT_\WPO}$ is indeed a reduction pair.

}{}%
The effect of a partial status has similarity with that of
combining argument filtering and a standard total status.
Indeed, WPO with partial status subsumes
WPO with total status and
\REV{certain form of}{\emph{non-collapsing}} argument filtering.
\REV{%
	An argument filter $\pi$ is said to be \Def{non-collapsing} iff
	$\pi(f)$ is a list for every $f \in \Sig$.
}{}%

\begin{proposition}\label{prop:partial status >= argument filter}
	Let $\pi$ be a non-collapsing argument filter\REV{.}{, \REV{\ie,}{\ie}
	$\pi(f)$ is a list for every $f \in \Sig$.}
	For every $\Sig^\pi$-algebra $\A$ and total status $\sigma$ on $\Sig^\pi$,
	there exists an $\Sig$-algebra $\A'$ and a partial status $\sigma'$ on $\Sig$ \st
	\[
		\Tp{{\GS_\WPOAS^\pi},{\GT_\WPOAS^\pi}} =
		\Tp{{\GS_{\WPO(\A',\sigma')}},{\GT_{\WPO(\A',\sigma')}}}
	\]
\end{proposition}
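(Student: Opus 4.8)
The plan is to push the non-collapsing filter $\pi$ into the algebra and into the status, obtaining $\A'$ and $\sigma'$ that simulate the filtered comparison directly. For each $f \in \Sig_n$ write $\pi(f) = [\Seq{i_m}]$ and $\sigma(f) = [\Seq{j_m}]$ (a permutation of $\SetOf{1,\dots,m}$). I define $\A'$ on the same carrier and orderings as $\A$, with interpretation
\[
	f_{\A'}(\Seq{a_n}) \DefEq f_\A(a_{i_1},\dots,a_{i_m}),
\]
so that $f_{\A'}$ depends only on the arguments kept by $\pi$; and I define the partial status $\sigma'(f) \DefEq [i_{j_1},\dots,i_{j_m}]$, which retains exactly the positions selected by $\pi(f)$, reordered by $\sigma(f)$. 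Then $\sigma'(f)$ is a list of distinct positions in $\SetOf{1,\dots,n}$, so $\sigma'$ is a legitimate partial status, and $\A'$ inherits well-foundedness from $\A$; the precedence $\PGS$ is left unchanged, as $\Sig$ and $\Sig^\pi$ share function symbols.

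First I would establish the homomorphism identity $\widehat\alpha_{\A'}(s) = \widehat\alpha_\A(\pi(s))$ for every assignment $\alpha$ and every $s \in \Terms(\Sig,\Vars)$, by an easy induction on $s$ from the definitions of $f_{\A'}$ and of $\pi$. Quantifying over all assignments gives the term-level correspondence $s \gsopt_{\A'} t \iff \pi(s) \gsopt_\A \pi(t)$, and likewise for the strict orders. This immediately matches \prettyref{item:WPO-gt} and the precondition $s \AGS t$ of \prettyref{item:WPO-ge} on the two sides.

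The heart of the argument is the equivalence
\[
	\pi(s) \GSopt_{\WPO(\A,\sigma)} \pi(t) \iff s \GSopt_{\WPO(\A',\sigma')} t,
\]
which, by the definition of the filtered pair $\Tp{{\GS^\pi},{\GT^\pi}}$, is exactly the claim. I would prove it for $\GS$ and $\GT$ simultaneously by induction on $|s| + |t|$, noting that every recursive comparison in \prettyref{def:WPOpS} involves a proper subterm and hence strictly decreases this measure. Because $\pi$ is non-collapsing it preserves variables ($\pi(t) = x$ iff $t = x$) and root symbols ($\Root(\pi(t)) = \Root(t)$), so the base case $x \GS x$ and the requirement $t = g(\Seq{t_m})$ in \prettyref{item:WPO-args} transfer directly. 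For $s = f(\Seq{s_n})$, writing $\pi(s) = f(u_1,\dots,u_m)$ with $u_k = \pi(s_{i_k})$, the bijection $i_k \leftrightarrow k$ between $\sigma'(f)$ and $\sigma(f)$ (and similarly for $g$) aligns the clauses: \prettyref{item:WPO-simp} reads $\ForSome{i \in \sigma'(f)} s_i \GS t$ on one side and $\ForSome{k \in \sigma(f)} u_k \GS \pi(t)$ on the other; the side condition of \prettyref{item:WPO-args} reads $\ForAll{j \in \sigma'(g)} s \GT t_j$ versus the same condition on the arguments of $\pi(t)$; and the precedence tests \prettyref{item:WPO-prec} and \prettyref{item:WPO-mono} are literally identical, involving the same symbols under the same $\PGS$. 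In each subcase the induction hypothesis applies to the proper-subterm pairs being compared.

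The step that needs the most care is the lexicographic comparison in \prettyref{item:WPO-mono}. By construction the $\sigma'$-permuted argument list of $s$ is $[s_{i_{j_1}},\dots,s_{i_{j_m}}]$, whose $\ell$-th entry $s_{i_{j_\ell}}$ is sent by $\pi$ precisely to the $\ell$-th entry $u_{j_\ell} = \pi(s_{i_{j_\ell}})$ of the $\sigma$-permuted argument list of $\pi(s)$; the two lists therefore have equal length and correspond entrywise under $\pi$, and the same holds for $g$ and $t$. Since $\GSopt^\Lex$ is defined entirely from entrywise $\GS$/$\GT$ comparisons, and each entry is a proper subterm so the induction hypothesis applies, the two lexicographic comparisons agree. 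Checking that this entrywise correspondence is exactly what the composed status $\sigma'$ produces is the one place where the index bookkeeping must be tracked carefully; once it is done, all remaining cases are immediate. Finally I would observe that if $\A$ is weakly monotone and weakly simple with respect to the total $\sigma$, then $\A'$ is weakly monotone and weakly simple with respect to $\sigma'$, so the constructed pair is in fact a reduction pair.
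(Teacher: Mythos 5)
Your proposal is correct and follows essentially the same route as the paper: the same interpretation $f_{\A'}(\Seq{x_n}) = f_\A(\Seq{x_{\pi(f)}})$, the same composed status $\sigma'(f) = \pi(f)\star\sigma(f) = [i_{j_1},\dots,i_{j_m}]$, the homomorphism identity giving $s \gsopt_{\A'} t \iff \pi(s) \gsopt_\A \pi(t)$, and the same induction on $|s|+|t|$ with the entrywise correspondence of the permuted argument lists. If anything you are slightly more thorough, since you argue the equivalence in both directions while the paper spells out only the implication from $\GSopt^\pi_\WPOAS$ to $\GSopt_{\WPO(\A',\sigma')}$.
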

\begin{proof}
	Let us define the interpretation of each $f \in \Sig_n$ in $\A'$ by
	$f_{\A'}(\Seq{x_n}) \DefEq f_\A(\Seq{x_{\pi(f)}})$.
	Then obviously, $\pi(s) \gsopt_\A \pi(t)$ iff $s \gsopt_{\A'} t$.
	Moreover, we define $\sigma'(f) $ by $\pi(f) \star \sigma(f)$,
	where $\star$ is a left-associative operator defined by
	\[
		[a_1,\dots,a_n] \star [i_1,\dots,i_{n'}] \DefEq [a_{i_1},\dots,a_{i_{n'}}]
	\]

	Now we verify that 
	$s \GSopt_\WPOAS^\pi t$ implies $s \GSopt_{\WPO(\A',\sigma')} t$
	by induction on $|s| + |t|$.
	If $s \in \Vars$, then
	$s = \pi(t) = t$ since $\pi$ is non-collapsing, and hence
	$s \GS_{\WPO(\A',\sigma')} t$.
	Suppose $s = f(\Seq{s_n})$.
	We proceed \REV{by}{to} case
	\REV{analysis on the}{splitting for} derivation of $\pi(s) \GSopt_\WPOAS \pi(t)$.
	\begin{enumerate}
	\item
		Suppose $\pi(s) \AGT \pi(t)$.
		We obviously have $s >_{\A'} t$ and hence $s \GT_{\WPO(\A',\sigma')} t$.
	\item
		Suppose $\pi(s) \AGS \pi(t)$.
		We obviously have $s \gs_{\A'} t$.
		\begin{enumerate}
		\item
			Suppose that $\pi(s) \GSopt_\WPOAS \pi(t)$ is derived by case \prettyref{item:WPO-simp}.
			Then we have $s_i \GS_\WPOAS^\pi t$ 
			for some $i \in [1,\dots,n] \star \pi(f)$.
			By the induction hypothesis we have $s_i \GS_{\WPO(\A',\sigma')} t$,
			and since $i \in \sigma'(f)$,
			$s \GT_{\WPO(A',\sigma')} t$ by case \prettyref{item:WPO-simp}.
		\item
			Suppose that $\pi(s) \GSopt_\WPOAS \pi(t)$ is derived by case \prettyref{item:WPO-args}.
			Then we have $t = g(\Seq{t_m})$ and $s \GT_\WPOAS^\pi t_j$
			for every $j \in [1,\dots,m] \star \pi(g)$.
			By the induction hypothesis we have $s \GT_{\WPO(\A',\sigma')} t_j$,
			for all $j \in \sigma'(f)$.
			If furthermore $f \PGT g$, then immediately $s \GT_{\WPO(\A',\sigma')} t$
			by case \prettyref{item:WPO-prec}.
			If case \prettyref{item:WPO-mono} applies,
			then we obtain
			\[
				[\Seq{s_n}] \star \pi(f) \star \sigma(f) \GSopt_\WPOAS^{\pi\;\Lex}
				[\Seq{t_m}] \star \pi(g) \star \sigma(g)
			\] 
			By the induction hypothesis and definition of $\sigma'$ we obtain
			\[
				\AppPerm{\sigma'(f)}{s}{n} \GSopt_{\WPO(\A',\sigma')} \AppPerm{\sigma'(g)}{t}{m}
				\tag*{\qed}
			\]
		\end{enumerate}
	\end{enumerate}
\end{proof}

The advantage of partial status over argument filtering
is due to the \emph{weights} of ignored arguments.
This is illustrated by the following example.

\begin{example}\label{ex:predecessor}\cite{YKS13b}
	Consider
	a DP problem that induces the following constrains:
	\begin{align*}
		\F(\s(x)) &\GT \F(\p(\s(x)))&
		\p(\s(x)) &\GS x
	\end{align*}
	In order to
\REV{satisfy the first constraint}{strictly orient the rule in $\PP$}
	by any simplification order,
	the argument of $\p$ must be filtered.
	However,
\REV{the second constraint cannot be satisfied}{the rule in $\RR$ cannot be weakly oriented}
	under such an argument filtering.

	On the other hand, the DP problem can be shown finite
	using $\WPO$ with partial status \st
	$\s_\A(x) > x$, $\F_\A(x) = \p_\A(x) = x$,
	$\sigma(\F) = \sigma(\s) = [1]$, $\sigma(\p) = \List\Empty$, and
	$\s \PGT \p$.
	We have
	$\F(\s(x)) \GT_\WPO \F(\p(\s(x)))$ because of cases
	\prettyref{item:WPO-mono} and \prettyref{item:WPO-prec}, and
	$\p(\s(x)) \GS_\WPO x$ because of case \prettyref{item:WPO-gt}.
\end{example}

\subsection{Refinements}\label{sec:pair refinements}

As we will see in \prettyref{sec:experiments pair},
the reduction pair processor induced by \prettyref{def:WPOpS} is
already powerful in practice.
However in theory, 
\REV{the }{}WPO reduction pair is not a proper extension of the
underlying interpretation, \eg,
$x \GS_\A g(x)$ if $g_\A(x) = x$,
but $x \GS_\WPO g(x)$ cannot hold.
Hence we refine the definition of $\GS_\WPO$ to properly subsume
the underlying interpretation.

\REV{%
	Note that in the above case,
	assuming $x \GS_\WPO g(x)$ does not cause a problem
	if $g$ is \emph{least} (\ie, $f \PGS g$ for every $f \in \Sig$) and
	$\sigma(g) = \List\Empty$.
}{}%
\begin{proposition}\label{prop:least}
	Let $g \in \Sig$ \st $f \PGS g$ for every $f \in \Sig$ and $\sigma(g) = \List\Empty$.
	Then $x \AGS t = g(\Seq{t_m})$ implies
	$s \GS_\WPO t\Subst{x \TO s}$ for arbitrary non-variable
	\REV{terms }{}$s = f(\Seq{s_n})$.
\end{proposition}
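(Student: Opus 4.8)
The plan is to reduce the goal to a single application of case \prettyref{item:WPO-args} of \prettyref{def:WPOpS}. Write $\theta = \Subst{x \TO s}$ and $t' = t\theta = g(t_1\theta,\dots,t_m\theta)$, so that the root of $t'$ is still $g$. The first and only substantial step is to establish the algebra inequality $s \AGS t'$. This follows from the hypothesis $x \AGS t$ by stability of $\AGS$ under substitution. Concretely, for an arbitrary assignment $\beta$ define $\gamma$ by $\gamma(y) \DefEq \widehat\beta(y\theta)$; the substitution lemma for homomorphic extensions gives $\widehat\beta(t') = \widehat\gamma(t)$ and $\widehat\beta(s) = \widehat\beta(x\theta) = \gamma(x)$. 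Since $x \AGS t$ specializes, at the assignment $\gamma$, to $\gamma(x) \gs \widehat\gamma(t)$, we get $\widehat\beta(s) \gs \widehat\beta(t')$ for every $\beta$, that is $s \AGS t'$.

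With $s \AGS t'$ established, the precondition of case \prettyref{item:WPO-ge} is met, and I would verify case \prettyref{item:WPO-args}. Its universally quantified premise $\ForAll{j \in \sigma(g)} s \GT_\WPO t_j\theta$ is vacuously true because $\sigma(g) = \List\Empty$. For the precedence check, $f = \Root(s) \in \Sig$, so the minimality of $g$ yields $f \PGS g$, hence $f \PGT g$ or $f \PSIM g$. If $f \PGT g$, subcase \prettyref{item:WPO-prec} closes the goal immediately. If $f \PSIM g$, subcase \prettyref{item:WPO-mono} requires $\AppPerm{\sigma(f)}{s}{n} \GS_\WPO^\Lex \AppPerm{\sigma(g)}{t'}{m}$; but $\sigma(g) = \List\Empty$ makes the right-hand list empty, and in the lexicographic extension every list is weakly above the empty list (take the empty common prefix). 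In either case $s \GS_\WPO t'$, as desired.

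The only genuine content is the first step, the stability argument producing $s \AGS t'$; everything afterwards is a direct traversal of \prettyref{def:WPOpS} exploiting $\sigma(g) = \List\Empty$ and the minimality of $g$. I expect the main subtlety to be purely bookkeeping in the substitution lemma---keeping the head variable $x$, whose image becomes $\widehat\beta(s)$, separate from the remaining variables of $t$---rather than any real obstacle. Note that the argument never passes through the strict relation $\AGT$, so no inclusion between $\AGT$ and $\AGS$ is needed, and case \prettyref{item:WPO-gt} is not invoked.
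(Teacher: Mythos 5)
Your proposal is correct and follows essentially the same route as the paper's own proof: establish $s \AGS t\Subst{x \TO s}$ from $x \AGS t$ by stability of the algebraic order, then close via case \prettyref{item:WPO-args} using $f \PGS g$ from minimality and the fact that the empty status $\sigma(g) = \List\Empty$ makes both the recursive premise and the lexicographic comparison trivial. The paper merely states the first step as immediate ``by the definition of $\AGS$'' where you spell out the assignment-level substitution lemma; there is no substantive difference.
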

\begin{proof}
	By the definition of $\AGS$, we have $s \AGS t\Subst{x \TO s}$.
	Since $g$ is \REV{least}{minimal} \wrt $\PGS$, we have $f \PGS g$.
	Moreover, since $\sigma(g) = \List\Empty$, we have 
	$\AppPerm{\sigma(f)}{s}{n} \GS_\WPO^\Lex \AppPerm{\sigma(g)}{t}{m} = \List\Empty$.
	\qedhere
\end{proof}

\prettyref{prop:least} suggests a refined definition of
$s \GS_\WPO t$ by adding
the following subcase in case \prettyref{item:WPO-ge} of \prettyref{def:WPOpS}
(note that $s \AGS t$ is ensured in this case):
\begin{enumerate}
\setcounter{enumi}{2}
\item[]
\begin{enumerate}[(2a)]
\setcounter{enumii}{2}
\item\label{item:WPO-min}
	$s \in \Vars$ and $t = g(\Seq{t_m})$ \st
	$\sigma(g) = \List\Empty$ and $g$ is \REV{least}{minimal} \wrt $\PGS$.
\end{enumerate}
\end{enumerate}

Similar refinements are proposed for KBO
\REV{\cite{KV03,ST13}}{\cite{KV03}} and for
RPO \cite{TAN12}, when $t$ is a \REV{least}{minimal} constant.
Our version is more general since $t$ need not be a constant.

\begin{example}\cite{YKS13b}
	Consider a DP problem that induces the following constraints:
	\begin{align*}
		\F(\s(x),y) &\GT \F(\p(\s(x)),\p(y))
	\\
		\F(x,\s(y)) &\GS \F(\p(x),\p(\s(y)))
	\\
		\p(\s(x))	&\GS x
	\end{align*}
	Let
	$\sigma(\p) = \List\Empty$, $\sigma(\F) = [1]$,
	$\s_\A(x) = x + 1$, $\p_\A(x) = x$, $\F_\A(x,y) = x$ and
	$\p$ be \REV{least}{minimal} \wrt $\PGS$.
	The first constraint is strictly oriented by case \prettyref{item:WPO-prec}.
	For the second constraint, it yields $x \GS_\WPO \p(x)$,
	for which case (2c) applies.
	Note that the argument of $\p$ cannot be filtered by an argument filter,
	because of the third constraint.
	Hence the refinements of \REV{\cite{KV03,ST13}}{\cite{KV03}} or \cite{TAN12} do not work for this example.
\end{example}

Moreover, a further refinement is also possible for WPO,
when the right\REV{-}{ }hand side is a variable.
\REV{%
	Note that $f(x) \AGS x$ does not imply $f(x) \GS_\WPO x$ if $\sigma(f) = \List\Empty$.
	Nonetheless,
	$f(x) \GS_\WPO x$ can be assumed if
	$f$ is greatest, and
	moreover $f \PSIM g$ implies $\sigma(g) = \List\Empty$.
	The latter condition is crucial,
	since $g(x) \GT_\WPO f(g(x))$ if $f \PSIM g$ and $g = [1]$.
}{}%

\begin{proposition}\label{prop:greatest}
	Suppose that $\A$ is strictly simple \wrt $\sigma$, and
	$f \in \Sig$ \st
	either $f \PGT g$\REV{,}{} or $f \PSIM g$ and $\sigma(g) = \List\Empty$ for every $g \in \Sig$.
	Then $s = f(\Seq{s_n}) \AGS y$ implies
	$s\Subst{y \TO t} \GS_\WPO t$ for arbitrary non-variable
	\REV{term }{}$t = g(\Seq{t_m})$.
\end{proposition}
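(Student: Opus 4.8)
The plan is to prove the weak relation $s\Subst{y \TO t} \GS_\WPO t$ directly through case~\prettyref{item:WPO-ge} of \prettyref{def:WPOpS}. Writing $s' = s\Subst{y \TO t} = f(\Seq{s_n}\Subst{y\TO t})$, the first step is to verify the guard $s' \AGS t$ of that case. Since $s \AGS y$ means $\widehat\alpha(s) \gs \alpha(y)$ for \emph{every} assignment $\alpha$, instantiating the universally quantified assignment at the one that sends $y$ to $\widehat\alpha(t)$ and agrees with $\alpha$ elsewhere, together with the substitution lemma for the homomorphic extension, gives $\widehat\alpha(s') \gs \widehat\alpha(t)$ for every $\alpha$, hence $s' \AGS t$. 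This already satisfies the guard of case~\prettyref{item:WPO-ge}, so there is no need to separately treat case~\prettyref{item:WPO-gt}. Since the root of $s'$ is $f$ and the root of $t$ is $g$, the hypothesis on $f$ gives either $f \PGT g$ or $f \PSIM g$ with $\sigma(g) = \List\Empty$, and I aim to land in \prettyref{item:WPO-args}, closing with subcase \prettyref{item:WPO-prec} or \prettyref{item:WPO-mono} respectively.

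Next I would discharge the side condition $\ForAll{j \in \sigma(g)} s' \GT_\WPO t_j$ of \prettyref{item:WPO-args}. For each $j \in \sigma(g)$, strict simplicity of $\A$ \wrt $\sigma$ gives $g_\A(\ldots,\widehat\alpha(t_j),\ldots) > \widehat\alpha(t_j)$ for all $\alpha$, that is $t \AGT t_j$; combining with $s' \AGS t$ through compatibility of $\AGS$ and $\AGT$ yields $s' \AGT t_j$, whence $s' \GT_\WPO t_j$ by \prettyref{item:WPO-gt}. When $f \PGT g$ this is all that is required, and \prettyref{item:WPO-prec} applies, in fact giving the strict relation. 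When instead $f \PSIM g$, the hypothesis forces $\sigma(g) = \List\Empty$, so the side condition holds vacuously and no simplicity argument is needed there.

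For the remaining subcase $f \PSIM g$ I would conclude via \prettyref{item:WPO-mono}. The crucial observation is that taking $g := f$ in the hypothesis (using $f \PSIM f$, since $f \PGT f$ is impossible) forces $\sigma(f) = \List\Empty$ as well; hence both $\AppPerm{\sigma(f)}{s'}{n}$ and $\AppPerm{\sigma(g)}{t}{m}$ reduce to the empty list, and the weak lexicographic comparison of two empty lists holds, establishing $s' \GS_\WPO t$. I expect the only delicate point to be this identification $\sigma(f) = \List\Empty$, without which the lexicographic premise of \prettyref{item:WPO-mono} would not be guaranteed. The use of \emph{strict} rather than weak simplicity is likewise essential, since it is precisely what upgrades $t \AGS t_j$ to the strict $t \AGT t_j$ needed for the $\GT_\WPO$ side condition when $f \PGT g$; the remaining verifications are routine bookkeeping with the substitution lemma and compatibility.
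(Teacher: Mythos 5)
Your proposal is correct and follows essentially the same route as the paper's proof: establish $s\Subst{y \TO t} \AGS t$, use strict simplicity plus compatibility to get $s\Subst{y \TO t} \AGT t_j$ for $j \in \sigma(g)$ and hence the side condition of \prettyref{item:WPO-args}, then close by \prettyref{item:WPO-prec} or \prettyref{item:WPO-mono}. Your additional observation that instantiating the hypothesis at $g := f$ forces $\sigma(f) = \List\Empty$, making the lexicographic premise the trivial $\List\Empty \GS_\WPO^\Lex \List\Empty$, is a valid extra detail the paper leaves implicit.
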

\begin{proof}
	By the definition of $\AGS$, we have $s\Subst{y \TO t} \AGS t$.
	Moreover by the strict simplicity,
	$s\Subst{y \TO t} \AGS t \AGT t_j$ for all $j \in \sigma(g)$.
	Hence we get $s \GT_\WPO t_j$.
	If $f \PGT g$, then $s\Subst{y \TO t} \GT_\WPO t$ by case \prettyref{item:WPO-prec}.
	If $f \PSIM g$ and $\sigma(g) = \List\Empty$, then
	$s \Subst{y \TO t} \GS_\WPO t$ by case \prettyref{item:WPO-mono}.
	\qedhere
\end{proof}

Provided $\A$ is strictly simple \wrt $\sigma$,
\prettyref{prop:greatest} suggests a refinement of $s \GS_\WPO t$
by adding the following subcase in case \prettyref{item:WPO-ge}:
\begin{enumerate}
\setcounter{enumi}{2}
\item[]
\begin{enumerate}[(2a)]
\setcounter{enumii}{3}
\item\label{item:WPO-max}
	$s = f(\Seq{s_n})$ and $t \in \Vars$ \st for every $g \in \Sig$,
	either $f \PGT g$ or $f \PGS g$ and $\sigma(g) = \List\Empty$.
\end{enumerate}
\end{enumerate}

Note that an arbitrary algebra is strictly simple \wrt the empty status.
Hence
\REV{%
	WPO with the refinements can subsume even matrix interpretations. We
}{%
	we
}%
obtain the following result:

\begin{theorem}\label{thm:WPO refine}
\REV{%
	Consider an instance of WPO that is induced by
	\begin{itemize}
	\item
		a well-founded algebra $\A$ that is \emph{non-trivial},
		\ie, there exist $a,b \in A$ \st $a \ngs b$,
	\item
		the empty status function $\sigma$, and
	\item
		the quasi-precedence $\PGS$ \st
		$f \PGS g$ for arbitrary $f, g \in \Sig$.
	\end{itemize}\noindent
}{%
	Let $\A$ be a non-trivial well-founded algebra,
	$\sigma$ the empty status function and $\PGS$ the quasi\REV{-}{ }precedence
	$\Sig^2$.
}%
	Then, $\Tp{{\AGS},{\AGT}} = \Tp{{\GS_\WPO},{\GT_\WPO}}$ after the refinements.
\end{theorem}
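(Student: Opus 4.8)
The plan is to prove the two component equalities $\GT_\WPO = \AGT$ and $\GS_\WPO = \AGS$ separately, exploiting how drastically the recursion of \prettyref{def:WPOpS} collapses in this degenerate setting. Since $\PGS = \Sig^2$, its strict part $\PGT$ is empty and its equivalence part $\PSIM$ relates every pair; since $\sigma$ is empty, every $\sigma(f) = \List\Empty$. Consequently case \prettyref{item:WPO-simp} can never fire (there is no $i \in \sigma(f)$), the side condition $\ForAll{j \in \sigma(g)} s \GT_\WPO t_j$ of case \prettyref{item:WPO-args} is vacuous, subcase \prettyref{item:WPO-prec} never applies, and in subcase \prettyref{item:WPO-mono} the status lists $\AppPerm{\sigma(f)}{s}{n}$ and $\AppPerm{\sigma(g)}{t}{m}$ are both empty. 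Thus the only surviving comparison between two non-variable terms is a comparison of empty lists, and $\List\Empty \GT_\WPO^\Lex \List\Empty$ is false while $\List\Empty \GS_\WPO^\Lex \List\Empty$ is true.

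For $\GT_\WPO = \AGT$ (which the refinements do not touch, since they enlarge only $\GS_\WPO$): the inclusion $\AGT \subseteq \GT_\WPO$ holds immediately by case \prettyref{item:WPO-gt} whenever the left-hand side is non-variable. For the converse, a strict step out of case \prettyref{item:WPO-ge} is impossible by the collapse above — only \prettyref{item:WPO-mono} could survive, and it would require the false comparison $\List\Empty \GT_\WPO^\Lex \List\Empty$ — so $s \GT_\WPO t$ forces $s \AGT t$. It remains to reconcile variable left-hand sides: $\GT_\WPO$ relates none by construction, and I would show $\AGT$ relates none either by evaluating at a $>$-minimal element. Concretely, well-foundedness of $>$ on the (nonempty, by non-triviality) carrier yields a minimal $m$; the assignment sending $x$ to $m$ witnesses $x \ANGT t$ for every $t$, because $m > \widehat\alpha(t)$ would contradict minimality of $m$.

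For $\GS_\WPO = \AGS$ I would argue by a four-way split on whether each side is a variable. If both are non-variable, $s \AGS t$ triggers case \prettyref{item:WPO-mono} through $\List\Empty \GS_\WPO^\Lex \List\Empty$. If the left is a variable and the right is not, refinement \prettyref{item:WPO-min} applies exactly as in \prettyref{prop:least}, since every symbol is least and carries empty status. If the left is non-variable and the right is a variable, refinement \prettyref{item:WPO-max} applies as in \prettyref{prop:greatest}, since $\A$ is vacuously strictly simple with respect to the empty status and every symbol is greatest. If both are variables, the base case $x \GS_\WPO x$ applies, and here non-triviality is essential: picking $a \ngs b$ and the assignment $x \mapsto a$, $y \mapsto b$ shows $x \ANGS y$ for distinct variables, so $\AGS$ on variable pairs is exactly the diagonal, matching $\GS_\WPO$. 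The reverse inclusion $\GS_\WPO \subseteq \AGS$ is uniform: the base case is reflexivity of $\AGS$, case \prettyref{item:WPO-gt} uses the standard inclusion $\AGT \subseteq \AGS$, and every other case, including both refinements, sits under the precondition $s \AGS t$ of case \prettyref{item:WPO-ge}.

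The routine collapse of the recursion is not where the difficulty lies; the delicate points are the variable boundary cases. I expect the main obstacle to be bookkeeping the two refinements correctly, namely checking that their side conditions — leastness for \prettyref{item:WPO-min}, and greatestness together with strict simplicity with respect to $\sigma$ for \prettyref{item:WPO-max} — are all discharged by the degenerate precedence and empty status, and then pairing them with the two extremal algebra arguments (emptiness of $\AGT$ on variable left-hand sides via a minimal carrier element, and non-triviality forcing $\ANGS$ between distinct variables) so that $\GS_\WPO$ and $\GT_\WPO$ coincide with $\AGS$ and $\AGT$ exactly.
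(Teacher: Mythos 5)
Your proposal is correct and follows essentially the same route as the paper's proof: the trivial inclusions ${\AGT}\subseteq{\GT_\WPO}$ and ${\GS_\WPO}\subseteq{\AGS}$, the observation that the empty status and total quasi-precedence disable every subcase of \prettyref{item:WPO-ge} for the strict order, and a case split on the variable/non-variable structure of $s$ and $t$ for ${\AGS}\subseteq{\GS_\WPO}$, using non-triviality for the variable--variable case, the empty lexicographic comparison for the non-variable case, and refinements \prettyref{item:WPO-min} and \prettyref{item:WPO-max} for the mixed cases. You are in fact slightly more explicit than the paper in justifying that $x \ANGT t$ for variables $x$ (via a minimal carrier element), a detail the paper leaves implicit in its claim that ${\AGT}\subseteq{\GT_\WPO}$ is obvious.
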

\begin{proof}
	From the definition, it is obvious that ${\AGT} \subseteq {\GT_\WPO}$ and
	${\GS_\WPO} \subseteq {\AGS}$.
	By the assumptions,
	cases \prettyref{item:WPO-prec} and \prettyref{item:WPO-mono} of \prettyref{def:WPOpS}
	cannot apply for $\GT_\WPO$. Hence we easily obtain ${\GT_\WPO} \subseteq {\AGT}$.

	Now suppose $s \gs_\A t$ and let us show $s \GS_\WPO t$.
	The proof proceeds \REV{by}{to} case \REV{analysis on}{splitting of}
	the structure of $s$ and $t$.
	\begin{itemize}
	\item
		If $s, t \in \Vars$, then from non-triviality
		we have $s = t$, and hence $s \GS_\WPO t$.
	\item
		Suppose $s = f(\Seq{s_n})$ and $t = f(\Seq{t_m})$.
		Since $f \PGT g$ never hold\REV{s}{},
		case \prettyref{item:WPO-prec} of \prettyref{def:WPOpS} can be ignored.
		Moreover, since $f \PGS g$ and $\List\Empty \GS_\WPO^\Lex \List\Empty$,
		$s \AGS t$ implies $s \GS_\WPO t$.
	\item
		If either $s$ or $t$ is a variable, then
		\REV{}{either }refinement \prettyref{item:WPO-min} or \prettyref{item:WPO-max}
		is satisfied. Hence $s \GS_\WPO t$.
		\qedhere
	\end{itemize}
\end{proof}

\REV{%
	In the appendix, we prove soundness of WPO with the refinements
	\prettyref{item:WPO-min} and \prettyref{item:WPO-max}.

	\begin{theorem}[Soundness]\label{thm:WPO pair}
		If $\A$ is weakly monotone and
		weakly simple \wrt $\sigma$, then
		$\Tp{\GS_\WPO,\GT_\WPO}$ forms a reduction pair.
	\end{theorem}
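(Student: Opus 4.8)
The plan is to verify directly the four defining properties of a reduction pair for the refined relations of \prettyref{def:WPOpS}: that $\GS_\WPO$ is a monotonic and stable quasi-order, that $\GT_\WPO$ is a stable well-founded strict order, and that the two are compatible. One structural observation organizes the work. The refinements \prettyref{item:WPO-min} and \prettyref{item:WPO-max} are declared as subcases producing $s \GS_\WPO t$ (matching Propositions \ref{prop:least} and \ref{prop:greatest}), so they enlarge only the \emph{weak} relation $\GS_\WPO$, whereas the strict relation $\GT_\WPO$ is exactly the one of \prettyref{def:WPOpS}. Hence the new content lies in re-establishing the quasi-order, stability, monotonicity and, above all, compatibility for the \emph{larger} $\GS_\WPO$. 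Reflexivity of $\GS_\WPO$, and transitivity and irreflexivity of both relations, I would discharge first by a simultaneous induction on the combined term size, reducing each case to the corresponding property of $\AGS$/$\AGT$ (which hold since these are defined by quantifying over all assignments into the well-founded algebra $\A$), of the quasi-precedence $\PGS$, and of the lexicographic extension.

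Second, I would treat stability and monotonicity of $\GS_\WPO$. Stability of the non-refined cases is routine, since $\AGS$ and $\AGT$ are stable by construction and cases \prettyref{item:WPO-simp}--\prettyref{item:WPO-mono} commute with substitution. The delicate point is that a refinement step need not survive instantiation verbatim: \prettyref{item:WPO-min} requires $s \in \Vars$, but $s\theta$ may be compound. This is precisely what Propositions \ref{prop:least} and \ref{prop:greatest} repair, as they show that an instance of \prettyref{item:WPO-min} or \prettyref{item:WPO-max} is absorbed by the ordinary cases \prettyref{item:WPO-prec}/\prettyref{item:WPO-mono} after substitution, so $\GS_\WPO$ remains stable. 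For monotonicity, given $s \GS_\WPO t$ I first note $s \AGS t$ (every case of $\GS_\WPO$, including the refinements and the base $x \GS_\WPO x$, entails it), so weak monotonicity of $\A$ yields $f(\dots,s,\dots) \AGS f(\dots,t,\dots)$; the two terms share the head $f$, so I close with \prettyref{item:WPO-mono}, using reflexivity of $\GS_\WPO$ at the unchanged positions and $s \GS_\WPO t$ at the changed one for the lexicographic test, and discharging the side condition of \prettyref{item:WPO-args} by \prettyref{item:WPO-simp} together with weak simplicity of $\A$ \wrt $\sigma$.

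Third is compatibility, ${\GS_\WPO}\circ{\GT_\WPO}\circ{\GS_\WPO}\subseteq{\GT_\WPO}$, which I would prove by induction on the total size of the three terms, case-splitting on the derivations of the two weak factors and the strict factor. The factors inherited from \prettyref{def:WPOpS} combine as in the usual path-order pattern, using compatibility of $\AGS$ with $\AGT$, of $\PGS$ with $\PGT$, and of the lexicographic extension. The genuinely new point is a weak factor justified by a refinement, and here the side conditions are exactly what make matters go through. For example, if the left factor is $x \GS_\WPO g(\Seq{t_m})$ by \prettyref{item:WPO-min}, then $g$ is minimal \wrt $\PGS$ and $\sigma(g)=\List\Empty$, which rules out cases \prettyref{item:WPO-simp}, \prettyref{item:WPO-prec} and \prettyref{item:WPO-mono} for the adjacent strict factor and forces it into \prettyref{item:WPO-gt}, i.e.\ a strict $\A$-drop; compatibility of $\AGT$ with $\AGS$ then propagates this drop across the refinement and the trailing weak factor, giving $x \AGT v$ and hence $x \GT_\WPO v$. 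The case of a right factor obtained by \prettyref{item:WPO-max} is analogous, now invoking the maximality of the head together with the strict simplicity of $\A$ \wrt $\sigma$ that \prettyref{item:WPO-max} presupposes.

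The main obstacle is well-foundedness of $\GT_\WPO$, which by the observation above coincides with the strict order of \prettyref{def:WPOpS}. Since $\GT_\WPO$ is stable, it suffices to exclude infinite descending chains of \emph{ground} terms, extending $\Sig$ by a fresh constant if necessary so that ground terms exist. For ground $s \GT_\WPO t$ one has $s \AGS t$ in every case, with \prettyref{item:WPO-gt} giving $s \AGT t$. I would then show every ground term strongly normalizing by a lexicographic induction on the triple consisting of its interpretation in $\A$ (under the well-founded $>$), its root symbol (under the well-founded $\PGT$), and its list of $\sigma$-statused arguments (under the lexicographic extension of $\GT_\WPO$ restricted to already-normalizing terms): \prettyref{item:WPO-gt} decreases the first component; \prettyref{item:WPO-simp} passes to a statused subterm $s_i$ with $s \AGS s_i$ by weak simplicity \wrt $\sigma$; \prettyref{item:WPO-prec} drops the root in $\PGT$ while keeping the $\A$-value weakly fixed; and \prettyref{item:WPO-mono} decreases the argument list lexicographically. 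I expect this to be the hard part, precisely because weak simplicity \wrt $\sigma$ yields only $\AGS$ (not strict) for statused subterms, so the three layers cannot be collapsed. It is worth stressing that the tempting shortcut of reducing to the reduction order of \prettyref{thm:WPO simple} through the argument-filtering correspondence of \prettyref{prop:partial status >= argument filter} does \emph{not} apply, since a partial status, unlike an argument filter, still lets $\A$ count the weights of ignored arguments (\cf \prettyref{ex:predecessor}); a direct well-foundedness argument is therefore unavoidable.
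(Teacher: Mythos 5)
Your overall decomposition matches the paper's proof quite closely: the refinements \prettyref{item:WPO-min} and \prettyref{item:WPO-max} enlarge only $\GS_\WPO$; stability of the refined cases is exactly what Propositions \ref{prop:least} and \ref{prop:greatest} supply; monotonicity of $\GS_\WPO$ is closed via case \prettyref{item:WPO-mono}, with weak simplicity \wrt $\sigma$ discharging the side condition of \prettyref{item:WPO-args}; and compatibility and transitivity go by induction on combined term size, with the refinement cases forced into case \prettyref{item:WPO-gt} by their side conditions (this is precisely how Lemma~\ref{lem:WPO compatible} handles them). All of that part is sound.

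The gap is in well-foundedness. Your induction on the triple $\Tp{\widehat\alpha(s),\Root(s),\AppPerm{\sigma(f)}{s}{n}}$ does not go through in case \prettyref{item:WPO-simp}: there you must conclude $t \in \SN$ from $s_i \GS_\WPO t$ for some statused argument $s_i$, which requires $s_i \in \SN$; but weak simplicity \wrt $\sigma$ gives only $\widehat\alpha(s) \gs \widehat\alpha(s_i)$, possibly with equality, and $\Root(s_i)$ need not lie below $f$ in $\PGT$, so $s_i$ is not smaller in your measure and the induction hypothesis does not apply to it --- you flag this as ``the hard part'' but the measure as stated does not resolve it. Moreover your third component, $\GT_\WPO^\Lex$ on argument lists, is well-founded only on lists of terms already known to be in $\SN$, which is exactly what is being proved, so the induction order itself is not a priori well-founded. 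The paper's Lemma~\ref{lem:WPO compose} repairs both points at once: the claim is stated under the explicit hypothesis that $s_i \in \SN$ for every $i \in \sigma(f)$ (so case \prettyref{item:WPO-simp} is immediate by compatibility, and the third component of the measure is legitimately well-founded on such lists), that hypothesis is discharged by a separate outer structural induction on $|s|$ in Lemma~\ref{lem:WPO well-founded}, and a fourth component $|t|$ is appended to the measure so that in case \prettyref{item:WPO-args} one first establishes $t_j \in \SN$ for the statused arguments of $t$ before recursing on $t$ itself. Your preliminary reduction to ground terms via stability is harmless but unnecessary: $\AGT$ is well-founded on arbitrary terms, since any single fixed assignment maps a descending $\AGT$-chain to a descending chain in $A$.
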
\unskip
}{%
}%
\subsection{Comparison with Other Reduction Pairs}
\label{sec:pair instances}

In this section,
we investigate some relationships between
instances of WPO and existing reduction pairs.

In \prettyref{def:WPOpS}, it is obvious that
\REV{the }{}induced $\GT_\WPO$ is identical to that induced by \prettyref{def:WPO},
if we choose a \emph{total} status $\sigma$.
Hence
Theorems \ref{thm:WPO>=KBO}, \ref{thm:WPO>=TKBO} and \ref{thm:WPO>=LPO}
imply that WPO \REV{subsumes}{encompasses} KBO, TKBO and LPO \resp also as a reduction pair.

On the other hand,
\prettyref{thm:WPO>=POLO} does not imply
that $\WPOpol$ subsumes POLO as a reduction pair,
since the ``weak-part'' $\ge_\Apol$
is not considered in the theorem.
Nonetheless, after the refinements in \prettyref{sec:pair refinements},
we obtain the following result from \prettyref{thm:WPO refine}:

\begin{corollary}\label{cor:WPO>=POLO pair}
	\REV{}{As a reduction pair,}%
	$\WPOpol$
\REV{%
	with the refinements \prettyref{item:WPO-min} and \prettyref{item:WPO-max}
	subsumes
}{%
	encompasses
}%
	POLO.\qed
\end{corollary}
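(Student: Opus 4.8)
The plan is to obtain this corollary as a direct specialization of \prettyref{thm:WPO refine}, taking the polynomial interpretation itself as the underlying algebra. Recall that POLO as a reduction pair is the pair $\Tp{\ge_\Apol, >_\Apol}$ determined by a weakly monotone polynomial interpretation $\Apol$. Hence it suffices to exhibit one instance of the refined $\WPOpol$ whose induced pair $\Tp{\GS_\WPO, \GT_\WPO}$ coincides with $\Tp{\ge_\Apol, >_\Apol}$; equality is of course stronger than the claimed subsumption, so establishing it settles the corollary.

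Concretely, I would set $\A = \Apol$, choose $\sigma$ to be the empty status function, and take $\PGS = \Sig^2$, the trivial quasi-precedence in which $f \PGS g$ holds for all $f, g \in \Sig$. The only hypothesis of \prettyref{thm:WPO refine} left to verify is that $\Apol$ is \emph{non-trivial}. This is immediate: the carrier $\{a \in \Nat \mid a \ge \Wzero\}$ contains both $\Wzero$ and $\Wzero + 1$, and $\Wzero \ge \Wzero + 1$ fails, so this pair witnesses non-triviality. With these choices, \prettyref{thm:WPO refine} yields $\Tp{\ge_\Apol, >_\Apol} = \Tp{\GS_\WPO, \GT_\WPO}$ after the refinements \prettyref{item:WPO-min} and \prettyref{item:WPO-max}. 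Since $\Tp{\ge_\Apol, >_\Apol}$ is already known to be a reduction pair, so is the WPO pair it equals, and the corollary follows.

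The point worth flagging — and the reason that an appeal to \prettyref{thm:WPO refine} is legitimate here even though a weakly monotone $\Apol$ with $0$ coefficients is in general \emph{not} weakly simple as an algebra — is the choice of the \emph{empty} status. Weak simplicity of $\A$ \wrt $\sigma$ only constrains the arguments $i \in \sigma(f)$, so under the empty status it is satisfied vacuously, and every weakly monotone $\Apol$ thereby qualifies for WPO as a reduction pair. This is exactly the leverage that partial statuses provide over the total-status setting of \prettyref{sec:WPO order}: it lets $\WPOpol$ recover the full weak part $\ge_\Apol$ and hence encompass POLO as a reduction pair, which \prettyref{thm:WPO>=POLO} alone could not deliver. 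I therefore expect no genuine obstacle in any calculation; the entire content lies in recognizing that \prettyref{thm:WPO refine} applies verbatim to the polynomial algebra.
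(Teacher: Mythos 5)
Your proposal is correct and matches the paper's own (implicit) argument exactly: the paper derives this corollary directly from \prettyref{thm:WPO refine} by taking $\A = \Apol$ with the empty status and the trivial quasi-precedence, which is precisely your instantiation. Your added remark that weak simplicity \wrt the empty status holds vacuously (so that \prettyref{thm:WPO pair} applies to a weakly monotone $\Apol$ with $0$ coefficients) is a correct and worthwhile clarification, but not a deviation from the paper's route.
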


It is now easy to obtain the following result:

\begin{corollary}
	$\WPOmp$
\REV{%
	with the refinements \prettyref{item:WPO-min} and \prettyref{item:WPO-max}
	subsumes
}{%
	encompasses
}%
	POLO, KBO, TKBO and LPO.\qed
\end{corollary}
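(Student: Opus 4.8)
The plan is to exploit that the algebra $\Amp$ specializes to both $\Apol$ and $\Amax$ by fixing the weight status uniformly. Concretely, choosing $\WstatusOf{f} = \Pol$ for every $f \in \Sig$ makes $\Amp$ coincide with $\Apol$, while choosing $\WstatusOf{f} = \Max$ for every $f$ makes it coincide with $\Amax$. Since neither \prettyref{def:WPOpS} nor the refinements \prettyref{item:WPO-min} and \prettyref{item:WPO-max} depend on $\Wstatus$ beyond the interpretation it selects, every instance of $\WPOpol$ and every instance of $\WPOmax$ is literally an instance of $\WPOmp$. Hence $\WPOmp$ inherits everything these two instances subsume, and the corollary reduces to four applications of previously established results.

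For POLO I would invoke \prettyref{cor:WPO>=POLO pair} directly: with the refinements, $\WPOpol$ already encompasses POLO as a reduction pair, and $\WPOpol$ is exactly the $\WstatusOf{\cdot} = \Pol$ instance of $\WPOmp$. For LPO I would use \prettyref{thm:WPO>=LPO}, which gives ${\GT_\LPO} = {\GT_\WPOmax}$ under the stated weight conditions; since $\WPOmax$ is the $\WstatusOf{\cdot} = \Max$ instance of $\WPOmp$, and the remark preceding \prettyref{cor:WPO>=POLO pair} lifts such reduction-order equalities to reduction-pair statements via a total status, LPO is covered as a reduction pair.

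For TKBO and KBO I would stay entirely within the polynomial instance. \prettyref{thm:WPO>=TKBO} gives ${\GT_\WPOpol} = {\GT_\TKBO}$ whenever $\Wzero > 0$ and $\Weight$ is admissible, so TKBO is subsumed. KBO is then obtained as the sub-instance in which all subterm coefficients equal $1$: this turns $\Apol$ into $\Asum$ and, by \prettyref{thm:WPO>=KBO}, yields ${\GT_\WPOsum} = {\GT_\KBO}$. As above, the total-status remark upgrades these reduction-order equalities to reduction-pair subsumptions.

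I do not expect a genuine obstacle here, since the statement only assembles results already proved. The one point requiring care is the reduction-pair direction for POLO, where the strict-order inclusion \prettyref{thm:WPO>=POLO} alone is insufficient and one must genuinely appeal to the refined $\GS_\WPO$ through \prettyref{thm:WPO refine} and \prettyref{cor:WPO>=POLO pair}. Everything else is bookkeeping: checking that fixing $\Wstatus$ to a constant value recovers the intended algebra verbatim, and that the weak components $\GS$ transfer along with the strict ones $\GT$.
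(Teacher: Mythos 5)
Your proposal is correct and follows essentially the same route as the paper: the paper derives this corollary immediately from \prettyref{cor:WPO>=POLO pair} together with the remark at the start of \prettyref{sec:pair instances} that, under a total status, Theorems \ref{thm:WPO>=KBO}, \ref{thm:WPO>=TKBO} and \ref{thm:WPO>=LPO} lift to reduction-pair subsumptions, exactly as you do by specializing $\Wstatus$ uniformly to $\Pol$ or $\Max$. Your explicit note that the weak part for POLO genuinely needs the refinements via \prettyref{thm:WPO refine} is the same point the paper makes by routing POLO through \prettyref{cor:WPO>=POLO pair} rather than \prettyref{thm:WPO>=POLO}.
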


Moreover, WPO also subsumes 
the \emph{matrix interpretation method} \cite{EWZ08}, when
weights are computed by a matrix interpretation $\Amat$.
Note that a matrix interpretation is not always weakly simple.
Hence as a \emph{reduction order},
\prettyref{def:WPO} cannot be applied for $\Amat$ in general.
The situation is relaxed for reduction pairs, and
from \prettyref{thm:WPO refine} we obtain the following:

\begin{corollary}\label{cor:WPO>=MAT pair}
\REV{%
}{%
	The reduction pair induced by 
}%
	$\WPOmat$ \REV{with the refinements \prettyref{item:WPO-min} and \prettyref{item:WPO-max} subsumes}{encompasses}
	the reduction pair induced by the matrix interpretation $\Amat$.\qed
\end{corollary}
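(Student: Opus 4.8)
The plan is to obtain this corollary as a direct instantiation of \prettyref{thm:WPO refine}, taking the generic algebra $\A$ there to be the matrix interpretation $\Amat$. Concretely, I would fix the \emph{empty} status function $\sigma$ and the full quasi-precedence $\PGS = \Sig^2$, so that every pair of symbols is $\PSIM$-equivalent and no symbol is strictly $\PGT$-above another. Under these choices, the three hypotheses of \prettyref{thm:WPO refine} collapse to a single nontrivial check, namely that $\Amat$ is non-trivial as a well-founded algebra.

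Non-triviality is immediate from the definition of the order on $\Amat$: since $\gs_\Amat$ is the componentwise order on $\Nat^d$ (the first coordinate compared by $\geopt$, the remaining ones by $\ge$), the vector $a$ with all coordinates $0$ and the vector $b$ whose first coordinate is $1$ and whose other coordinates are $0$ satisfy $a \ngs b$, because $0 \ge 1$ fails in the first coordinate. This uses only $d \ge 1$. Hence the instance $\WPOmat$ meets exactly the premises required by \prettyref{thm:WPO refine}.

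Before invoking the theorem I would also confirm that $\Tp{\GS_\WPOmat,\GT_\WPOmat}$ is genuinely a reduction pair, via the soundness result \prettyref{thm:WPO pair}, which demands that the underlying algebra be weakly monotone and weakly simple \wrt $\sigma$. Weak monotonicity of $\Amat$ is precisely the matrix-interpretation result of \cite{EWZ08} (its forming a reduction pair entails monotonicity of $\gs_\Amat$), and weak simplicity \wrt the \emph{empty} status is vacuous, since there is no argument $i \in \sigma(f) = \List\Empty$ on which $f_\Amat$ is required to be simple. This is exactly the place where the reduction-pair setting improves on the reduction-order setting: a matrix interpretation need not be weakly simple, so \prettyref{def:WPO} does not apply to $\Amat$ in general, whereas the empty-status choice sidesteps the simplicity requirement entirely.

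With these pieces assembled, \prettyref{thm:WPO refine} yields $\Tp{\gs_\Amat,>_\Amat} = \Tp{\GS_\WPOmat,\GT_\WPOmat}$ after the refinements \prettyref{item:WPO-min} and \prettyref{item:WPO-max}; in particular the WPO reduction pair reproduces, and hence certainly encompasses, the matrix-interpretation reduction pair. I do not expect a genuine obstacle beyond the bookkeeping above: the corollary is an application of the general theorem, and the only subtleties are verifying non-triviality (which needs $d \ge 1$) and observing that choosing the empty status renders the weak-simplicity hypothesis of soundness vacuous, so that no simplicity property of $\Amat$ is ever needed.
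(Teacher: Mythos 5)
Your proposal is correct and follows the paper's own route: the corollary is obtained by instantiating \prettyref{thm:WPO refine} with $\A = \Amat$, the empty status and the full quasi-precedence, exactly as the paper does (the paper's remark that an arbitrary algebra is strictly simple \wrt the empty status is your observation that weak simplicity becomes vacuous). Your explicit checks of non-triviality and of the hypotheses of \prettyref{thm:WPO pair} are just the routine details the paper leaves implicit.
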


Finally, we compare $\WPOmp$ and
\emph{RPOLO} of Bofill \etal \cite{BBRR13},
another approach of unifying LPO and POLO.
It turns out that RPOLO is incomparable with $\WPOmp$.
First we verify that $\WPOmp$ is not subsumed by $\RPOLO$;
more precisely, RPOLO does not subsume KBO.

\begin{example}\label{ex:WPO-RPOLO}
	Let us show that the constraint
	$\f(\g(x)) \GT \g(\f(\f(x)))$ cannot be
	satisfied by RPOLO.%
\footnote{
	To simplify the discussion, we do not consider \REV{the }{}possibility for
	\emph{argument filterings} or 
	\emph{usable rules} \cite{AG00} in the following examples.
\REV{%
	It is easy to exclude these techniques;
	for example, by adding the constraint $\g(x) \GS x$ we can enforce
	the argument of $\g$ not to be filtered.
}{%
	Nonetheless, it is easy to exclude these techniques by adding rules \eg $\g(x) \to x$.
}%
}
	Note that this constraint is satisfied
	by KBO with $\WeightOf{\f}=0$ and $\f \PGT \g$.
	\begin{itemize}
	\item
		Suppose $f \in \Sig_\POLO$. Since this
		constraint cannot be satisfied by POLO,
		$\g$ must be in $\Sig_\RPO$.
		Hence we need
		$\f_\Apol(v_{\g(x)}) >_{C(\f(\g(x)))} v_{\g(\f(\f(x)))}$.
		This requires either
		\begin{itemize}
		\item $\f_\Apol(x) = x$ and $\g(x) \GT_\RPOLO \g(\f(\f(x)))$, or
		\item $\f_\Apol(x) > x$ and $\g(x) \GE_\RPOLO \g(\f(\f(x)))$.
		\end{itemize}
		In either case, we obtain $\g(x) \GT_\RPOLO \g(x)$,
		which is a contradiction.

	\item
		Suppose $\f \in \Sig_\RPO$. Since this
		constraint cannot be satisfied by RPO,
		$\g$ must be in $\Sig_\POLO$.
		Hence we need
		\begin{itemize}
		\item $\g(x) \GE_\RPOLO \g(\f(\f(x)))$, or
		\item $\f(\g(x)) \GT_\RPOLO \f(\f(x))$.
		\end{itemize}
		The first case contradicts \REV{}{with }$\f(x) \GT_\RPOLO x$.
		The second case contradicts \REV{}{with }the fact that $\f(x) \GT_\RPOLO \g(x)$.
	\end{itemize}
\end{example}

On the other hand, $\RPOLO$ is also not subsumed by $\WPOmp$,
as the following example illustrates.

\begin{example}\label{ex:RPOLO-WPO}
	Consider a DP problem that induces the following constraints:
	\begin{align}
	\label{eq:g}
		\F(\ci(x,\ci(y,\g(z)))) &\GT \F(\ci(y,\ci(z,x))))
	\\\label{eq:Zantema}
		\f(\g(\h(x))) &\GS \g(\f(\h(\g(x))))
	\\\label{eq:i}
		\ci(y,\ci(z,x)) &\GS \ci(x,\ci(y,z))
	\end{align}
	where constraint \eqref{eq:Zantema} is from \cite[Proposition 10]{Z94}.
	\begin{itemize}
	\item
		First, let us show that
		the set of constraints cannot be satisfied
		by $\WPOmp$. Since $\f$, $\g$ and $\h$ are unary,
		we only consider $\Wstatus(\f) = \Wstatus(\g) = \Wstatus(\h) = \Pol$.
		It is easy to adjust \cite[Proposition 10]{Z94} to show that
		$\g_\Apol(x) = x$ whenever
		\eqref{eq:Zantema} is satisfied.
		Together with \eqref{eq:i}, we obtain
		\[
			\ci(y,\ci(z,x)) \ge_\Amp \ci(x,\ci(y,z)) =_\Amp \ci(x,\ci(y,\g(z)))
		\]
		Hence case \prettyref{item:WPO-gt} of
		\prettyref{def:WPOpS} cannot be applied
		for constraint \eqref{eq:g}.
		Moreover, by any choice of $\sigma(\ci)$,
		case \prettyref{item:WPO-mono} cannot apply, either.
	\item
		Second, let us show that the
		set of constraints can be satisfied by RPOLO.
		Consider $\f,\g,\h \in \Sig_\RPO$, $\ci,\F \in \Sig_\POLO$,
		$\f \PGT \g \PGT \h$,
		$\ci_\Apol(x,y) = x + y$ and $\F_\Apol(x) = x$.
		Then
		constraint \eqref{eq:Zantema} is strictly oriented
		and \eqref{eq:i} is weakly \REV{oriented}{so}.
		Since $\g(z) \GT_\RPOLO z$ and $v_{\g(z)} > z$ implies $x+y+v_{\g(z)} > y + z + x$,
		constraint \eqref{eq:g} is also satisfied.
		\qedhere
	\end{itemize}
\end{example}

\section{SMT Encodings}\label{sec:encodings}

In the preceding sections, we have concentrated on theoretical aspects.
In this section, we consider how to implement the instances of $\WPO$ using
SMT solvers.
We extend the corresponding approach for KBO \cite{ZHM09} to WPO.
In particular, $\WPOsum$, $\WPOmax$ and $\WPOms$ are reduced to 
SMT problems of linear arithmetic, and as a consequence, 
decidability is ensured for orientability problems of these orders.

An \emph{expression} $e$ is 
built from (non-negative integer) variables, constants and
the binary symbols $\cdot$ and $+$ denoting multiplication and addition, \resp.
A \emph{formula} is built from
atoms of the form $e_1 > e_2$ and $e_1 \ge e_2$,
negation $\Not$,
and the binary symbols $\And$, $\Or$ and $\Then$ denoting
conjunction, disjunction and implication, \resp.
The precedence of these symbols \REV{is}{are} in the order we listed above.

The main interest of the SMT encoding approach is to employ SMT solvers for
finding a concrete algebra that proves finiteness of a given DP problem
(or termination of a given TRS).
Hence we assume that algebras are
parameterized by a set of expression variables.

\begin{definition}
An algebra $\A$ \Def{parameterized} by a set $V$ of variables
is a mapping that induces a concrete algebra $\A^\alpha$ from an assignment $\alpha$
whose domain contains $V$.
An \Def{encoding} of the relation $\gsopt_\A$ is a function that assigns
for two terms $s$ and $t$ a formula
$\Encode{s \gsopt_\A t}$ over variables from $V$ \st
$\alpha \models \Encode{s \gsopt_\A t}$ iff $s \gsopt_{\A^\alpha} t$.
\end{definition}

In the encodings presented in the rest of this section,
we consider $\A$ \REV{to be}{is} parameterized by at most the following variables:
\begin{itemize}
\item
	integer variables $\WeiVarOf{f}$ and $\WeiVarZero$ denoting $\WeightOf{f}$ and $\Wzero$, \resp, and
\item
	integer variables $\CoefVarOf{f,i}$ and $\PenVarOf{f,i}$ 
	denoting $\CoefOf{f,i}$ and $\PenOf{f,i}$, \resp.
\end{itemize}

\subsection{The Common Structure}\label{sec:encode WPO}

\def\ST{\mathsf{ST}}
\def\SIMP{\mathsf{SIMP}}
To optimize the presentation,
we present an encoding of the common structure of $\WPO$
independent from the shape of $\A$.
Hence, we assume encodings for
$\AGT$ and $\AGS$ are given.

\REV{Following \cite{ZHM09}, first}{First}
we represent a \REV{quasi-}{}precedence $\PGS$ by integer variables $\PrecVarOf{f}$.
For an assignment $\alpha$,
we define the \REV{quasi-}{}precedence $\PGS^\alpha$ as follows:
$f \PGS^\alpha g$ iff $\alpha \models \PrecVarOf{f} \ge \PrecVarOf{g}$.

Next we consider representing \REV{a partial status
	by imitating the encoding of a \emph{filtered permutation}
	proposed in \cite{CGST12}.
	We introduce
}{
 statuses by
}%
 the boolean variables
$\PermedVarOf{f,i}$ and $\PermVarOf{f,i,j}$,
so that an assignment $\alpha$ induces a status $\sigma^\alpha$ as follows:
$\alpha \models \PermedVarOf{f,i}$ iff $i \in \sigma^\alpha(f)$, and
$\alpha \models \PermVarOf{f,i,j}$ iff $i$ is the $j$-th element in $\sigma^\alpha(f)$.
\REV{%
	In order for $\sigma^\alpha$ to be well-defined,
	every $i \in \sigma^\alpha(f)$ must occur exactly once in $\sigma^\alpha(f)$
	and any $i \notin \sigma^\alpha(f)$ must not occur in $\sigma^\alpha(f)$.
	These conditions are represented by
}{%
	In order to ensure $\sigma^\alpha$ to be well-defined,
	we introduce
}%
the following formula:
\[
	\ST \DefEq
	\BigAnd_{f\in\Sig_n}\BigAnd_{i=1}^n
	\Bigl(
		\Bigl( \PermedVarOf{f,i} \Then \sum_{j=1}^n \PermVarOf{f,i,j} = 1 \Bigr)\And
		\Bigl( \Not\PermedVarOf{f,i} \Then \sum_{j=1}^n \PermVarOf{f,i,j} = 0 \Bigr)
	\Bigr)
\]
It is easy to verify that $\sigma^\alpha$ is well-defined
if $\alpha \models \ST$.
\REV{%
	In contrast to the previous works \cite{ZHM09,CGST12}, we moreover
}{%
	Moreover we
}%
need the following formula to ensure
\REV{}{the }weak simplicity of $\A$ \wrt $\sigma$:
\[
	\SIMP \DefEq
	\BigAnd_{f \in \Sig_n} \BigAnd_{i = 1}^{n}
	\Bigl(\PermedVarOf{f,i} \Then \Encode{f(\Seq{x_n}) \AGS x_i}\Bigr)
\]
Note that in the formula $\SIMP$,
the condition $f(\Seq{x_n}) \AGS x_i$ can often be encoded in a more efficient way.
For linear $\Apol$, for example, this condition is equivalent to $\CoefOf{f,i} \ge 1$.

\begin{lemma}
	$\alpha \models \ST \And \SIMP$ iff
	$\A^\alpha$ is weakly simple \wrt a partial status $\sigma^\alpha$.\qed
\end{lemma}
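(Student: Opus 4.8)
The plan is to prove the biconditional by decomposing the right-hand side into its two constituent assertions: that $\sigma^\alpha$ is a genuine \emph{partial status}, and that $\A^\alpha$ is weakly simple with respect to it. The conjunct $\ST$ on the left is responsible for the former and $\SIMP$ for the latter, so I would establish that each left-hand conjunct corresponds to exactly one of these assertions. Since everything comes down to unwinding the definitions of the induced status $\sigma^\alpha$, the induced algebra $\A^\alpha$, and the encoding $\Encode{\cdot}$, I expect the argument to be short; this is presumably why the statement is marked with \qed.

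First I would use the fact already recorded in the text, that $\alpha \models \ST$ holds exactly when $\sigma^\alpha$ is well-defined, i.e. when $\sigma^\alpha(f)$ is a list of distinct positions whose underlying set agrees with $\{ i \mid \alpha \models \PermedVarOf{f,i} \}$ for every $f$. The text states the direction ``$\alpha \models \ST$ implies $\sigma^\alpha$ well-defined''; the converse is equally routine, since a well-defined list places each member $i$ at exactly one position (forcing $\sum_j \PermVarOf{f,i,j} = 1$) and places each non-member at none (forcing $\sum_j \PermVarOf{f,i,j} = 0$). This identifies the $\ST$ conjunct with the requirement that $\sigma^\alpha$ be a partial status, which is presupposed by the phrase ``weakly simple with respect to a partial status $\sigma^\alpha$'' on the right.

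Next, I would translate the weak-simplicity assertion. By definition, $\A^\alpha$ is weakly simple with respect to $\sigma^\alpha$ iff, for every $f \in \Sig_n$ and every $i \in \sigma^\alpha(f)$, the interpretation $f_{\A^\alpha}$ is weakly simple in its $i$-th argument. Using the membership rule $i \in \sigma^\alpha(f)$ iff $\alpha \models \PermedVarOf{f,i}$, together with the encoding correctness $\alpha \models \Encode{f(\Seq{x_n}) \AGS x_i}$ iff $f_{\A^\alpha}$ is weakly simple in its $i$-th argument, this becomes: for all $f$ and $i$, $\alpha \models \PermedVarOf{f,i}$ implies $\alpha \models \Encode{f(\Seq{x_n}) \AGS x_i}$. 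Conjoining over all $f \in \Sig_n$ and $i \in \{1,\dots,n\}$ is precisely $\alpha \models \SIMP$, with the conjuncts for $i$ outside the status holding vacuously because the implication in $\SIMP$ is guarded by $\PermedVarOf{f,i}$. Note that weak simplicity depends only on the membership bits $\PermedVarOf{f,i}$ and not on the positional bits $\PermVarOf{f,i,j}$, so this correspondence is clean and is independent of the list-well-definedness handled by $\ST$. Reading the chain of equivalences in both directions yields the claimed biconditional.

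The only genuinely delicate point is the correctness of the encoding of $\AGS$ in the guarded conjunct. Here one must recall that $\AGS$ is the extension of $\gs$ to terms, so $f(\Seq{x_n}) \AGS x_i$ quantifies universally over all assignments of $x_1,\dots,x_n$; this universal statement is exactly what ``$f_{\A^\alpha}$ is weakly simple in its $i$-th argument'' means, and it is what $\Encode{f(\Seq{x_n}) \AGS x_i}$ is built to capture. Once this identification is made explicit, the remaining steps are routine propositional reasoning about the guarded implications in $\SIMP$ and the membership/position semantics fixed by $\ST$.
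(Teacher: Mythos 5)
Your proof is correct and follows the only reasonable route here: the paper itself gives no proof (the lemma is marked \qed as immediate from the definitions of $\sigma^\alpha$, $\A^\alpha$, and the encoding), and your unwinding of $\ST$ into well-definedness of the induced partial status and of $\SIMP$ into guarded weak-simplicity via encoding correctness is exactly the intended argument.
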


Now we present the encodings for $\WPO$.

\begin{definition}
	The encodings of $\GT_\WPO$ and $\GS_\WPO$ are defined as follows:
	\[
		\Encode{s \GSopt_\WPO t} \DefEq
		\Encode{s \AGT t} \Or
		\bigl( \Encode{s \AGS t} \And s \GSopt_1 t \bigr)
	\]
	where the formula $s \GSopt_1 t$ is defined as follows:
	\begin{eqnarray*}
		x \GS_1 t &\DefEq&
		\begin{Cases}
			\True	&\text{if } x = t
		\\
			\False	&\text{otherwise}
		\end{Cases}
	\\
		x \GT_1 t &\DefEq& \False
	\\
		f(\Seq{s_n}) \GSopt_1 t &\DefEq&
		\bigvee_{i=1}^{n}
		\Bigl(\PermedVarOf{f,i} \And \Encode{s_i \GS_\WPO t} \Bigr) \Or f(\Seq{s_n}) \GSopt_2 t
	\end{eqnarray*}
	The formula $f(\Seq{s_n}) \GSopt_2 t$ is defined as follows:
	\begin{eqnarray*}
		f(\Seq{s_n}) \GSopt_2 y	&\DefEq& \False
	\\
		f(\Seq{s_n}) \GSopt_2 g(\Seq{t_m})	&\DefEq&
			\bigwedge_{j=1}^{m}
				\Bigl(\PermedVarOf{g,j} \Then \Encode{s \GT_\WPO t_j} \Bigr)
			\ \And\\
			&&
			\Bigl(
				\PrecVarOf{f} > \PrecVarOf{g}
				\Or
				\PrecVarOf{f} = \PrecVarOf{g} \And 
				\Encode{\ListOf{\Seq{s_{\sigma(f)}}} \GSopt_\WPO^\Lex \ListOf{\Seq{t_{\sigma(g)}}}}
			\Bigr)
	\end{eqnarray*}
\end{definition}
Here\REV{,
	$s \GSopt_1 t$ indicates that $s \GSopt_\WPO t$ is derived by
	case \prettyref{item:WPO-simp} or \prettyref{item:WPO-args}, and
	$s \GSopt_2 t$ indicates that $s \GSopt_\WPO t$ is derived by
	case \prettyref{item:WPO-prec} or \prettyref{item:WPO-mono}.
	We
}{ we }%
do not present the encoding for the lexicographic extension \wrt permutation,
which can be found in \cite{CGST12}.
\REV{\par}{}%
For an assignment $\alpha$, we write $\GSopt_\WPO^\alpha$ to denote 
the instance of WPO corresponding to $\alpha$;
\ie,
$\GSopt_\WPO^\alpha$ is induced by the algebra $\A^\alpha$,
the \REV{quasi-}{}precedence $\PGS^\alpha$ and the partial status $\sigma^\alpha$.

\begin{lemma}\label{lem:encode}
	For any assignment $\alpha$ \st
	$\alpha \models \ST \And \SIMP$,
	$\alpha \models \Encode{s \GSopt_\WPO t}$ iff
	$s \GSopt_\WPO^\alpha t$.\qed
\end{lemma}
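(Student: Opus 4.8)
The plan is to establish both directions at once, by induction on $|s| + |t|$, exploiting that the encoding is a clause-by-clause mirror of \prettyref{def:WPOpS}. Throughout I may assume the correctness of the sub-encodings the construction takes for granted: the encodings of $\AGT$ and $\AGS$ (assumed in this subsection), the precedence semantics $f \PGS^\alpha g$ iff $\alpha \models \PrecVarOf{f} \ge \PrecVarOf{g}$ (whence $f \PGT^\alpha g$ iff $\alpha \models \PrecVarOf{f} > \PrecVarOf{g}$ and $f \PSIM^\alpha g$ iff $\alpha \models \PrecVarOf{f} = \PrecVarOf{g}$), the status semantics $\alpha \models \PermedVarOf{f,i}$ iff $i \in \sigma^\alpha(f)$, and the correctness of the lexicographic-extension encoding of \cite{CGST12}. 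The hypothesis $\alpha \models \ST$ makes $\sigma^\alpha$ a well-defined partial status (so that $\GSopt_\WPO^\alpha$ is well-defined), and $\alpha \models \SIMP$ makes $\A^\alpha$ weakly simple with respect to it (the preceding lemma), ensuring $\GSopt_\WPO^\alpha$ is moreover a reduction pair. Every recursive subformula invoked below strictly decreases the measure $|s| + |t|$, so the induction is well-founded.

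By correctness of the encodings of $\AGT$ and $\AGS$, the outer form $\Encode{s \AGT t} \Or (\Encode{s \AGS t} \And s \GSopt_1 t)$ matches case \prettyref{item:WPO-gt} against case \prettyref{item:WPO-ge}, so it suffices to show that, assuming $s \AGS t$ in $\A^\alpha$, the residual $s \GSopt_1 t$ holds iff subcase \prettyref{item:WPO-simp} or \prettyref{item:WPO-args} applies. I first treat the base case $s = x \in \Vars$: here $x \GS_1 t$ is $\True$ exactly when $x = t$ and $x \GT_1 t$ is $\False$, matching that the only positive clause for a variable left-hand side is $x \GS_\WPO x$ (when $x = t$, the $\GS$ encoding is true because $\Encode{x \AGS x}$ holds by reflexivity of $\gs$). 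The one point needing care is the residual disjunct $\Encode{x \AGT t}$ left when $x \ne t$ (for $\GS$) and in the $\GT$ case: this must be false, which holds because for the algebras considered here the carrier has least element $\Wzero$ and every term evaluates to at least $\Wzero$, so $x \AGT t$ can never hold for a variable $x$.

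For the inductive step $s = f(\Seq{s_n})$, the first disjunct of $s \GSopt_1 t$, namely $\bigvee_{i=1}^{n}(\PermedVarOf{f,i} \And \Encode{s_i \GS_\WPO t})$, encodes subcase \prettyref{item:WPO-simp}: by $\alpha \models \PermedVarOf{f,i}$ iff $i \in \sigma^\alpha(f)$ and the induction hypothesis applied to $\Encode{s_i \GS_\WPO t}$, it holds iff $\ForSome{i \in \sigma^\alpha(f)} s_i \GS_\WPO^\alpha t$. The second disjunct $s \GSopt_2 t$ encodes subcase \prettyref{item:WPO-args}: when $t = y \in \Vars$ it is defined as $\False$, matching that \prettyref{item:WPO-args} requires a non-variable right-hand side; when $t = g(\Seq{t_m})$, the conjunct $\bigwedge_{j=1}^{m}(\PermedVarOf{g,j} \Then \Encode{s \GT_\WPO t_j})$ encodes $\ForAll{j \in \sigma^\alpha(g)} s \GT_\WPO^\alpha t_j$ by the induction hypothesis, while the remaining factor encodes subcase \prettyref{item:WPO-prec} through $\PrecVarOf{f} > \PrecVarOf{g}$ and subcase \prettyref{item:WPO-mono} through $\PrecVarOf{f} = \PrecVarOf{g}$ conjoined with the lexicographic encoding. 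Since the $\GSopt$ template is respected uniformly — the flavours $\GS$ and $\GT$ differ only in the variable base case and in the final lexicographic comparison — both directions of the equivalence drop out of this case analysis simultaneously.

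The main obstacle is the lexicographic factor: one must confirm that $\Encode{\ListOf{\Seq{s_{\sigma(f)}}} \GSopt_\WPO^\Lex \ListOf{\Seq{t_{\sigma(g)}}}}$ really captures $\AppPerm{\sigma^\alpha(f)}{s}{n} \GSopt_\WPO^\Lex \AppPerm{\sigma^\alpha(g)}{t}{m}$. This rests on two ingredients: the filtered-permutation lexicographic encoding of \cite{CGST12}, which reads both the selected positions and their order from the variables $\PermedVarOf{f,i}$ and $\PermVarOf{f,i,j}$ — well-defined precisely because $\alpha \models \ST$ — and the induction hypothesis applied to each pairwise comparison of selected arguments, each strictly smaller in the measure $|\cdot| + |\cdot|$. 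Once this compatibility is confirmed, combining it with the clause-by-clause correspondences established above yields $\alpha \models \Encode{s \GSopt_\WPO t}$ iff $s \GSopt_\WPO^\alpha t$, completing the induction.
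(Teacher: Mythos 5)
Your proof is correct: the paper omits the argument entirely (the lemma is stated with \qed as routine), and your clause-by-clause induction on $|s|+|t|$ is exactly the intended justification. You also rightly flag the one non-trivial corner — that $\Encode{x \AGT t}$ must be unsatisfiable for a variable $x$ so that the outer disjunct does not over-approximate the definition for variable left-hand sides — which holds for any well-founded algebra (assign $x$ a minimal element of the carrier), and in particular for all the algebras the paper instantiates.
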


\begin{theorem}\label{thm:encode DP WPO}
	If the following formula is satisfiable:
	\begin{equation}\label{eq:encode DP}
		\ST \And \SIMP \And
		\bigwedge_{l \to r \in \RR \cup \PP}\Encode{l \GS_\WPO r}\ \And
		\bigvee_{l \to r \in \PP'}\Encode{l \GT_\WPO r}
	\end{equation}
	then the DP processor that maps $\Tp{\PP,\RR}$ to $\{ \Tp{\PP\setminus\PP',\RR} \}$
	is sound.
\end{theorem}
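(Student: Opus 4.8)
The plan is to read a concrete WPO instance off any model of \eqref{eq:encode DP} and feed it into the reduction pair processor of \prettyref{thm:reduction pair}; all the real work is already isolated in the lemmas of this subsection, so the argument is essentially a chaining of them. First I would fix an assignment $\alpha$ satisfying \eqref{eq:encode DP} and consider the structures $\A^\alpha$, $\PGS^\alpha$ and $\sigma^\alpha$ that it induces, together with the resulting order $\GSopt_\WPO^\alpha$.

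The first task is to certify that $\Tp{\GS_\WPO^\alpha,\GT_\WPO^\alpha}$ is a reduction pair. Since $\alpha \models \ST \And \SIMP$, the preceding lemma on $\ST \And \SIMP$ guarantees that $\sigma^\alpha$ is a well-defined partial status and that $\A^\alpha$ is weakly simple \wrt $\sigma^\alpha$. Weak monotonicity of $\A^\alpha$ is not enforced by the formula, so I would rely on it being a structural property of the parameterized algebra: for every one of the concrete instances $\Asum$, $\Amax$, $\Amp$ and $\Amat$, weak monotonicity holds for all assignments, independently of the chosen weights and coefficients. With both hypotheses in hand, \prettyref{thm:WPO pair} yields that $\Tp{\GS_\WPO^\alpha,\GT_\WPO^\alpha}$ is a reduction pair.

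Next I would transfer the two orientation blocks of \eqref{eq:encode DP} from formulas to the order itself, which is exactly what \prettyref{lem:encode} provides, and it is applicable because $\alpha \models \ST \And \SIMP$. From the conjunction $\bigwedge_{l \to r \in \RR \cup \PP}\Encode{l \GS_\WPO r}$ I obtain $l \GS_\WPO^\alpha r$ for every $l \to r \in \RR \cup \PP$, that is ${\RR \cup \PP} \subseteq {\GS_\WPO^\alpha}$. From the strict block I obtain that every pair $l \to r$ in the removed set $\PP'$ satisfies $l \GT_\WPO^\alpha r$, that is $\PP' \subseteq {\GT_\WPO^\alpha}$, the disjunction certifying that this set is nonempty so that the step is productive. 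Since $\Tp{\GS_\WPO^\alpha,\GT_\WPO^\alpha}$ is a reduction pair with ${\RR \cup \PP} \subseteq {\GS_\WPO^\alpha}$ and $\PP' \subseteq {\GT_\WPO^\alpha}$, \prettyref{thm:reduction pair} immediately gives soundness of the processor that maps $\Tp{\PP,\RR}$ to $\{\Tp{\PP \setminus \PP',\RR}\}$.

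The main points demanding care are the two places where the formula under-specifies what the soundness theorem needs. The first is weak monotonicity: $\ST \And \SIMP$ encodes only weak simplicity \wrt $\sigma$, so I must supply weak monotonicity from the shape of the parameterized algebra rather than from the constraints. The second is the bookkeeping of the strict part: the reduction pair processor requires the whole removed set to be strictly oriented, so I would make explicit that $\PP'$ is taken to be precisely the pairs that $\alpha$ orients strictly; under this reading the disjunction becomes a nonemptiness (progress) guarantee, while $\PP' \subseteq {\GT_\WPO^\alpha}$ holds by construction via \prettyref{lem:encode}. Beyond these, the proof is a direct instantiation of results already established.
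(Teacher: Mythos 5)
Your proposal is correct and follows essentially the same route as the paper's proof: fix a satisfying assignment $\alpha$, invoke \prettyref{lem:encode} to turn the encoded constraints into facts about $\GSopt_\WPO^\alpha$, certify the reduction pair via \prettyref{thm:WPO pair}, and conclude with \prettyref{thm:reduction pair}. Your two caveats are well placed -- the paper's proof silently reads ``$l \GT_\WPO^\alpha r$ for all $l \to r \in \PP'$'' off a disjunction, which only holds under your convention that $\PP'$ is the set of pairs the model orients strictly, and weak monotonicity is indeed supplied by the shape of the parameterized algebras rather than by $\ST \And \SIMP$.
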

\begin{proof}
	Let $\alpha$ be the assignment that satisfies \eqref{eq:encode DP}.
	By \prettyref{lem:encode}, we obtain
	$l \GS_\WPO^\alpha r$ for all $l \to r \in \RR \cup \PP$ and
	$l \GT_\WPO^\alpha r$ for all $l \to r \in \PP'$.
	Moreover by \prettyref{thm:WPO pair}, $\Tp{\GS_\WPO^\alpha, \GT_\WPO^\alpha}$
	forms a reduction pair.
	Hence \prettyref{thm:reduction pair} concludes the soundness of
	this DP processor.
	\qedhere
\end{proof}

In the following sections,
we give encodings depending on the choice of $\A$ for each instance of $\WPO$.

\subsection{Encoding WPO$(\Apol)$ and WPO$(\Asum)$}

First we present \REV{an encoding of a}{encodings for} linear polynomial interpretation $\Apol$.
The encodings for $\Asum$ is obtained by fixing \REV{all }{}coefficients to $1$.
The weight of a term $s$ and the variable coefficient of $x$ in $s$
are encoded as follows:

\begin{align*}
	\WsumOf{s} &\DefEq
	\begin{Cases}
	\WeiVarZero &\text{ if } s \in \Vars \\
	\WeiVarOf{f} + 
	\displaystyle\sum_{i = 1}^{n}\CoefVarOf{f,i} \cdot \WsumOf{s_i} &\text{ if } s = f(\Seq{s_n})
	\end{Cases}
\\
	\VCoefOf{x,s} &\DefEq
	\begin{Cases}
		1 &\text{if } x = s
	\\
		0 &\text{if } x \neq s \in \Vars
	\\
		\displaystyle\sum_{i=1}^{n}\CoefVarOf{f,i} \cdot \VCoefOf{x,s_i}
		&\text{if } s = f(\Seq{s_n})
	\end{Cases}
\end{align*}

\def\COEF{\mathsf{COEF}}
\def\WMIN{\mathsf{WMIN}}
\REV{%
	We have to ensure $\WeiVarZero$ to be the lower bound of weights of terms.
	To ensure $\WsumOf{f(\Seq{s_n})} \ge \WeiVarZero$
	for every term $f(\Seq{s_n})$,
	we need either $\WeiVarOf{f} \ge \WeiVarZero$ or
	one of the arguments to have a positive coefficient
	(note that the weight of this argument is at least $\WeiVarZero$).
	This is represented by
}{%
	In order to ensure $\Wzero$ to be the lower bound,
	we introduce
}%
the following constraint:
\[
	\WMIN \DefEq \BigAnd_{f \in \Sig_n}
	\Bigl( \WeiVarOf{f} \ge \WeiVarZero \Or \BigOr_{i=1}^n \CoefVarOf{f,i} \ge 1
	\Bigr)
\]

Now the relations $>_\Apol$ and $\ge_\Apol$ are encoded as follows:
\[
	\Encode{s \geopt_\Apol t} \DefEq
	\WsumOf{s} \geopt \WsumOf{t} \And
	\bigwedge_{x \in \Var(t)} \VCoefOf{x,s} \geq \VCoefOf{x,t}
\]

\begin{corollary}
	If the following formula is satisfiable:
	\[
		\ST \And \SIMP \And \WMIN \And
		\bigwedge_{l \to r \in \RR \cup \PP}\Encode{l \GS_\WPOpol r}\ \And
		\bigvee_{l \to r \in \PP'}\Encode{l \GT_\WPOpol r}
	\]
	then the DP processor that maps $\Tp{\PP,\RR}$ to $\{ \Tp{\PP\setminus\PP',\RR} \}$
	is sound.
	\qed
\end{corollary}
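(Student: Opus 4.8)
The plan is to read this corollary as the specialization of \prettyref{thm:encode DP WPO} to the algebra $\Apol$, with the single new conjunct $\WMIN$ discharging the standing requirement that the parameterized algebra denote a legitimate, weakly monotone well-founded $\Sig$-algebra on its intended carrier $\{a \in \Nat \mid a \ge \Wzero\}$. So I would fix an assignment $\alpha$ satisfying the displayed formula and reduce everything to the generic theorem once the instance-specific hypotheses on $\Apol^\alpha$ have been checked.

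The one genuinely new step, and the place where I expect the real work, is showing that $\WMIN$ forces $\Apol^\alpha$ to be well-defined, namely that each interpretation $f_{\Apol^\alpha}$ maps the carrier into itself. I would prove by induction on the structure of $s$ that $\widehat\alpha(s) \ge \Wzero$ for every term $s$: the base case holds because every assignment ranges over the carrier, and in the inductive step for $s = f(\Seq{s_n})$ I split on the two disjuncts of $\WMIN$ at $f$. If $\WeiVarOf{f} \ge \WeiVarZero$, then the remaining summands are non-negative and keep the value at least $\Wzero$; if instead $\CoefVarOf{f,i} \ge 1$ for some $i$, then already $\CoefOf{f,i}\cdot\widehat\alpha(s_i) \ge \widehat\alpha(s_i) \ge \Wzero$. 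This shows $f_{\Apol^\alpha}\colon A^n \to A$, so $\Apol^\alpha$ is a well-founded algebra, and weak monotonicity is then immediate because all coefficients are non-negative.

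The remaining obligations are routine and parallel the proof of \prettyref{thm:encode DP WPO}. From $\ST \And \SIMP$, the lemma relating these formulas to weak simplicity gives that $\sigma^\alpha$ is well-defined and that $\Apol^\alpha$ is weakly simple with respect to $\sigma^\alpha$; combined with the weak monotonicity just established, \prettyref{thm:WPO pair} yields that $\Tp{\GS_\WPOpol^\alpha,\GT_\WPOpol^\alpha}$ is a reduction pair. By \prettyref{lem:encode} the orientation conjuncts are then interpreted correctly, so $l \GS_\WPOpol^\alpha r$ holds for every $l \to r \in \RR \cup \PP$ and $l \GT_\WPOpol^\alpha r$ for at least one $l \to r \in \PP'$, and \prettyref{thm:reduction pair} delivers soundness of the DP processor. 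The only ingredient specific to the linear shape that must still be supplied for \prettyref{lem:encode} to apply is the correctness of $\Encode{s \geopt_\Apol t}$, i.e.\ that $\WsumOf{s} \geopt \WsumOf{t}$ together with $\VCoefOf{x,s} \ge \VCoefOf{x,t}$ for $x \in \Var(t)$ captures $s \geopt_{\Apol^\alpha} t$; this is the same universally-quantified-assignment argument as in \prettyref{lem:weight}, and is otherwise unremarkable.
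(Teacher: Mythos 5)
Your proposal is correct and follows the same route the paper intends: the corollary is stated without an explicit proof precisely because it is the specialization of the general encoding theorem (\prettyref{thm:encode DP WPO}) to $\Apol$, with $\WMIN$ discharging the closure of the carrier under the interpretations, $\ST \And \SIMP$ giving weak simplicity \wrt $\sigma^\alpha$, and the correctness of $\Encode{s \geopt_\Apol t}$ established by the same assignment argument as \prettyref{lem:weight}. Your inductive verification that $\WMIN$ forces $\widehat\alpha(s) \ge \Wzero$ matches the paper's own informal justification of that constraint, so nothing is missing.
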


\subsection{Encoding WPO$(\Amax)$}

In this section, we consider encoding $\WPOmax$.
Unfortunately, we are aware of no SMT solver which supports a built-in $\max$ operator.
Hence we consider encoding the constraint $s >_\Amax t$ into
both quantified and quantifier-free formulas.

First, we present an encoding to a quantified formula.
A straightforward encoding would involve
\[
	\WmaxOf{s}\DefEq
	\begin{Cases}
		s &\text{if } s \in \Vars\\
		v &\text{if } s = f(\Seq{s_n})
	\end{Cases}
\]
where $v$ is a fresh integer variable
\REV{%
	representing $\max\{ \WeiVarOf{f}, \WmaxOf{s_1},\dots,\WmaxOf{s_n} \}$
}{%
}%
with the following constraint $\phi$ added into the context:
\begin{align*}
	\phi \DefEq{}&
		v \ge \WeiVarOf{f} \And \bigwedge_{i=1}^{n} v \ge \WmaxOf{s_i}
	\And
	\Big(
		v = \WeiVarOf{f} \Or \bigvee_{i=1}^{n} v = \WmaxOf{s_i}
	\Big)
\end{align*}
Then the constraint $s \geopt_\Amax t$ can be encoded as follows:
\[
	\Encode{s \geopt_\Amax t}\DefEq
	\ForAll{\Seq{x_k}, \Seq{v_m}}
	\phi_1 \And \dots \And \phi_m \Then
	\WmaxOf{s} \geopt \WmaxOf{t}
\]
where $\{ x_1, \dots, x_k \} = \Var(s) \cup \Var(t)$ and
each $\Tp{\phi_j,v_j}$ is
the pair of the constraint and the fresh variable
introduced during the encoding.

Although quantified linear integer arithmetic is known to be decidable,
the SMT solvers we have tested could not solve the problems 
generated by the above straightforward encoding efficiently, if at all.
Fuhs \etal \cite{FGMSTZ08} propose\REV{}{s} a sound elimination of quantifiers
by introducing new template polynomials.
Here we propose another encoding to quantifier-free formulas
that \REV{does not introduce extra polynomials and
}{}is sound and complete for linear polynomials with max.

\begin{definition}
	A \emph{generalized weight} \cite{KV03b} is a pair $\Tp{n,N}$
	where $n \in \Nat$ and $N$ is a finite multiset%
\footnote{In the encoding for $\Amax$, $N$ need not contain more than one variable.
This generality is reserved for \REV{the }{}encoding of $\Amp$.}
	over $\Vars$.
	We define the following operations:
	\begin{align*}
		\Tp{n,N} + \Tp{m,M} &\DefEq \Tp{n + m, N \uplus M}
	\\
		n \cdot \Tp{m,M} &\DefEq \Tp{n \cdot m, n \cdot M}
	\end{align*}
	where $n \cdot M$ denotes the multiset that maps 
	$x$ to $n \cdot M(x)$ for every $x \in \Vars$.
	We encode a generalized weight as a pair of an expression and
	a mapping $N$ from $\Vars$ to expressions \st
	the \emph{domain} $\Dom(N) \DefEq \{ x \mid N(x) \neq 0 \}$ of $N$ is finite.
	Notations for generalized weights are naturally extended for encoded ones.
	The relation $\supseteq$ on multisets is encoded as follows:
	\[
		N \supseteq M \DefEq \bigwedge_{x \in \Dom(M)} N(x) \ge M(x)
	\]
\end{definition}
A generalized weight $\Tp{n,N}$ represents the expression
$n + \sum_{x\in N}x$.
Now we consider removing $\max$.

\begin{definition}\label{def:Wmax}
	The \emph{expanded weight} $\XWof{s}$ of a term $s$ induced by
	a weight function $\Tp{\Weight,\Wzero}$ and a subterm penalty function
	$\Pen$ is a set of generalized weights, which is defined as follows:
	\[
		\XWof{s} \DefEq
		\begin{Cases}
			\{ \Tp{\WeiVarZero, \SetOf{s}} \} &\text{if } s \in \Vars
		\\
			\{ \Tp{\WeiVarOf{f}, \Set\Empty} \} \cup
			\{ \PenVarOf{f,i} + p \mid p \in \XWof{s_i} \text{, } 1 \le i \le n \}
			&\text{if } s = f(\Seq{s_n})
		\end{Cases}
	\]
\end{definition}
The expanded weight $\XWof{s} = \{ \Seq{p_n} \}$ represents
the expression $\max \{ \Seq{e_n} \}$,
where each generalized weight $p_i$ represents the expression $e_i$.
Using expanded weights, we can encode $>_\Amax$ and $\ge_\Amax$
in a way similar to the \emph{max set ordering} presented in \cite{BC08}:
\[
	\Encode{s \geopt_\Amax t} \DefEq
	\bigwedge_{\Tp{m,M} \in \XWof{t}}
	\bigvee_{\Tp{n,N} \in \XWof{s}}
	(n \geopt m \And N \supseteq M)
\]

Using the quantified or quantifier-free encodings,
we obtain the following corollary of \prettyref{thm:WPO pair}:

\begin{corollary}
	If the following formula is satisfiable:
	\[
		\ST \And
		\bigwedge_{l \to r \in \RR \cup \PP}\Encode{l \GS_\WPOmax r}\ \And
		\bigvee_{l \to r \in \PP'}\Encode{l \GT_\WPOmax r}
	\]
	then the DP processor that maps $\Tp{\PP,\RR}$ to $\{ \Tp{\PP\setminus\PP',\RR} \}$
	is sound.
	\qed
\end{corollary}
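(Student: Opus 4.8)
The plan is to follow the proof of \prettyref{thm:encode DP WPO} almost verbatim, specialising the generic machinery to the max interpretation. Let $\alpha$ be an assignment satisfying the displayed formula. First I would explain why the conjuncts $\SIMP$ and $\WMIN$ that were needed for $\Apol$ do not reappear here. The formula $\SIMP$ is vacuous for $\Amax$: since $f_\Amax(\dots,a,\dots) \ge a$ in \emph{every} argument, $\Amax^\alpha$ is weakly monotone and weakly simple \wrt an arbitrary partial status $\sigma^\alpha$, with no side condition; hence $\alpha \models \ST$ already entails $\alpha \models \ST \And \SIMP$. Similarly no analogue of $\WMIN$ is required, because the outer $\max$ with $\WeightOf{f}$ keeps every interpretation inside the carrier $\{a \in \Nat \mid a \ge \Wzero\}$. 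Thus the hypotheses of \prettyref{lem:encode} and of \prettyref{thm:WPO pair} are met automatically.

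The substantive step, and the one I expect to be the main obstacle, is the $\Amax$-specific correctness of the weight encoding, namely that $\alpha \models \Encode{s \geopt_\Amax t}$ iff $s \geopt_{\Amax^\alpha} t$. I would prove this in two layers. First, by induction on term structure following \prettyref{def:Wmax}, I would show that $\XWof{s}$ faithfully represents the interpretation of $s$: for every variable assignment $\beta$, the value $\widehat\beta(s)$ in $\Amax^\alpha$ equals the maximum, over $\Tp{n,N} \in \XWof{s}$, of the affine values $n + \sum_{x \in N}\beta(x)$. The constant bookkeeping must be tracked with care here — in particular the $\Wzero$ contributed by a variable leaf in \prettyref{def:Wmax} and the propagation of the penalties $\PenOf{f,i}$ — so that the representation matches the definition of $\Amax$ exactly. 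Second, I would show that the max-set comparison
\[
	\bigwedge_{\Tp{m,M} \in \XWof{t}}\ \bigvee_{\Tp{n,N} \in \XWof{s}} (n \geopt m \And N \supseteq M)
\]
captures the universally quantified inequality $\widehat\beta(s) \geopt \widehat\beta(t)$ for all $\beta$. Soundness of a disjunct is immediate, as $n \geopt m$ together with $N \supseteq M$ makes $\Tp{n,N}$ dominate $\Tp{m,M}$ pointwise. Completeness is the genuinely delicate direction: following the max set ordering of \cite{BC08}, I would exploit that every expanded weight of $\Amax$ is affine, monotone, and contains at most one variable, so that $\widehat\beta(s) \geopt m + \sum_{x \in M}\beta(x)$ for all $\beta$ forces a single summand $\Tp{n,N}$ of $\XWof{s}$ to dominate $\Tp{m,M}$ pointwise (driving the variable of $M$ to large values isolates it); the care lies in checking that this pointwise domination is recorded exactly by $n \geopt m \And N \supseteq M$.

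With this encoding lemma established, the remainder is routine and parallels \prettyref{thm:encode DP WPO}. By \prettyref{lem:encode}, from $\alpha \models \ST \And \SIMP$ I obtain $l \GS_\WPOmax^\alpha r$ for every $l \to r \in \RR \cup \PP$ and $l \GT_\WPOmax^\alpha r$ for some $l \to r \in \PP'$. Since $\Amax^\alpha$ is weakly monotone and weakly simple \wrt $\sigma^\alpha$, \prettyref{thm:WPO pair} shows that $\Tp{\GS_\WPOmax^\alpha,\GT_\WPOmax^\alpha}$ is a reduction pair, and \prettyref{thm:reduction pair} then yields soundness of the DP processor mapping $\Tp{\PP,\RR}$ to $\{\Tp{\PP \setminus \PP',\RR}\}$.
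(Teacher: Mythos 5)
Your proposal is correct and follows the same route as the paper, which presents this as an immediate corollary of \prettyref{thm:WPO pair} via \prettyref{lem:encode} and the argument of \prettyref{thm:encode DP WPO}; your observation that $\SIMP$ is vacuous because $\Amax$ is weakly simple in every argument is exactly the reason the paper drops that conjunct. The only remark worth making is that for \emph{soundness} of the DP processor you only need the direction $\alpha \models \Encode{s \geopt_\Amax t} \Rightarrow s \geopt_{\Amax^\alpha} t$, so the completeness direction of the max-set comparison that you flag as delicate, while needed for the encoding to satisfy the paper's ``iff'' definition of an encoding, is not actually required for the corollary itself.
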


\subsection{Encoding WPO$(\Amp)$ and WPO$(\Ams)$}

In this section, we consider encoding linear polynomials with max into SMT formulas.
First we extend \prettyref{def:Wmax} for weight statuses.

\begin{definition}
	For a weight status $\Wstatus$,
	the \emph{expanded weight} $\XWsmOf{s}$ of a term $s$ is the set of
	generalized weight, which is recursively defined as follows:
	\begin{align*}
		\XWsmOf{s} &\DefEq
		\begin{Cases}
			\{ ( \WeiVarZero, \{ s \} ) \}&\text{if } s \in \Vars
		\\
			S&\text{if } s = f(\Seq{s_n}),\ \WstatusOf{f} = \Pol
		\\
			T&\text{if } s = f(\Seq{s_n}),\ \WstatusOf{f} = \Max
		\end{Cases}
	\end{align*}
	where
	\begin{align*}
		S &= \Bigl\{ \WeiVarOf{f} + \sum_{i=1}^{n} \CoefVarOf{f,i} \cdot p_i\ \big|\ 
		p_1 \in \XWsmOf{s_1}, \dots, p_n \in \XWsmOf{s_n} \Bigr\}
	\\
		T &= \{ \WeiVarOf{f} \} \cup 
		\{
			\PenVarOf{f,i} + \CoefVarOf{f,i} \cdot p \mid p \in \XWsmOf{s_i} \text{, }
			i \in \SetOf{1,\dots,n}
		\}
	\end{align*}
\end{definition}

Now the encoding of $>_\Amp$ and $\ge_\Amp$ are given as follows:
\[
	\Encode{s \geopt_\Amp t} \DefEq
	\bigwedge_{\Tp{m,M} \in \XWsmOf{t}}
	\bigvee_{\Tp{n,N} \in \XWsmOf{s}}
	\bigl( n \geopt m \And N \supseteq M \bigr)
\]

\begin{corollary}
	If the following formula is satisfiable:
	\[
		\ST \And \SIMP \And \WMIN \And
		\bigwedge_{l \to r \in \RR \cup \PP}\Encode{l \GS_\WPOmp r}\ \And
		\bigvee_{l \to r \in \PP'}\Encode{l \GT_\WPOmp r}
	\]
	then the DP processor that maps $\Tp{\PP,\RR}$ to $\{ \Tp{\PP\setminus\PP',\RR} \}$
	is sound.
	\qed
\end{corollary}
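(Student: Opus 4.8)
The plan is to derive this corollary from the generic encoding result \prettyref{thm:encode DP WPO} by instantiating the abstract algebra with $\Amp$. That theorem already takes correct encodings of $\AGS$ and $\AGT$ as black boxes, so the work specific to $\WPOmp$ reduces to two points: (i) checking that $\Encode{s \geopt_\Amp t}$ is genuinely an encoding of $\geopt_\Amp$ in the sense of the encoding definition, and (ii) accounting for the extra conjunct $\WMIN$, which is present because $\Amp$ is a \emph{parameterized} algebra and for a given assignment $\alpha$ the interpretation must actually map the carrier $\{a \in \Nat \mid a \ge \Wzero\}$ into itself. Once both points are settled, the soundness argument mirrors the proof of \prettyref{thm:encode DP WPO}: from a satisfying assignment $\alpha$ we read off the WPO instance $\Tp{\GS_\WPO^\alpha,\GT_\WPO^\alpha}$, invoke \prettyref{thm:WPO pair} to see it is a reduction pair, and conclude with \prettyref{thm:reduction pair}.

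First I would verify that $\alpha \models \WMIN$ guarantees $\Amp^\alpha$ is a well-defined well-founded $\Sig$-algebra, by showing $\widehat\alpha(s) \ge \Wzero$ by structural induction on $s$. For a symbol $f$ with $\WstatusOf{f} = \Pol$, its conjunct in $\WMIN$ gives either $\WeightOf{f} \ge \Wzero$, in which case the summation is at least $\Wzero$ outright, or some $\CoefOf{f,i} \ge 1$, in which case the $i$-th summand already exceeds the argument weight, which is $\ge \Wzero$ by the induction hypothesis. For $\WstatusOf{f} = \Max$ the same case split bounds the $\max$ below by $\Wzero$. Hence the carrier is closed under every $f_\Amp$; moreover $\Amp^\alpha$ is weakly monotone (the $\Pol$ part because coefficients are non-negative, the $\Max$ part as already shown for $\Amax$), and $\SIMP$ forces weak simplicity \wrt $\sigma^\alpha$, which is exactly the hypothesis of \prettyref{thm:WPO pair}.

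The heart of the argument is that $\Encode{s \geopt_\Amp t}$ soundly and completely captures $\geopt_{\Amp^\alpha}$. I would first prove, by structural induction, that the expanded weight $\XWsmOf{s}$ represents $\widehat\alpha(s)$ as the maximum of the generalized-weight expressions it collects: for $\Pol$ symbols multiplication and addition distribute over $\max$, producing the set $S$, while for $\Max$ symbols the union $T$ directly realises the nested $\max$. Then the universal statement $s \geopt_\Amp t$ is reduced to a max-set comparison: $\max S \geopt \max T$ holds under all carrier assignments iff every generalized weight $\Tp{m,M} \in \XWsmOf{t}$ is dominated by some $\Tp{n,N} \in \XWsmOf{s}$, where domination is captured by $n \geopt m \And N \supseteq M$, precisely the body of the encoding.

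The step I expect to be the main obstacle is the \emph{completeness} half of this last reduction: showing that if no $S$-element dominates a given $T$-element $\Tp{m,M}$, then some assignment into $\{a \ge \Wzero\}$ actually falsifies $\max S \geopt \max T$. One must manufacture a single valuation that makes the offending $T$-branch the maximum while no $S$-branch catches up, exposing either a constant deficiency ($n < m$) or a variable-multiplicity deficiency ($N \not\supseteq M$); the standard device is a super-increasing valuation driving high-index variables to dominate, in the spirit of the witnessing assignments $\alpha_0,\alpha_x$ used in \prettyref{lem:weight}. Non-negativity of the coefficients and the fact that the carrier is a bounded-below totally ordered subset of $\Nat$ are what keep this finitary characterisation faithful. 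Once it is in place, these encodings satisfy the hypothesis of \prettyref{lem:encode} and the corollary follows; the specialisation to $\WPOms$ is immediate, being the case where all coefficients are at most $1$.
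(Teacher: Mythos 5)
Your proposal is correct and follows essentially the same route the paper intends: the corollary is left as an immediate instance of \prettyref{thm:encode DP WPO} (via \prettyref{lem:encode}, \prettyref{thm:WPO pair} and \prettyref{thm:reduction pair}) once the $\Amp$-specific encoding via expanded weights and the role of $\WMIN$ in keeping the carrier closed are in place, which is exactly what you spell out. One remark: for the soundness of the DP processor only the ``$\alpha \models \Encode{s \geopt_\Amp t}$ implies $s \geopt_{\Amp^\alpha} t$'' half of the encoding correctness is needed, so the completeness direction you identify as the main obstacle, while relevant to the paper's claim that the encoding is complete, is not actually required for this corollary.
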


\subsection{Encoding WPO$(\Amat)$}

We omit presenting an encoding of the matrix interpretation method,
which can be found in \cite{EWZ08}.
In order to use a matrix interpretation in WPO,
however, \REV{}{a }small care is needed; one \REV{has}{have} to ensure
weak simplicity of $\Amat$ \wrt $\sigma$.
This can be done as follows:

\begin{lemma}
	If
	$\CmatOf{f,i}^{j,j} \ge 1$ for all
	$f \in \Sig_n$, $i \in \sigma(f)$ and
	$j \in \SetOf{1, \dots, d}$,
	then $\Amat$ is weakly simple \wrt $\sigma$.\qed
\end{lemma}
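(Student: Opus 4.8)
The plan is to unfold the definition of weak simplicity with respect to $\sigma$ and reduce the claim to a componentwise inequality on $\Nat^d$. Recall that $\Amat$ is weakly simple \wrt $\sigma$ precisely when, for every $f \in \Sig_n$ and every $i \in \sigma(f)$, we have $f_\Amat(\Seq{{\Vec{x}}_n}) \gs_\Amat \Vec{x}_i$ for all arguments $\Vec{x}_1,\dots,\Vec{x}_n \in \Nat^d$. Since the weak order $\gs_\Amat$ on $\Nat^d$ amounts to plain componentwise comparison $\ge$ (the distinguished first coordinate is required only to be weakly greater in the non-strict case), it suffices to verify
\[
	\VecWeightOf{f} + \sum_{k=1}^n \CmatOf{f,k}\cdot\Vec{x}_k \ge \Vec{x}_i
\]
componentwise.

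First I would discard the nonnegative summands. Because the carrier is $\Nat^d$ and every matrix $\CmatOf{f,k}$ has entries in $\Nat$, both $\VecWeightOf{f}$ and every product $\CmatOf{f,k}\cdot\Vec{x}_k$ are nonnegative vectors. Hence the left-hand side dominates $\CmatOf{f,i}\cdot\Vec{x}_i$ componentwise, and the goal reduces to showing $\CmatOf{f,i}\cdot\Vec{x}_i \ge \Vec{x}_i$.

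Next I would examine this inequality one component at a time. Writing $(\Vec{x}_i)_j$ for the $j$-th component of $\Vec{x}_i$, the $j$-th component of $\CmatOf{f,i}\cdot\Vec{x}_i$ equals $\sum_{l=1}^d \CmatOf{f,i}^{j,l}\cdot(\Vec{x}_i)_l$. All terms of this sum are nonnegative, so it is at least the diagonal term $\CmatOf{f,i}^{j,j}\cdot(\Vec{x}_i)_j$. The hypothesis $\CmatOf{f,i}^{j,j} \ge 1$ then yields $\CmatOf{f,i}^{j,j}\cdot(\Vec{x}_i)_j \ge (\Vec{x}_i)_j$, which holds for every $j \in \SetOf{1,\dots,d}$ and hence establishes $\CmatOf{f,i}\cdot\Vec{x}_i \ge \Vec{x}_i$, completing the argument.

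There is no serious obstacle here; the statement is essentially a bookkeeping exercise about nonnegative matrices (diagonal dominance in the weak sense). The only point that deserves care is correctly reading off that the weak part $\gs_\Amat$ collapses to componentwise $\ge$, so that the special role of the first coordinate, relevant only for the strict order $>_\Amat$, plays no part; once that is settled the diagonal computation is immediate.
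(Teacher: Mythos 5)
Your proof is correct and is precisely the intended argument: the paper states this lemma without proof (marking it as immediate), and your verification—reducing weak simplicity to componentwise domination, discarding the nonnegative off-terms, and using the diagonal hypothesis $\CmatOf{f,i}^{j,j} \ge 1$—is the natural way to fill in that omitted detail. Your observation that $\gs_\Amat$ on $\Nat^d$ collapses to componentwise $\ge$ is also accurate.
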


\subsection{Encoding for Reduction Orders}\label{sec:encode order}

\def\TOTAL{\mathsf{TOTAL}}

In case one wants an encoding for the reduction order $>_\WPO$ defined in
\prettyref{def:WPO}, then the status $\sigma$ must be total.
This can be ensured by
\REV{%
	enforcing $i \in \sigma(f)$ for all $i \in \{ 1, \dots, n \}$ and $f \in \Sig$,
	which is represented by
}{%
}%
the following formula:
\[
	\TOTAL \DefEq \BigAnd_{f\in\Sig_n}\BigAnd_{i=1}^{n}\PermedVarOf{f,i}
\]
or equivalently by replacing all $\PermedVarOf{f,i}$ by $\True$.
Note that $\TOTAL \And \SIMP$
enforces all the subterm coefficients to be greater than
\REV{%
	or equal to
}{}%
$1$.

\begin{theorem}\label{thm:encode order}
	If the following formula is satisfiable:
	\[	\TOTAL \And \ST \And \SIMP \And \WMIN \And \BigAnd_{l \to r \in \RR} \Encode{l \GT_\WPOmp r}
	\]
	then $\RR$ is orientable by $\WPOmp$.\qed
\end{theorem}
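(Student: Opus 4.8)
The plan is to extract a concrete witness from the satisfying assignment and then reduce the claim to the reduction-order soundness result \prettyref{thm:WPO simple} together with the encoding-correctness result \prettyref{lem:encode}. First I would fix an assignment $\alpha$ satisfying the displayed formula. This $\alpha$ induces a concrete algebra $\Amp^\alpha$, a precedence $\PGS^\alpha$ and a status $\sigma^\alpha$, and the goal becomes to show that $\GT_{\WPO(\Amp^\alpha,\sigma^\alpha)}$ is a reduction order in the sense of \prettyref{def:WPO} that orients every rule of $\RR$.

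The bulk of the work is verifying that the hypotheses of \prettyref{thm:WPO simple} hold for $\Amp^\alpha$ and $\sigma^\alpha$. Since $\alpha \models \WMIN$, the value $\Wzero$ is a lower bound for the weight of every term, so the carrier set $\{a \in \Nat \mid a \ge \Wzero\}$ is closed under all interpretations and $\Amp^\alpha$ is a well-defined well-founded $\Sig$-algebra. Weak monotonicity is inherent to $\Amp^\alpha$: the $\Pol$ case uses non-negative coefficients, and the $\Max$ case is weakly monotone as already shown for $\Amax$. Because $\alpha \models \TOTAL$, the status $\sigma^\alpha$ is total, and it is well-defined since $\alpha \models \ST$; consequently the weak simplicity enforced by $\alpha \models \SIMP$ holds in \emph{every} argument, i.e., $\Amp^\alpha$ is weakly simple in the ordinary sense. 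By \prettyref{thm:WPO simple}, $\GT_{\WPO(\Amp^\alpha,\sigma^\alpha)}$ is then a reduction order.

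It remains to connect this reduction order to what the encoding computes. Because $\sigma^\alpha$ is total, the strict order induced by \prettyref{def:WPOpS} coincides with the one induced by \prettyref{def:WPO}, as noted at the start of \prettyref{sec:pair instances}; here I rely on the encoding of \prettyref{sec:encode WPO} implementing the unrefined WPO. Applying \prettyref{lem:encode}, which is available since $\alpha \models \ST \And \SIMP$, to the final conjunct yields $l \GT_\WPOmp^\alpha r$ for every $l \to r \in \RR$. Hence $\RR$ is oriented by the reduction order $\GT_\WPOmp^\alpha$ and is therefore orientable by $\WPOmp$.

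I expect the main obstacle to be the bookkeeping in the previous paragraph: checking that totality of $\sigma^\alpha$ genuinely collapses the partial-status definition \prettyref{def:WPOpS} onto the reduction-order definition \prettyref{def:WPO} (the position quantifiers in cases \prettyref{item:WPO-simp} and \prettyref{item:WPO-args} then range over all arguments, and the refinements \prettyref{item:WPO-min}/\prettyref{item:WPO-max} are absent from the encoding), and that $\SIMP$ under $\TOTAL$ really supplies full weak simplicity rather than simplicity only in selected positions. All the remaining steps are direct appeals to \prettyref{thm:WPO simple} and \prettyref{lem:encode}.
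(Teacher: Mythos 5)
Your proposal is correct and follows exactly the route the paper intends: the theorem is stated with an omitted proof because it is an immediate consequence of \prettyref{lem:encode} (applicable since $\alpha \models \ST \And \SIMP$), the soundness result \prettyref{thm:WPO simple} for weakly monotone, weakly simple algebras, and the observation that $\TOTAL$ forces $\sigma^\alpha$ to be total so that \prettyref{def:WPOpS} collapses onto \prettyref{def:WPO} and $\SIMP$ yields full weak simplicity. Your write-up merely makes these bookkeeping steps explicit, which is fine.
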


\section{Optimizations}\label{sec:optimizations}

In our implementation, some
optimizations are performed during the encoding.
For example,
formulas like $\False \And \phi$ are reduced in advance 
to avoid generating meaningless formulas, and
temporary variables are inserted to avoid multiple occurrences of
an expression or a formula.
Moreover, we apply several optimizations that we discuss below.

\subsection{Fixing $\Wzero$}\label{sec:fixing w0}

We can simplify the encoded formulas by fixing $\Wzero$.
For KBO, Winkler \etal \cite{WZM12} show\REV{}{s} that
$\Wzero$ can be fixed to \REV{an }{}arbitrary $k > 0$ \REV{(\eg, $1$)}{\eg $1$}
without \REV{losing any power}{loosing the power of the order}.
\REV{%
	By adapting the proof of \cite[Lemma 3]{WZM12},
	it can be shown that $\Wzero$ can be fixed to $0$ for $\WPOsum$.
}{%
	Applying their technique,
	it can be shown that for $\WPOsum$, $\Wzero$ can be fixed to $0$.
}%
On the contrary to KBO, however,
\REV{%
	fixing $\Wzero > 0$ will affect the power,
	since the transformation of \cite{WZM12} may assign
	negative weights to some symbols
	when applied to the case $\Wzero = 0$.
}{%
	$\Wzero$ cannot be fixed to $k > 0$
	since transforming a weight function $\Tp{\Weight,0}$ into $\Tp{\Weight^k,k}$
	may assign negative weights to some symbols.
}%

\subsection{Fixing Weight Status}

For POLO and WPO using algebras $\Ams$ and $\Amp$,
it may not be practical to consider all possible weight statuses.
Hence, we introduce a heuristic for fixing $\Wstatus$.
In case of $\WPOms$, 
$\Wstatus$ should at least satisfy the following condition
for all $l \to r \in \RR \cup \PP$:
\[
	\BigAnd_{\Tp{m,M} \in \XWsmOf{r}}
	\BigOr_{\Tp{n,N} \in \XWsmOf{l}}
	N \supseteq M
\]
since otherwise the formula
$\BigAnd_{l \to r \in \RR} \Encode{l \GS_\WPOms r}$
is trivially unsatisfiable.
Hence in our implementation,
we \REV{require that}{consider} $\Ams$ and $\Amp$ are induced by
the weight status which minimizes 
the number of $f$ with $\WstatusOf{f} = \Max$,
while satisfying the above condition.

\subsection{Reducing Recursive Checks}\label{sec:reduce recursion}

Encoding KBO as a reduction order \cite{ZHM09}
is notably efficient,
because KBO does not have \REV{}{a }recursive checks like LPO or WPO.
For WPO, we can reduce formulas for recursive checks
by restricting $w_0 > 0$, since under \REV{this}{the} restriction\REV{,}{}
$f(\Seq{s_n}) >_\Amp s_i$ \REV{holds }{}whenever $n \ge 2$ and $\WstatusOf{f} = \Pol$.
Hence if $n \ge 2$ and $\WstatusOf{f} = \Pol$,
we reduce the formula $f(\Seq{s_n}) \GSopt_1 t$ to $f(\Seq{s_n}) \GSopt_2 t$.
Analogously if $m \ge 2$ and $\WstatusOf{g} = \Pol$,
we reduce the formula $f(\Seq{s_n}) \GSopt_2 g(\Seq{t_m})$ to
the following:
\[
	\PrecVarOf{f} > \PrecVarOf{g} \Or
	\PrecVarOf{f} = \PrecVarOf{g} \And
	\Encode{\AppPerm{\sigma(f)}{s}{n} \GSopt_\WPOAS^\Lex
			\AppPerm{\sigma(g)}{t}{m}
	}
\]
without generating formulas for recursive checks corresponding to
cases \prettyref{item:WPO-simp} and \prettyref{item:WPO-args}.
Note \REV{however that}{that however,} this simplification does not apply
when encoding reduction pairs using argument filtering,
as we will see in \prettyref{sec:experiments pair}.

\section{Experiments}\label{sec:experiments}

In this section we examine the performance of WPO
both as a reduction order and in the DP framework.
\REV{%
	We implemented a simple form of the DP framework as the 
	\emph{Nagoya Termination Tool} (\NaTT)
	and incorporated WPO as a DP processor \cite{YKS14b}.
	The
}{We implemented the}%
 encodings presented in \prettyref{sec:encodings}
and optimizations presented in \prettyref{sec:optimizations}%
\REV{ are implemented}{}.
In the encodings of $\Apol$ and $\Amp$,
we choose $3$ \REV{as upper bound}{for upper bounds} of weights and coefficients\REV{,
	in order to achieve a practical runtime}{}.
For comparison, KBO, TKBO, LPO,
polynomial interpretations with or without max
and matrix interpretations are
implemented in the same manner.
For the DP framework,
we implemented
the estimation of \emph{dependency graphs} in \cite{GTS05},
and \emph{strongly connected components} are sequentially processed
in order of size where smaller ones \REV{come first}{are precedent}.
Moreover,
\emph{usable rules} \wrt argument filters are also implemented by
following the encoding proposed in \cite{CSLTG06}.

The test set of termination problems are 
the 1463 TRSs from the TRS Standard category of TPDB 8.0.6 \cite{TPDB13}.
The experiments are run on a server equipped with 
two quad-core Intel Xeon W5590 processors running at a clock rate of 3.33GHz
and 48GB of main memory, though only one thread of SMT solver runs at once.
As the SMT solver, we choose \Zthreevar{4.3.1}.%
\footnote{\url{http://z3.codeplex.com/}}
Timeout is set to 60s, as in the \emph{Termination Competition} \cite{TC13}.
Details of the experimental results are available at
\url{http://www.trs.cm.is.nagoya-u.ac.jp/papers/SCP2014/}.

\subsection{Results for Reduction Orders}

First we evaluated WPO as a reduction order
by directly testing orientability for input TRSs.
The results are listed in \prettyref{tab:order}.
Since KBO, POLO($\Asum$), WPO($\Asum$) are only applicable for
non-duplicating TRSs,
the test set is split into
non-duplicating ones (consisting of 439 TRSs) and
duplicating ones (consisting of 1024 TRSs).
In the table,
\REV{the }{}`yes' column indicates the number of successful termination proofs,
`T.O.' indicates the number of timeouts, and `time' indicates the total time.
\REV{
	To emphasize the benefit of WPO,
	we also compare it with arbitrary lexicographic compositions of
	POLO, KBO, and LPO (`POLO+KBO+LPO' row).
}{}%
\begin{table}[tb]
\caption{Results for Reduction Orders\label{tab:order}}
\centering
\begin{tabular}{cc@{\quad}rc@{\ }rl@{\quad}rc@{\ }r}
\hline
	&&\multicolumn{3}{c}{non-dup. TRSs}
	&&\multicolumn{3}{c}{dup. TRSs}\\
\cline{3-5}\cline{7-9}
	order&algebra	&yes&{T.O.}&time&&yes&{T.O.}&time\\
\hline
	POLO&$\Asum$	&41	&0	&4.45	&&--	&--	&--	\\
	POLO&$\Ams$		&60	&0	&4.46	&&19	&0	&	\\
	LPO	&			&90	&0	&31.64	&&90	&0	&35.39\\
	KBO	&			&115&0	&6.20	&&--	&--	&--\\
\hdashline
	WPO	&$\Asum^+$	&126&0	&6.70	&&--	&--	&--		\\
	WPO	&$\Asum$	&\bf 135&0	&43.16	&&--	&--	&--	\\
	WPO	&$\Amax$	&109&0	&53.77	&&125	&0	&49.49	\\
	WPO	&$\Ams$		&\bf 135&0	&42.72	&&\bf 138	&0	&66.15\\
\hline
	POLO&$\Apol$	&104&3	&203.37	&&21	&10	&1065.34\\
	POLO&$\Amp$		&104&3	&203.07	&&39	&8	&608.76	\\
	TKBO&			&132&3	&226.33	&&27	&12	&1414.62\\
\hdashline
	WPO	&$\Apol$	&\bf 149&3	&280.92	&&29	&12	&1495.55\\
	WPO	&$\Amp$		&\bf 149&3	&280.86	&&\bf 138	&9	&1008.67\\
\hline
	\multicolumn{2}{c}{\REV{POLO+KBO+LPO}{}}
					&\REV{130}{}&\REV{0}{}	&\REV{35.35}&&\REV{92}{}&\REV{0}{}&\REV{51.35}{}\\
\hline
\end{tabular}
\end{table}

We point \REV{out }{}that $\WPO(\Ams)$ is a balanced choice;
it is significantly stronger than existing orders,
while the runtime is much better than
involving non-linear SMT solving (last 5 columns).
Note that we directly solve non-linear problems using \Zthree;
\REV{it}{It} may be possible to improve efficiency by
\REV{\eg, a}{\eg} SAT encoding like \cite{FGMSTZ07}.
In that case, we expect $\WPO(\Amp)$ to become a practical choice.
If \REV{}{the }efficiency is the main concern, then
$\WPO(\Asum^+)$, a variant of $\WPOsum$ with $\Wzero > 0$,
is a reasonable substitute for KBO.
This efficiency is due to the reduction of recursive checks
proposed in \prettyref{sec:reduce recursion}.

\subsection{Results for Reduction Pairs}\label{sec:experiments pair}

Second,
we evaluated WPO as a reduction pair.
\prettyref{tab:pair} compares the power of the reduction pair processors.
In `total status' column,
we apply standard total statuses and argument filtering to
obtain a reduction pair from a reduction order, as in
\cite{YKS13}.
Because of argument filtering,
\REV{the }{}existence of duplicating rules \REV{is}{are} not an issue in this setting.
For POLO, statuses and argument filtering are ignored
(the latter is considered as $0$-coefficient).
\begin{table}[tb]
\caption{Results for Reduction Pairs\label{tab:pair}}
\centering
\begin{tabular}{cccc@{\quad}c@{\ }rlc@{\quad}c@{\ }r}
\hline
	&&\qquad&\multicolumn{3}{c}{total status}
	&\quad&\multicolumn{3}{c}{partial status}
\\\cline{4-6}\cline{8-10}
	order&algebra&&yes&T.O.&\multicolumn{1}{c}{time}
		&&yes&T.O.&\multicolumn{1}{c}{time}
\\\hline
	POLO&$\Asum$	&&512&0	&150.99	&&--	&--	&--\\
	POLO&$\Ams$		&&522&0	&300.65	&&--	&--	&--\\
	LPO	&			&&502&0	&435.62	&&--	&--	&--\\
	KBO	&			&&497&3	&1001.55&&520	&4	&1238.67\\
\hdashline
	WPO	&$\Asum$	&&514&3	&907.27	&&560	&5	&1244.04\\
	WPO	&$\Amax$	&&548&7	&1269.48&&637	&13	&1846.06\\
	WPO	&$\Ams$		&&\bf 578&5	&1261.63&&\bf 675	&12	&1827.01\\
\hline
	POLO&$\Apol$	&&544&19	&1958.44	&&--	&--	&--\\
	POLO&$\Amp$		&&540&18	&1889.86	&&--	&--	&--\\
	POLO&$\Amat$	&&\bf 645&480	&32367.26	&&--	&--	&--\\
	TKBO&			&&516&187	&15665.26	&&539	&178&15799.28\\
\hdashline
	WPO	&$\Apol$	&&527&172	&14535.24	&&579	&153&13579.95\\
	WPO	&$\Amp$		&&560&88	&7678.43	&&\bf 672	&94	&9269.36\\
	WPO	&$\Amat$	&&-- &--	&--			&&538	&640&42067.45\\
\hline
	\REV{\THOR}{}&	&&\REV{418}{}&\REV{261}{}&\REV{18550.62}{}&&--	&--	&--\\
\hline
\end{tabular}
\end{table}
The power of WPO is still measurable here. On
the contrary to the reduction order case,
$\WPOsum$ outperforms KBO both in power and efficiency.
This is because 
KBO needs formulas for recursive comparison that resembles WPO,
when argument filters are considered.
Moreover,
encodings of weights are more complex in KBO, since $w_0$ cannot
be fixed to $0$ as discussed in \prettyref{sec:fixing w0}.
Finally, KBO needs extra constraints that correspond to \REV{}{the }admissibility.

In `partial status' column, we moreover admit partial statuses
of \prettyref{def:WPOpS}.
We also apply partial status for KBO as in \cite{YKS13b} but
not for LPO,
since LPO does not benefit from partial statuses because
weights are not considered.
The power of WPO is much more significant in this setting, and
$\WPOms$ is about 30\% stronger than any other existing techniques.
Though the efficiency is sacrificed for partial statuses,
this is not a severe problem in the DP framework,
as we will see in the next section.
On the other hand, our implementation of
the instances of WPO that require non-linear SMT solving are extremely time-consuming.
Especially, $\WPOmat$ \REV{loses}{looses} 107 problems by timeout compared to the
standard matrix interpretation method.
We conjecture that the situation can be improved by SAT encoding or 
by using other non-linear SMT solvers such as \cite{ZM10,BLNRR12}.
\REV{%

	In order to estimate the power of RPOLO,
	we also ran an experiment with \THOR,%
	\footnote{\url{http://www.lsi.upc.edu/~albert/term.html}}
	the only termination prover having RPOLO implemented, as far as we know.
	Note however that it might be unfair to compare the results directly;
	\THOR is specialized to higher-order case and is based on MSPO,
	while our implementation is based on the DP framework.
}{}%

\subsection{Combining DP Processors}\label{sec:experiments combination}

\REV{}{Finally, we evaluate WPO in a more practical use for termination provers.}%
In \REV{}{the }modern termination provers,
DP processors are combined in the DP framework and
weak but efficient ones are applied first.
\REV{%
	The default strategy of \NaTT sequentially applies
	the \emph{rule removal processor} \cite{GTS04},
	the \emph{(generalized) uncurrying} \cite{HMZ13,ST11},
	reduction pair processors including 
	standard POLO, LPO, POLO with max \cite{FGMSTZ08} and
	WPO($\Ams$) with partial status,
	and then a simple variant of the matrix interpretation method \cite{EWZ08}.
	When all reduction pair processors fail,
	a naive loop detection is performed to conclude nontermination.
	In \prettyref{tab:combination},
	we compare the following settings:
	`\NaTT' (the default strategy described above),
	`\NaTT w/o WPO' (WPO is replaced by KBO),%
	\footnote{However, KBO does not contribute in this strategy.}
	\AProVEver{2014}, and \TTTTver{1.15}.
	The `no' column indicates the number of successful nontermination proofs.
}{%
	In \prettyref{tab:combination},
	we compare two strategies for combining DP processors.
	For `existing' strategy, we sequentially apply
	the \emph{rule removal processor} \cite{GTS04},
	the \emph{(generalized) uncurrying} \cite{HMZ13,ST11},
	and then reduction pair processors including 
	standard POLO, LPO,
	POLO with max \cite{FGMSTZ08},
	KBO,\footnote{However, KBO does not contribute in this strategy.}
	and then a simpler variant of the matrix interpretation method \cite{EWZ08}.
	In `new' strategy, we replace KBO by $\WPOms$ with partial status.
}%
\begin{table}[tb]
\caption{Results for Combination\label{tab:combination}}
\centering
\begin{tabular}{cccccr}
\hline
	\REV{tools}{strategy}
				&yes		&\REV{no}{}	&\REV{maybe}{}&T.O.		&\multicolumn{1}{c}{time}\\
\hline
	\REV{\NaTT}{new}
				&848		&\REV{173}{}&\REV{429}{}&\REV{13}{10}	&18\REV{65.50}{36.30}\\
	\REV{\NaTT w/o WPO}{existing}
				&810		&\REV{173}{}&\REV{467}{}&\REV{13}{10}	&\REV{2023.18}{1714.07}\\
	\REV{\AProVE}
				&\REV{1020}{}&\REV{271}{}&\REV{0}{}&\REV{173}{}		&\REV{15123.48}{}\\
	\REV{\TTTT}	&\REV{788}{}&\REV{193}{}&\REV{417}{}&\REV{65}{}		&\REV{13784.43}{}\\
\hline
\end{tabular}
\end{table}%
In this setting, \REV{\NaTT}{our tool} discovered
termination proofs for \REV{36}{40} of \REV{159}{161} problems
whose termination
\REV{%
	could not be proved by any other tools participated
}{%
	are unknown
}%
in the full-run of the Termination Competition
\REV{%
	2013 \cite{TC13}. For 29 of these problems,
}{%
	2011 \cite{TC13}, and for 29 of them,
}%
WPO is essential.%
\footnote{Due to the efficiency of our implementation,
our tool proves \REV{7}{11} open problems in TPDB without using WPO.}
\REV{%
	In the competition,
	\NaTT finished in the remarkable second place in the TRS standard category.
}{}%

\section{Conclusion}\label{sec:conclusion}

We introduced the weighted path order both as a reduction order 
and as a reduction pair.
We presented several instances of WPO as reduction orders:
$\WPOsum$ that subsumes KBO,
$\WPOpol$ that subsumes POLO and TKBO,
$\WPOmax$ that subsumes LPO,
$\WPOms$ that unifies KBO and LPO, and
$\WPOmp$ that unifies all of them.
Moreover, we applied partial status for WPO to obtain a reduction pair,
and presented further refinements.
We show that as a reduction pair, WPO subsumes
KBO, LPO and TKBO with argument filters, POLO, and matrix interpretations.
We also presented SMT encodings for these techniques.
The orientability problems of $\WPOsum$, $\WPOmax$ and $\WPOms$ are decidable,
since they are reduced to satisfiability problems of linear integer arithmetic
which is known to be decidable.
Finally, we verified through experiments the significance of our work
both as a reduction order and as a reduction pair.
In order to keep the presentation simple,
we did not present $\WPO$ with \emph{multiset status}.
Nonetheless, it is easy to define $\WPO$ with multiset status and verify that
$\WPOmax$ with multiset status \REV{subsumes}{encompasses} RPO.

We only considered a straightforward method for combining $\WPOpol$ and $\WPOmax$
using `weight statuses', and moreover heuristically fixed the weight status.
We leave it for future work to
search for other possible weight statuses, or to
find more sophisticated combination\REV{s}{} of max-polynomials
such as $f_\A(x,y,z) = x + \max(y,z)$, or
even trying other algebras including
\emph{ordinal interpretations} \cite{WZM12,WZM13}.
For efficiency, real arithmetic is also attractive to consider, since
SMT for \REV{}{the }real arithmetic is often more efficient than for
\REV{}{the }integer arithmetic.
To this end, we will have to reconstruct the proof \REV{of}{for}
well-foundedness of WPO, since our current proof relies on
well-foundedness of the underlying order,
which does not hold anymore for real numbers.
Another obvious future work is to extend WPO for higher\REV{-}{ }order case.
Since RPOLO has strength in its higher\REV{-}{ }order version \cite{BBRR13},
we expect their technique can be extended for WPO.

\paragraph*{Acknowledgments}

\REV{%
	We are grateful to the anonymous reviewers for their careful inspections
	and comments that significantly improved the presentation of this paper.
	We thank Sarah Winkler and Aart Middeldorp for discussions at
	the early stages of this work.
	We thank Florian Frohn and J\"urgen Giesl for
	their helps in experiments with \AProVE, and
	Albert Rubio, Miquel Bofill and Cristina Borralleras for
	their helps in experiments with \THOR.
}{%
	We thank Sarah Winkler and Aart Middeldorp for discussions
	and the anonymous reviewers of previous versions of this paper
	for helpful comments.
}%
This work was supported by JSPS KAKENHI \#24500012.%
\REV{%
\appendix
\section{Omitted Proofs}

	In this appendix, we prove several properties that
	are needed for soundness of WPO.
}{%
	In the rest of this section,
	we prove several properties that \REV{are}{is} needed for
	\prettyref{def:WPOpS} to define a correct reduction pair.
}%
\REV{The first}{First} one is obvious from the definition.

\begin{lemma}\label{lem:WPO strict}
	${\GT_\WPO}\subseteq{\GS_\WPO}$.\qed
\end{lemma}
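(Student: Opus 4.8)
The plan is to prove the inclusion by well-founded induction on $|s| + |t|$, exploiting the fact that the defining clauses of $\GT_\WPO$ and $\GS_\WPO$ in \prettyref{def:WPOpS} coincide everywhere except in the lexicographic subcase \prettyref{item:WPO-mono}. By the optional-parenthesis convention, $\GT_\WPO$ is the reading of \prettyref{def:WPOpS} in which the marked comparison $\GSopt_\WPO^\Lex$ is read strictly, and $\GS_\WPO$ the reading in which it is read non-strictly; all the \emph{unmarked} conditions, namely $s \AGT t$ in \prettyref{item:WPO-gt}, $s \AGS t$ in \prettyref{item:WPO-ge}, $s_i \GS_\WPO t$ in \prettyref{item:WPO-simp}, and $\ForAll{j \in \sigma(g)} s \GT_\WPO t_j$ in \prettyref{item:WPO-args}, are literally identical in the two readings.

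First I would dispose of the base case: since \prettyref{def:WPOpS} provides no clause deriving $x \GT_\WPO t$ for a variable $x$ (only $x \GS_\WPO x$ is given for variable left-hand sides), the left-hand side $s$ of any strict comparison is necessarily a non-variable $f(\Seq{s_n})$, and the induction can proceed over the clauses witnessing $s \GT_\WPO t$. If the derivation uses \prettyref{item:WPO-gt}, \prettyref{item:WPO-simp}, or \prettyref{item:WPO-args} together with \prettyref{item:WPO-prec}, then the very same condition witnesses $s \GS_\WPO t$, because none of these conditions is affected by the strict/non-strict reading; each of these cases is therefore immediate.

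The only remaining case is \prettyref{item:WPO-mono}: here $t = g(\Seq{t_m})$, $f \PSIM g$, the side condition $\ForAll{j \in \sigma(g)} s \GT_\WPO t_j$ carries over unchanged, and I have $\AppPerm{\sigma(f)}{s}{n} \GT_\WPO^\Lex \AppPerm{\sigma(g)}{t}{m}$ but must instead produce the non-strict $\AppPerm{\sigma(f)}{s}{n} \GS_\WPO^\Lex \AppPerm{\sigma(g)}{t}{m}$. This is where the induction hypothesis enters: the entries compared lexicographically are arguments $s_i$ and $t_j$ of $s$ and $t$, so every compared pair has strictly smaller total size, and the hypothesis yields ${\GT_\WPO} \subseteq {\GS_\WPO}$ on those pairs. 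It then suffices to invoke monotonicity of the lexicographic extension, that ${\GT^\Lex} \subseteq {\GS^\Lex}$ whenever ${\GT} \subseteq {\GS}$ holds componentwise: the weak prefix of the strict lexicographic comparison already uses $\GS_\WPO$, and the single strict step is weakened to the corresponding $\GS_\WPO$ step by the induction hypothesis, giving the desired non-strict lexicographic comparison and hence $s \GS_\WPO t$ by \prettyref{item:WPO-mono}.

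I expect the only point requiring any care to be the bookkeeping of this lexicographic lifting under the filtered permutations $\sigma(f)$ and $\sigma(g)$ — matching the weak prefixes and confirming that the position at which the strict reading decreases may be relaxed to a weak step. Everything else is a direct clause-by-clause comparison of the two readings of \prettyref{def:WPOpS}, which is why the authors regard the statement as obvious.
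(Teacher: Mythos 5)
Your proof is correct and in essence the same as the paper's, which offers no argument at all because the strict and non-strict readings of \prettyref{def:WPOpS} differ only in the lexicographic subcase \prettyref{item:WPO-mono}, exactly the clause-by-clause observation you make. One small remark: the induction is not actually needed there, since ${\GT_\WPO^\Lex}\subseteq{\GS_\WPO^\Lex}$ holds already at the level of the lexicographic lifting (a strict lexicographic comparison is by definition also a non-strict one), whereas your ``weaken the single strict step'' move would not by itself yield the componentwise-weak form of $\GS_\WPO^\Lex$, as it constrains nothing beyond the position of the strict step.
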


Using the above lemma, we show compatibility of $\GS_\WPO$ and $\GT_\WPO$.

\begin{lemma}[Compatibility]\label{lem:WPO compatible}
	$\GT_\WPO$ is compatible with $\GS_\WPO$.
\end{lemma}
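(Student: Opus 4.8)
The plan is to unfold compatibility, ${\GS_\WPO}\circ{\GT_\WPO}\circ{\GS_\WPO}\subseteq{\GT_\WPO}$, into two ``mixed transitivity'' statements: \emph{left compatibility}, $s \GS_\WPO t$ and $t \GT_\WPO u$ imply $s \GT_\WPO u$, and \emph{right compatibility}, $s \GT_\WPO t$ and $t \GS_\WPO u$ imply $s \GT_\WPO u$. These suffice: given $s \GS_\WPO t$, $t \GT_\WPO u$ and $u \GS_\WPO v$, left compatibility yields $s \GT_\WPO u$, and then right compatibility yields $s \GT_\WPO v$. I would prove left and right compatibility simultaneously with ordinary transitivity of $\GS_\WPO$ and of $\GT_\WPO$ --- four statements at once --- by a single induction on $|s|+|t|+|u|$, because the recursive lexicographic case forces all four to reappear in the induction hypotheses on strictly smaller terms.

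Whenever one of the two premises is derived by case \prettyref{item:WPO-gt} (the algebra case $\AGT$), I would close the goal entirely inside the underlying algebra. Every WPO derivation records either $s \AGT t$ (case \prettyref{item:WPO-gt}) or $s \AGS t$ (the side condition of case \prettyref{item:WPO-ge}). Since $\AGT$ and $\AGS$ are orders they are transitive, and compatibility of the well-founded algebra together with reflexivity of $\AGS$ gives ${\AGT}\circ{\AGS}\subseteq{\AGT}$ and ${\AGS}\circ{\AGT}\subseteq{\AGT}$; composing the recorded algebra facts therefore produces $s \AGT u$, which re-enters case \prettyref{item:WPO-gt}. Thus the only genuine work is when both premises use case \prettyref{item:WPO-ge}, so that $s \AGS t \AGS u$ and hence $s \AGS u$ is secured, and I case-split on the subcases \prettyref{item:WPO-simp}/\prettyref{item:WPO-args}. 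The subterm subcases fall to the induction hypothesis plus \prettyref{lem:WPO strict}: if $s \GS_\WPO t$ comes from $s_i \GS_\WPO t$ with $i\in\sigma(f)$ (case \prettyref{item:WPO-simp}), then $s_i \GS_\WPO t \GT_\WPO u$ gives $s_i \GT_\WPO u$, hence $s_i \GS_\WPO u$, reinstating case \prettyref{item:WPO-simp} for $s \GT_\WPO u$; and if $t \GT_\WPO u$ comes from $t_j \GS_\WPO u$ while $s \GS_\WPO t$ supplies the obligation $s \GT_\WPO t_j$ of case \prettyref{item:WPO-args}, then $s \GT_\WPO t_j \GS_\WPO u$ gives $s \GT_\WPO u$ by right compatibility on smaller terms.

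The hard part will be the doubly recursive subcase where both premises are derived by case \prettyref{item:WPO-args}, with $t = g(\Seq{t_m})$ and $u = h(\Seq{u_p})$. Here the obligation $s \GT_\WPO u_k$ for $k\in\sigma(h)$ is recovered from $s \GS_\WPO t \GT_\WPO u_k$ (the $u_k$ are listed by the derivation of $t\GT_\WPO u$ and are smaller) via left compatibility, and the precedence obligation follows from transitivity of $\PGT$ and $\PSIM$. The delicate point is the status comparison when $f \PSIM g \PSIM h$, where I must compose $\AppPerm{\sigma(f)}{s}{n} \GS_\WPO^\Lex \AppPerm{\sigma(g)}{t}{m}$ with $\AppPerm{\sigma(g)}{t}{m} \GT_\WPO^\Lex \AppPerm{\sigma(h)}{u}{p}$ into $\AppPerm{\sigma(f)}{s}{n} \GT_\WPO^\Lex \AppPerm{\sigma(h)}{u}{p}$. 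I would isolate this as a separate compatibility lemma for the lexicographic extension of the pair $\Tp{\GS_\WPO,\GT_\WPO}$, which reduces positionwise to the four component-wise statements applied to the listed arguments --- all strictly smaller than $s$, $t$, $u$ --- and is therefore covered by the same induction. The only wrinkle beyond the textbook lexicographic argument is that the \emph{partial} statuses make the three lists range over possibly different index sets and lengths; this is absorbed by the length-handling already built into the lexicographic extension, so the argument still proceeds by comparing leading positions and recursing, with no new case genuinely outside the component-wise compatibility already being proved.
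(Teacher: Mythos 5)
Your proposal is correct and follows essentially the same route as the paper's proof: a mutual induction on term sizes (the paper uses the lexicographic triple $\Tp{|s|,|t|,|u|}$ and writes out only the direction $s \GS_\WPO t \GT_\WPO u$, declaring the symmetric case and transitivity analogous), dispatching the algebra cases through compatibility of $\AGT$ with $\AGS$, the subterm subcases through the induction hypothesis together with \prettyref{lem:WPO strict}, and the status subcase componentwise on the lexicographic extension. The only point you omit is the refined subcases \prettyref{item:WPO-min} and \prettyref{item:WPO-max} of $\GS_\WPO$; the paper dismisses the case $s \in \Vars$ in one line by observing that $\sigma(g) = \List\Empty$ and the minimality of $g$ make a subsequent $t \GT_\WPO u$ impossible, and your argument would need a corresponding (equally short) remark once those refinements are added.
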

\begin{proof}
	Supposing $s \GS_\WPO t \GT_\WPO u$, we
	show $s \GT_\WPO u$ by induction on $\Tp{|s|,|t|,|u|}$.
	The other case, $s \GT_\WPO t \GS_\WPO u$, is analogous.

	From the definition,
	it is obvious that
\REV{%
	$t$ is of the form $g(\Seq{t_m})$
}{%
	$s$ and $t$ are 
	\REV{of the }{in }form $f(\Seq{s_n})$ and $g(\Seq{t_m})$, \resp,
}%
	and moreover $s \AGS t \AGS u$.
	If $s \AGT t$ or $t \AGT u$, then
	we obtain $s \GT_\WPO u$ by case \prettyref{item:WPO-gt}.
\REV{%
	If $s \in \Vars$, then only case \prettyref{item:WPO-min} is applicable for 
	$s \GT_\WPO t$.
	In this case, $t \GT_\WPO u$ is not possible,
	since $\sigma(g) = \List\Empty$ and $g \PNGT h$ for any $h$.
	Hence, $s$ is of the form $f(\Seq{s_n})$.
}{}%
	If $s_i \GS_\WPO t$ for some $i \in \sigma(f)$, then
	by the induction hypothesis and \prettyref{lem:WPO strict}, we get $s_i \GS_\WPO u$
	and hence case \prettyref{item:WPO-simp} applies for $s \GT_\WPO t$.
	Now suppose $s \GT_\WPO t_j$ for every $j \in \sigma(g)$ and
	either $f \PGT g$ or $f \PSIM g$ and
	$\AppPerm{\sigma(f)}{s}{n} \GS_\WPO^\Lex \AppPerm{\sigma(g)}{t}{m}$ holds.
	There remain the following cases to consider for $t \GT_\WPO u$:
	\begin{itemize}
	\item
		$t_j \GS_\WPO u$ for some $j \in \sigma(g)$.
		In this case, we have $s \GT_\WPO t_j \GS_\WPO u$.
		Hence\REV{, by}{} the induction hypothesis
		\REV{we conclude}{concludes} $s \GT_\WPO u$.
	\item
		$u = h(\Seq{u_l})$ and
		$t \GT_\WPO u_k$ for every $k \in \sigma(h)$.
		By the induction hypothesis, we obtain $s \GT_\WPO u_k$.
		Moreover, if $f \PGT g$ or $g \PGT h$, then
		we get $f \PGT h$ and \prettyref{item:WPO-prec} applies.
		Otherwise, we have $f \PSIM g \PSIM h$ and
		\(
			\AppPerm{\sigma(f)}{s}{n} \GS_\WPO^\Lex
			\AppPerm{\sigma(g)}{t}{m} \GT_\WPO^\Lex
			\AppPerm{\sigma(h)}{u}{k}
		\).
		From the induction hypothesis, we obtain
		$\AppPerm{\sigma(f)}{s}{n} \GT_\WPO^\Lex \AppPerm{\sigma(h)}{u}{k}$.
		Hence \prettyref{item:WPO-mono} applies.
		\qedhere
	\end{itemize}
\end{proof}

In order to prove well-foundedness of $\GT_\WPO$,
we define the set $\SN$ of \emph{strongly normalizing} terms
as follows:
$s \in \SN$
iff there exist\REV{s}{} no infinite reduction
sequence $s \GT_\WPO s_1 \GT_\WPO s_2 \GT_\WPO \dots$ beginning from $s$.
In the next two lemmas, we prove that all terms are in $\SN$.

\begin{lemma}\label{lem:WPO compose}
	Suppose that $\A$ is weakly simple \wrt $\sigma$.
	If $s = f(\Seq{s_n})$ with
	$s_i \in \SN$ for every $i \in \sigma(f)$, then
	$s \GT_\WPO t$ implies
	$t \in \SN$.
\end{lemma}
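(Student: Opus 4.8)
The plan is to analyse the derivation of $s \GT_\WPO t$ according to \prettyref{def:WPOpS} and to prove $t \in \SN$ by a well-founded induction on the pair $\Tp{s,t}$, ordered lexicographically by two measures: primarily a relation $R$ that records the ``reason'' for $s \GT_\WPO t$, and secondarily the size $|t|$ of the reduct. Concretely, I would set $s \mathrel R t$ iff $s \AGT t$, or $s \AGS t$ together with $\Root(s) \PGT \Root(t)$, or $s \AGS t$ together with $\Root(s) \PSIM \Root(t)$ and the status tuple of $s$ being $\GT_\WPO^\Lex$ above that of $t$. The pair $\Tp{s',t'}$ counts as smaller than $\Tp{s,t}$ if $s \mathrel R s'$, or $s=s'$ and $|t'|<|t|$. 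Case \prettyref{item:WPO-simp} is dealt with at once and without the induction: there $s_i \GS_\WPO t$ for some $i\in\sigma(f)$ with $s_i\in\SN$, so any infinite $\GT_\WPO$-sequence starting from $t$ would, by compatibility (\prettyref{lem:WPO compatible}) and reflexivity of $\GS_\WPO$, lift to one from $s_i$, contradicting $s_i\in\SN$; hence $t\in\SN$. The remaining cases \prettyref{item:WPO-gt} and \prettyref{item:WPO-args} carry the real work, and in both of them $s \mathrel R t$ holds by construction.

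The first step in these cases is to show that $t=g(\Seq{t_m})$ again satisfies the hypothesis of the lemma, i.e.\ that every status argument $t_j$ with $j\in\sigma(g)$ is strongly normalising. I would first argue $s \GT_\WPO t_j$: in case \prettyref{item:WPO-args} this is given directly, while in case \prettyref{item:WPO-gt} weak simplicity of $\A$ \wrt $\sigma$ yields $t \AGS t_j$, and then $s \AGT t \AGS t_j$ with compatibility of $\gs$ and $>$ gives $s \AGT t_j$, hence $s \GT_\WPO t_j$ by \prettyref{item:WPO-gt}. Since $t_j$ is a proper subterm of $t$, we have $|t_j|<|t|$, so the induction hypothesis applies to $\Tp{s,t_j}$ and yields $t_j\in\SN$. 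Thus all status arguments of $t$ lie in $\SN$.

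With $t$ now known to satisfy the hypothesis, I would conclude $t\in\SN$ by showing every reduct $t'$ with $t\GT_\WPO t'$ is strongly normalising. Here $s\mathrel R t$ has been established in both cases (through $f\PGT g$ or through the lexicographic decrease of the status tuples), so $\Tp{t,t'}$ is strictly below $\Tp{s,t}$ in the primary component of the measure; the induction hypothesis applied to the safe term $t$ then gives $t'\in\SN$. As $t$ has no other reducts, $t\in\SN$, completing the induction.

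The main obstacle is the well-foundedness of the primary measure, that is, of $R$ on the set of terms all of whose status arguments are in $\SN$, which is exactly what legitimises the outer induction. I would prove that no infinite $R$-chain exists in three phases. Evaluating every term of the chain under a single fixed assignment $\alpha_0$, each step is a $\gs$-step in the carrier, and those steps of the first kind are strict; since $>$ is well-founded and compatible with $\gs$, only finitely many first-kind steps can occur. Past that point $\Root$ is non-increasing \wrt $\PGS$, and as $\PGT$ is well-founded only finitely many second-kind steps can occur. In the remaining tail all roots are $\PSIM$-equivalent and the status tuples strictly decrease under $\GT_\WPO^\Lex$; as each term has its status arguments in $\SN$, this is an infinite descent of bounded-length tuples over the well-founded set $\Tp{\SN,\GT_\WPO}$, which is impossible. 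Reconciling the merely weak algebra decrease $\AGS$ available in case \prettyref{item:WPO-args} with the strict decrease the induction needs is precisely the delicate point that this layered argument resolves.
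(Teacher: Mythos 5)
Your proof is correct and follows essentially the same route as the paper: the paper performs induction on the tuple $\Tp{s,f,\AppPerm{\sigma(f)}{s}{n},|t|}$ ordered lexicographically by $\AGT$, $\PGT$, $\GT_\WPO^\Lex$ and $>$, which is exactly your relation $R$ paired with $|t|$, and it dispatches the three cases (compatibility for \prettyref{item:WPO-simp}, the $|t|$-component for the status arguments of $t$, the $R$-component for the reducts of $t$) just as you do. Your explicit three-phase justification of the well-foundedness of $R$ on terms with strongly normalizing status arguments is a welcome elaboration of a point the paper leaves implicit, but it is not a different argument.
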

\begin{proof}
	We perform
	induction on $\Tp{s, f, \AppPerm{\sigma(f)}{s}{n}, |t|}$
	which is ordered by the lexicographic composition of
	$\AGT$, $\PGT$, $\GT_\WPO^\Lex$ and $>$.
	Since \REV{the claim}{it} is obvious if $t \in \Var$,
	we consider $t = g(t_1,\dots,t_m)$.
	\begin{enumerate}
	\item Suppose $s \AGT t$.
		First we show that $t_j \in \SN$ for every $j \in \sigma(g)$.
		By the weak simplicity assumption, we have $s \AGT t \AGS t_j$ and hence
		$s \GT_\WPO t_j$.
		Thus by the induction hypothesis on the fourth component, we obtain $t_j \in \SN$.
		Now for arbitrary $u$ \st $t \GT_\WPO u$,
		the induction hypothesis on the first component yields $u \in \SN$.
	\item Suppose $s \AGS t$.
		There are two subcases to consider.
		\begin{enumerate}
		\item Suppose $s_i \GS_\WPO t$ for some $i \in \sigma(f)$.
			Then by the assumption, we have $s_i \in \SN$.
			Hence by \prettyref{lem:WPO compatible}, we obtain
			$t \in \SN$.
		\item Suppose $s \GT_\WPO t_j$ for every $j \in \sigma(g)$.
			Then by the induction hypothesis on the fourth component,
			we get $t_j \in \SN$.
			Consider \REV{an }{}arbitrary $u$ \st $t \GT_\WPO u$.
			Since we have either $f \PGT g$ or
			$f \PGS g$ and 
			$\AppPerm{\sigma(f)}{s}{n} \GT_\WPO^\Lex \AppPerm{\sigma(g)}{t}{m}$,
			$\Tp{s,f,\AppPerm{\sigma(f)}{s}{n},|t|}$ is greater than
			$\Tp{t,g,\AppPerm{\sigma(g)}{t}{m},|u|}$ by
			\REV{the }{}second or third component\REV{}{s}.
			Hence, the induction hypothesis yields $u \in \SN$.\qedhere
		\end{enumerate}
	\end{enumerate}
\end{proof}

\begin{lemma}[Well-foundedness]\label{lem:WPO well-founded}
	If $\A$ is weakly simple \wrt $\sigma$,
	then $\GT_\WPO$ is well-founded.
\end{lemma}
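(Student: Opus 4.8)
The plan is to prove the stronger statement that every term is strongly normalizing, i.e., $\Terms(\Sig,\Vars) \subseteq \SN$. Well-foundedness of $\GT_\WPO$ is then immediate, since any infinite descending chain $s_1 \GT_\WPO s_2 \GT_\WPO \dots$ would witness $s_1 \notin \SN$. I would establish $s \in \SN$ by structural induction on $s$, and the weak simplicity of $\A$ \wrt $\sigma$ is precisely the hypothesis needed to feed \prettyref{lem:WPO compose} in the inductive step.

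For the base case, let $s$ be a variable $x$. The main clause of \prettyref{def:WPOpS} applies only to terms of the form $f(\Seq{s_n})$, so I would check that $x$ admits no proper $\GT_\WPO$-successor: the reflexive step $x \GS_\WPO x$ and the refinement \prettyref{item:WPO-min} both refine only the \emph{weak} relation $\GS_\WPO$ (the precondition $s \AGS t$ being ensured there), hence no chain $x \GT_\WPO \dots$ can begin and $x \in \SN$ holds trivially.

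For the inductive step, take $s = f(\Seq{s_n})$ and assume by the induction hypothesis that $s_i \in \SN$ for every $i \in \{1,\dots,n\}$; in particular $s_i \in \SN$ for every $i \in \sigma(f)$. Since $\A$ is weakly simple \wrt $\sigma$, \prettyref{lem:WPO compose} applies and gives that every $t$ with $s \GT_\WPO t$ satisfies $t \in \SN$. I would conclude $s \in \SN$ by observing that an infinite descending chain out of $s$ must take a first step $s \GT_\WPO t$, whose tail is then an infinite descending chain out of $t$, contradicting $t \in \SN$.

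The genuine difficulty has already been absorbed into \prettyref{lem:WPO compose}, whose proof runs the nested lexicographic induction on $\Tp{s,f,\AppPerm{\sigma(f)}{s}{n},|t|}$ and relies on weak simplicity together with compatibility (\prettyref{lem:WPO compatible}). The remaining argument is therefore a routine structural induction, and the only point demanding care is the variable base case, where one must confirm that neither $x \GS_\WPO x$ nor the refinement \prettyref{item:WPO-min} ever yields a strict step issuing from a variable.
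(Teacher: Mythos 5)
Your proposal is correct and follows essentially the same route as the paper: both establish the stronger claim that every term is in $\SN$ by induction on the term (the paper uses $|s|$, you use structural induction, which is equivalent here), dispatch the variable base case as trivial, and invoke \prettyref{lem:WPO compose} with the weak-simplicity hypothesis for the inductive step. Your extra care in the base case --- checking that neither $x \GS_\WPO x$ nor refinement \prettyref{item:WPO-min} produces a strict step from a variable --- is a correct elaboration of what the paper leaves implicit.
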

\begin{proof}
	Let us show $s \in \SN$ for every term $s$
	by induction on $|s|$.
	\REV{The claim}{It} is trivial if $s \in \Vars$.
	Suppose $s = f(\Seq{s_n}) \GT_\WPO t$.
	By the induction hypothesis, we have $s_i \in \SN$ for every $i \in \SetOf{1,\dots,n}$.
	Hence by \prettyref{lem:WPO compose}, we get $t \in \SN$.\qedhere
\end{proof}

Now we prove that $\GS_\WPO$ and $\GT_\WPO$ are quasi\REV{-}{} and strict orders, \resp.

\begin{lemma}[Transitivity]\label{lem:WPO transitive}
	Both $\GS_\WPO$ and $\GT_\WPO$ are transitive.
\end{lemma}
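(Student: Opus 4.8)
The plan is to prove both transitivity statements by well-founded induction on $|s|+|t|+|u|$, using the compatibility result \prettyref{lem:WPO compatible} to dispose of every configuration in which at least one of the two given steps is strict. In fact transitivity of $\GT_\WPO$ needs no induction at all: given $s \GT_\WPO t$ and $t \GT_\WPO u$, \prettyref{lem:WPO strict} turns the second step into $t \GS_\WPO u$, so that $s \GT_\WPO t \GS_\WPO u$ and \prettyref{lem:WPO compatible} yields $s \GT_\WPO u$ at once. The substance of the lemma is therefore transitivity of $\GS_\WPO$.

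For $\GS_\WPO$ I would assume $s \GS_\WPO t$ and $t \GS_\WPO u$ and first peel off the strict configurations: if $s \GT_\WPO t$ or $t \GT_\WPO u$, then \prettyref{lem:WPO compatible} (together with \prettyref{lem:WPO strict}) already gives $s \GT_\WPO u \subseteq {\GS_\WPO}$. The genuine case is thus when \emph{both} steps are weak, i.e. each is justified by case \prettyref{item:WPO-ge} without a strict decrease. Here I would record that the algebra side conditions compose: since $\AGS$ is a quasi-order, $s \AGS t \AGS u$ gives $s \AGS u$, so unless $s \AGT u$ (which lands in \prettyref{item:WPO-gt} and is immediate), the precondition of \prettyref{item:WPO-ge} for $s \GS_\WPO u$ is in force. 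One also notes that \prettyref{item:WPO-simp} never yields a merely weak step, as it forces $s \GT_\WPO t$, so it is already covered by the strict case.

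The central case is when $s=f(\Seq{s_n})$, $t=g(\Seq{t_m})$ and $u=h(\Seq{u_l})$ are all compound and both weak steps use \prettyref{item:WPO-mono}: then $f\PSIM g\PSIM h$ gives $f\PSIM h$; each side condition $\ForAll{k}s\GT_\WPO u_k$ follows from $t\GT_\WPO u_k$ and $s\GS_\WPO t$ via \prettyref{lem:WPO compatible}; and the lexicographic comparison of the status lists composes by transitivity of $\GS_\WPO^\Lex$ and $\GT_\WPO^\Lex$, which reduces to the induction hypothesis applied to the strictly smaller argument terms. The remaining work is a routine split on the shapes of $s,t,u$ to reattach the two weak derivations through a single subcase of \prettyref{item:WPO-ge}.

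The delicate part — which I expect to be the main obstacle — is the interaction of the refinement subcases \prettyref{item:WPO-min} and \prettyref{item:WPO-max}, where a variable may sit at an end of the chain with a compound term in the middle, or conversely. These must be closed by propagating the structural side conditions along the chain: a symbol that is minimal (resp.\ maximal) \wrt $\PGS$ and carries empty status stays so along an $\PSIM$-chain, and an empty status list forces its partner in the lexicographic test to be empty as well, so the same refinement reapplies to the pair $s,u$. The genuinely awkward combination is \prettyref{item:WPO-min} followed by \prettyref{item:WPO-max}, which can arise only when the precedence collapses to a single $\PSIM$-class with all statuses equal to $\List\Empty$; this degenerate configuration has to be treated separately, verifying that the two endpoints are forced to agree so that the reflexive base case $x\GS_\WPO x$ applies. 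Establishing that this collapse leaves no escape is the crux of the proof.
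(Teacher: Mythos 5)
Your proposal follows essentially the same route as the paper: the paper's entire proof of this lemma is ``analogous to \prettyref{lem:WPO compatible}'', i.e.\ the same induction on $\Tp{|s|,|t|,|u|}$ with a case analysis on the two derivations, composing the side conditions $s \GT_\WPO t_j$ via compatibility and the lexicographic comparisons via the induction hypothesis. Your streamlinings are sound: transitivity of $\GT_\WPO$ does follow at once from \prettyref{lem:WPO strict} and \prettyref{lem:WPO compatible}, and cases \prettyref{item:WPO-simp} and \prettyref{item:WPO-prec} indeed establish the strict relation whenever they establish the weak one, so after peeling off the strict configurations only the weak instances of \prettyref{item:WPO-mono}, the base case, and the refinements remain.

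The one step that does not close is the one you single out as the crux, namely \prettyref{item:WPO-min} followed by \prettyref{item:WPO-max}. The endpoints are \emph{not} forced to agree in general: from $x \AGS t \AGS y$ with $x \neq y$ one only obtains $a \gs b$ for all $a,b$ in the carrier, i.e.\ that $\A$ is trivial, and a trivial algebra (which by compatibility and well-foundedness has ${\AGT} = \emptyset$) together with a single $\PSIM$-class and all statuses equal to $\List\Empty$ is not excluded by any hypothesis of the lemma. In that degenerate instance $x \GS_\WPO g(z) \GS_\WPO y$ holds by the two refinements while $x \GS_\WPO y$ fails for $x \neq y$, since the only clause applicable to a pair of variables is the reflexive base case. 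So your argument needs non-triviality of $\A$ (as assumed in \prettyref{thm:WPO refine}) as an explicit hypothesis, or the two refinements must not be enabled simultaneously in this degenerate configuration. This gap is inherited from, and hidden by, the paper's one-line proof rather than introduced by you; everything else in your proposal is correct.
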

\begin{proof}
	Analogous to \prettyref{lem:WPO compatible}.\qedhere
\end{proof}

\begin{lemma}[(Ir)reflexivity]
	$\GS_\WPO$ is reflexive and $\GT_\WPO$ is irreflexive.
\end{lemma}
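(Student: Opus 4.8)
The plan is to treat the two assertions separately and to lean on the facts already established in this appendix, most importantly well-foundedness. Throughout I assume, as in \prettyref{lem:WPO compose} and \prettyref{lem:WPO well-founded}, that $\A$ is weakly simple \wrt $\sigma$.

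For reflexivity of $\GS_\WPO$, I would prove $s \GS_\WPO s$ for every term $s$ by induction on the structure of $s$. If $s$ is a variable, the claim holds by the base clause $x \GS_\WPO x$ of \prettyref{def:WPOpS}. If $s = f(\Seq{s_n})$, then $s \AGS s$ since $\gs$ is reflexive, so the precondition of case \prettyref{item:WPO-ge} is met. By weak simplicity \wrt $\sigma$, for each $j \in \sigma(f)$ we have $s \AGS s_j$; combining this with $s_j \GS_\WPO s_j$ from the induction hypothesis, case \prettyref{item:WPO-simp} instantiated with $i = j$ establishes $s \GT_\WPO s_j$. Thus the side condition $\ForAll{j \in \sigma(f)} s \GT_\WPO s_j$ of case \prettyref{item:WPO-args} is satisfied with $g = f$. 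Since $f \PSIM f$ and, again by the induction hypothesis, each component of $\AppPerm{\sigma(f)}{s}{n}$ stands in $\GS_\WPO$ to itself so that $\AppPerm{\sigma(f)}{s}{n} \GS_\WPO^\Lex \AppPerm{\sigma(f)}{s}{n}$, case \prettyref{item:WPO-mono} applies and yields $s \GS_\WPO s$.

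Irreflexivity of $\GT_\WPO$ then follows immediately from well-foundedness (\prettyref{lem:WPO well-founded}): were $s \GT_\WPO s$ to hold for some term $s$, then $s \GT_\WPO s \GT_\WPO s \GT_\WPO \dots$ would be an infinite descending chain, contradicting $s \in \SN$. No circularity arises, since neither \prettyref{lem:WPO well-founded} nor its prerequisites appeal to irreflexivity.

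The routine ingredients are the variable base case and the reflexivity of the lexicographic extension. The one step that genuinely uses a hypothesis is deriving $s \GT_\WPO s_j$ for $j \in \sigma(f)$: this is where weak simplicity \wrt $\sigma$ is needed (to guarantee $s \AGS s_j$) and where it matters that the recursive condition in case \prettyref{item:WPO-simp} is the non-strict $\GS_\WPO$, so that the clause delivers the \emph{strict} relation $s \GT_\WPO s_j$ rather than merely $s \GS_\WPO s_j$. I expect this to be the only delicate point of the argument.
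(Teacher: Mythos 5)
Your proof is correct and follows essentially the same route as the paper, which establishes $s \GS_\WPO s$ by induction on $|s|$ (via case \prettyref{item:WPO-args}, using weak simplicity \wrt $\sigma$ to get $s \GT_\WPO s_j$ for $j \in \sigma(f)$) and derives irreflexivity of $\GT_\WPO$ directly from \prettyref{lem:WPO well-founded}. The paper merely labels the reflexivity induction ``easy''; you have supplied the details, correctly identifying the one non-trivial step.
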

\begin{proof}
	For arbitrary term $s$, $s \GS_\WPO s$ is easy by induction on $|s|$.
	Irreflexivity of $\GT_\WPO$ follows from
	\prettyref{lem:WPO well-founded}.\qedhere
\end{proof}

\REV{Now the}{Now} remaining properties required for a reduction pair are
\emph{stability} and \emph{weak monotonicity}, which are shown bellow.

\begin{lemma}[Stability]
	Both $\GS_\WPO$ and $\GT_\WPO$ are stable.
\end{lemma}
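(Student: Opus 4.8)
The plan is to first settle stability of the underlying term relations $\AGT$ and $\AGS$, and then to lift it to $\GS_\WPO$ and $\GT_\WPO$ by an induction following the recursive structure of \prettyref{def:WPOpS}. For the algebra relations, recall that $s \gsopt_\A t$ holds iff $\widehat\alpha(s) \gsopt \widehat\alpha(t)$ for every assignment $\alpha$. Given a substitution $\theta$ and an arbitrary assignment $\alpha$, I would define the composed assignment $\beta$ by $\beta(x) \DefEq \widehat\alpha(x\theta)$, so that $\widehat\alpha(s\theta) = \widehat\beta(s)$ and $\widehat\alpha(t\theta) = \widehat\beta(t)$ by a routine structural induction. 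Since $s \gsopt_\A t$ quantifies over \emph{all} assignments, instantiating it with $\beta$ yields $\widehat\alpha(s\theta) \gsopt \widehat\alpha(t\theta)$; as $\alpha$ was arbitrary, $s\theta \gsopt_\A t\theta$. Hence both $\AGT$ and $\AGS$ are stable.

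For the path order itself, I would prove the combined claim ``$s \GSopt_\WPO t$ implies $s\theta \GSopt_\WPO t\theta$'' by induction on $|s| + |t|$, case splitting on the clause of \prettyref{def:WPOpS} used to derive $s \GSopt_\WPO t$. Case \prettyref{item:WPO-gt} is immediate from stability of $\AGT$. In case \prettyref{item:WPO-ge} we have $s \AGS t$, hence $s\theta \AGS t\theta$, and it remains to reproduce the chosen subcase after substitution. Writing $s = f(\Seq{s_n})$ and, where applicable, $t = g(\Seq{t_m})$, note that $s\theta = f(s_1\theta,\dots,s_n\theta)$ and $t\theta = g(t_1\theta,\dots,t_m\theta)$ keep the same root symbols, so the precedence comparisons $f \PGT g$ and $f \PSIM g$ are untouched and the index sets $\sigma(f)$, $\sigma(g)$ are unchanged. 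For \prettyref{item:WPO-simp} the induction hypothesis gives $s_i\theta \GS_\WPO t\theta$ for the witnessing $i \in \sigma(f)$; for \prettyref{item:WPO-args} it gives $s\theta \GT_\WPO t_j\theta$ for every $j \in \sigma(g)$, and applying the hypothesis componentwise to the subterms listed by the statuses preserves the lexicographic comparison $\AppPerm{\sigma(f)}{s}{n} \GSopt_\WPO^\Lex \AppPerm{\sigma(g)}{t}{m}$. Thus the same subcase \prettyref{item:WPO-prec} or \prettyref{item:WPO-mono} reapplies to $s\theta$ and $t\theta$, uniformly for $\GS$ and $\GT$.

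The hard part will be the refinement subcases \prettyref{item:WPO-min} and \prettyref{item:WPO-max}, precisely because they hinge on one side being a variable, a syntactic feature that a substitution can destroy. For \prettyref{item:WPO-min}, $s$ is a variable $x$ with $x \AGS t = g(\Seq{t_m})$, $\sigma(g) = \List\Empty$ and $g$ minimal. If $x\theta$ is again a variable, the subcase reapplies verbatim; if $x\theta$ is compound, the clause no longer matches syntactically, but stability of $\AGS$ gives $x\theta \AGS t\theta$ and the argument of \prettyref{prop:least} goes through unchanged (the condition $\ForAll{j \in \sigma(g)}$ is vacuous and $\AppPerm{\sigma(f)}{s}{n} \GS_\WPO^\Lex \List\Empty$ holds trivially), so $x\theta \GS_\WPO t\theta$ by \prettyref{item:WPO-prec} or \prettyref{item:WPO-mono}. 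The subcase \prettyref{item:WPO-max} is dual: $t$ is a variable $y$, $s = f(\Seq{s_n})$ is ``greatest'', and $\A$ is strictly simple \wrt $\sigma$; if $y\theta$ is compound then the argument of \prettyref{prop:greatest} applies, using strict simplicity to obtain $s\theta \AGS y\theta \AGT (y\theta)_k$ and hence $s\theta \GT_\WPO (y\theta)_k$ for each $k \in \sigma(\Root(y\theta))$. With these two subcases handled, stability of both $\GS_\WPO$ and $\GT_\WPO$ follows.
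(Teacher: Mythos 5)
Your proposal is correct and follows essentially the same route as the paper: induction on $|s|+|t|$ with a case split on the clause of \prettyref{def:WPOpS}, using stability of $\AGT$ and $\AGS$ for cases \prettyref{item:WPO-gt} and \prettyref{item:WPO-ge}, the induction hypothesis for \prettyref{item:WPO-simp}, \prettyref{item:WPO-args} and the lexicographic comparison, and Propositions \ref{prop:least} and \ref{prop:greatest} for the refinement subcases. Your explicit proof that $\gsopt_\A$ is stable and your variable-versus-compound analysis of $x\theta$ in the refinement cases merely spell out steps the paper treats as obvious or delegates to those propositions.
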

\begin{proof}
	Let $s = f(\Seq{s_n}) \GSopt_\WPO t$ and $\theta$ 
	\REV{be }{}an arbitrary substitution.
	We show $s\theta \GSopt_\WPO t\theta$ by induction on $|s| + |t|$.
	\REV{The claim}{It} is obvious if $s \AGT t$. Otherwise, we have
	$s \AGS t$ and obviously $s\theta \AGS t\theta$.
	The remaining cases are as follows:
	\begin{itemize}
	\item
		Suppose $s_i \GS_\WPO t$ for some $i \in \sigma(f)$.
		By the induction hypothesis, we get $s_i\theta \GS_\WPO t\theta$.
		Hence, case \prettyref{item:WPO-simp} applies for $s\theta \GSopt_\WPO t\theta$.
	\item
		Suppose $t = g(\Seq{t_m})$ and $s \GT_\WPO t_j$ for all $j \in \sigma(g)$.
		By the induction hypothesis, we get $s\theta \GT_\WPO t_j\theta$.
		\REV{The claim}{It} is obvious if $f \PGT g$. If $f \PSIM g$ and
		$\AppPerm{\sigma(f)}{s}{n} \GSopt_\WPO^\Lex \AppPerm{\sigma(g)}{t}{m}$,
		then by the induction hypothesis we get
		\[
			\AppPermSubst{\sigma(f)}{s}{n}{\theta} \GSopt_\WPO^\Lex
			\AppPermSubst{\sigma(g)}{t}{m}{\theta}
		\]
		Hence, case \prettyref{item:WPO-mono} applies for $s\theta \GSopt_\WPO t\theta$.
	\item
\REV{%
	If $s \GS_\WPO t$ is derived by either case \prettyref{item:WPO-min} or
	\prettyref{item:WPO-max}, then
	the claim follows from
	Proposition \ref{prop:least} or \ref{prop:greatest}.
}{}%
	\qedhere
	\end{itemize}
\end{proof}

\begin{lemma}[Weak monotonicity]
	If $\A$ is \REV{weakly monotone and }{}weakly simple \wrt $\sigma$,
	then $\GS_\WPO$ is monotone.
\end{lemma}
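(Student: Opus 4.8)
The plan is to reduce monotonicity to a single immediate context. It suffices to show that $s \GS_\WPO t$ implies $u \GS_\WPO v$, where $u = f(s_1,\dots,s_{k-1},s,s_{k+1},\dots,s_n)$ and $v = f(s_1,\dots,s_{k-1},t,s_{k+1},\dots,s_n)$ agree on every argument except the $k$-th. I would aim to derive $u \GS_\WPO v$ through case \prettyref{item:WPO-mono} of \prettyref{def:WPOpS}, since $u$ and $v$ share the root $f$ and $f \PSIM f$; this requires discharging three obligations: the algebra condition $u \AGS v$, the side condition $u \GT_\WPO t_j$ for every $j \in \sigma(f)$ (where $t_j$ denotes the $j$-th argument of $v$), and the lexicographic comparison $\AppPerm{\sigma(f)}{s}{n} \GS_\WPO^\Lex \AppPerm{\sigma(f)}{t}{n}$.

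First I would establish $u \AGS v$. By inspecting \prettyref{def:WPOpS} together with the refinements \prettyref{item:WPO-min} and \prettyref{item:WPO-max}, every way of deriving $s \GS_\WPO t$ yields $s \AGS t$: each applicable case either asserts $s \AGT t$, or already presupposes $s \AGS t$, or is the reflexive base case on a variable, and $>_\A$ refines $\gs_\A$ in a well-founded algebra. Weak monotonicity of $\A$ then lifts $s \AGS t$ to $u \AGS v$.

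Next I would handle the side condition $u \GT_\WPO t_j$ for $j \in \sigma(f)$. For $j \neq k$ the argument $t_j$ equals $s_j$, and weak simplicity of $\A$ \wrt $\sigma$ gives $u \AGS s_j$; combined with $s_j \GS_\WPO s_j$ (reflexivity) this yields $u \GT_\WPO s_j$ via case \prettyref{item:WPO-simp}, i.e.\ the subterm property along $\sigma$-positions (note that $u$ is non-variable, so the shared clauses of $\GSopt_\WPO$ give the \emph{strict} relation). For $j = k$ the same subterm argument gives $u \GT_\WPO s$, and since $s \GS_\WPO t$, compatibility (\prettyref{lem:WPO compatible}) delivers $u \GT_\WPO t = t_k$. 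The lexicographic obligation is the easy part: the two status lists coincide in every component except possibly the one carrying position $k$, where $s \GS_\WPO t$ holds while all other components are syntactically equal, so the comparison succeeds weakly and $\GS_\WPO^\Lex$ follows.

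The main obstacle I anticipate is the interaction between strictness and the $k$-th argument in the side condition: at $j = k$ one must invoke compatibility of $\GT_\WPO$ with $\GS_\WPO$ precisely to convert the subterm inequality $u \GT_\WPO s$ and the hypothesis $s \GS_\WPO t$ into $u \GT_\WPO t$, rather than merely $u \GS_\WPO t$, which would not discharge case \prettyref{item:WPO-args}. A secondary point is ensuring that $\gs_\A$-monotonicity is actually available in the $\AGT$ sub-case of the hypothesis, which I would justify by $>_\A$ refining $\gs_\A$; the variable and refinement derivations of $s \GS_\WPO t$ are absorbed uniformly, since they likewise guarantee $s \AGS t$.
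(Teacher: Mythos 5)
Your proof is correct and follows essentially the same route as the paper's: derive $u \AGS v$ from weak monotonicity of $\A$, discharge the side condition of case \prettyref{item:WPO-args} via the subterm property along $\sigma$-positions (case \prettyref{item:WPO-simp} plus reflexivity), and conclude by case \prettyref{item:WPO-mono} with a weak lexicographic comparison. Your explicit use of compatibility at the modified position $j=k$ is a minor (and slightly more careful) variation of a step the paper leaves implicit.
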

\begin{proof}
	Suppose $s_i \GS_\WPO s_i'$ and
	let us show $s = f(\dots,s_i,\dots) \GS_\WPO f(\dots,s_i',\dots) = s'$.
	Since $s_i \AGS s_i'$, we have
	$s \AGS s'$ by the weak monotonicity of $\A$.
	
	If $i \notin \sigma(f)$, then we have 
	$\AppPerm{\sigma(f)}{s}{n} = \AppPerm{\sigma(f)}{s'}{n}$.
	Otherwise, by the weak simplicity assumption we have
	$s \AGS s_j$ for every $j \in \sigma(f)$,
	and thus $s \GT_\WPO s_j$ by case \prettyref{item:WPO-simp} of \prettyref{def:WPOpS}.
	Hence case \prettyref{item:WPO-mono} applies for $s\GS_\WPO s'$.\qedhere
\end{proof}

\REV{%
The above results conclude \prettyref{thm:WPO pair}.
}{}%
\REV{%
	Now we prove \prettyref{thm:WPO simple}.
	Most of the required properties have already been obtained above,
	except for the following two:
}{%
	Now we show that $\WPO$ is a simplification order,
	and hence a reduction order.
	Most of the required properties will be obtained
	in \prettyref{sec:pair definition}, except for the following two:
}%

\begin{lemma}[\REV{Strict monotonicity}{}]\label{lem:WPO monotone}
	If $\A$ is weakly monotone and weakly simple,
	then $\GT_\WPO$ is monotone.
\end{lemma}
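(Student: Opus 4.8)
The plan is to prove monotonicity directly from \prettyref{def:WPO}, where $\sigma$ is total. Given $s_i \GT_\WPO s_i'$ I must show $s = f(\dots,s_i,\dots) \GT_\WPO f(\dots,s_i',\dots) = s'$. The essential difficulty is that $\A$ is only \emph{weakly} monotone, so case \prettyref{item:WPO-gt} is unavailable in general: from $s_i \AGT s_i'$ one obtains merely $s \AGS s'$, not $s \AGT s'$. Consequently the strict decrease cannot be produced at the algebra level and must instead be captured by the recursive, path-order part of the definition, with weak simplicity carrying the unchanged arguments.

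First I would record that $s_i \GT_\WPO s_i'$ entails $s_i \AGS s_i'$, since both \prettyref{item:WPO-gt} and the precondition of \prettyref{item:WPO-ge} of \prettyref{def:WPO} imply $\AGS$. Weak monotonicity of $\A$ then gives $s \AGS s'$, which is exactly the precondition of \prettyref{item:WPO-ge}. I aim to close the proof through \prettyref{item:WPO-mono}, taking $g = f$ so that $f \PSIM g$ holds by reflexivity of $\PSIM$.

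It remains to discharge the two requirements of \prettyref{item:WPO-args} and \prettyref{item:WPO-mono}. For the subterm condition I must show $s \GT_\WPO s'_j$ for every position $j$, where $s'_j = s_j$ for $j \ne i$ and $s'_i = s_i'$. Weak simplicity yields $s \AGS s_k$ for every $k$, and since $\sigma$ is total every position is available in \prettyref{item:WPO-simp}. For $j \ne i$ I would use the reflexive instance $s_j \GE_\WPO s_j$ together with $s \AGS s_j$; for $j = i$ I would use $s_i \GE_\WPO s_i'$ together with $s \AGS s_i \AGS s_i'$ (transitivity of $\AGS$). In each case \prettyref{item:WPO-simp} delivers $s \GT_\WPO s'_j$. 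For the lexicographic condition, the lists $\AppPerm{\sigma(f)}{s}{n}$ and $\AppPerm{\sigma(f)}{s'}{n}$ agree in every component except the one corresponding to position $i$, where $s_i \GT_\WPO s_i'$ while all other components are literally equal; hence $\AppPerm{\sigma(f)}{s}{n} \GT_\WPO^\Lex \AppPerm{\sigma(f)}{s'}{n}$. Together these satisfy \prettyref{item:WPO-mono}, giving $s \GT_\WPO s'$.

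The main obstacle is conceptual rather than computational: because weak monotonicity cannot propagate strictness through $\AGT$, the proof must route the strict step into the status comparison, and this only works because weak simplicity forces every argument, unchanged or strengthened, to satisfy $s \GT_\WPO s'_j$. I expect the sole delicate points to be verifying that the reflexive comparison $s_j \GE_\WPO s_j$ genuinely triggers \prettyref{item:WPO-simp}, and that $\AGS$ is transitive so that the $j = i$ subterm check goes through; both are immediate from the setup.
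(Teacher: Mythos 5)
Your proposal is correct and takes essentially the same route as the paper's proof: obtain $s \AGS s'$ from weak monotonicity, discharge the subterm conditions via weak simplicity together with case \prettyref{item:WPO-simp}, and conclude by the lexicographic comparison in case \prettyref{item:WPO-mono} with $f \PSIM f$. You are in fact slightly more explicit than the paper at the changed position, where the required comparison is $s \GT_\WPO s_i'$ rather than $s \GT_\WPO s_i$ and is obtained from $s \AGS s_i \AGS s_i'$ and $s_i \GE_\WPO s_i'$.
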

\begin{proof}
	Suppose $s_i \GT_\WPO s_i'$ and
	let us show $s = f(\dots,s_i,\dots) \REV{\GT_\WPO}{\GS_\WPO} f(\dots,s_i',\dots) = s'$.
	Since $s_i \AGS s_i'$, we have
	$s \AGS s'$ by the weak monotonicity of $\A$.
	By the weak simplicity of $\A$,
	we have $s \AGS s_j$ for every $j$,
	and thus $s \GT_\WPO s_j$ by case \prettyref{item:WPO-simp} of 
	\REV{\prettyref{def:WPO}}{\prettyref{def:WPOpS}}.
	Hence case \prettyref{item:WPO-mono} applies for $s \GT_\WPO s'$.\qedhere
\end{proof}

\begin{lemma}[\REV{Subterm property}{}]
	If $\A$ is weakly simple, then $\GT_\WPO$ has the subterm property.
\end{lemma}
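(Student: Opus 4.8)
The plan is to prove the subterm property by directly deriving $s = f(\Seq{s_n}) \GT_\WPO s_i$ for each argument position $i$, using the simplicity case \prettyref{item:WPO-simp} of \prettyref{def:WPOpS}. This clause is exactly the analogue of \prettyref{item:LPO-simp} of $\LPO$, which is what makes LPO a simplification order, so I expect the very same mechanism to deliver the subterm property for WPO once weak simplicity of the algebra is in hand.

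First I would invoke weak simplicity of $\A$: since $f_\A(\dots,a,\dots) \gs a$ holds for every argument, the homomorphic extension gives $\widehat\alpha(s) = f_\A(\widehat\alpha(s_1),\dots,\widehat\alpha(s_n)) \gs \widehat\alpha(s_i)$ under every assignment $\alpha$, hence $s \AGS s_i$. This establishes the precondition $s \AGS s_i$ of case \prettyref{item:WPO-ge}. Next, since the subterm property is required for $\GT_\WPO$ viewed as a reduction order, the status $\sigma$ is total, so $i \in \sigma(f)$. Combining this with reflexivity of $\GS_\WPO$ (already established), we have a witness $j = i \in \sigma(f)$ with $s_i \GS_\WPO s_i$, so case \prettyref{item:WPO-simp} fires and yields $s \GT_\WPO s_i$, as desired.

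Rather than a genuine obstacle, the one point that needs care is the reliance on $i \in \sigma(f)$, which is guaranteed precisely because $\sigma$ is total in the reduction-order setting; for a general partial status one would only recover the subterm property at the positions retained by $\sigma$, which is exactly why this lemma belongs to the simplification-order development and not to the reduction-pair one. I also rely on the fact that the defining clause \prettyref{item:WPO-simp} simultaneously justifies the strict relation $\GT_\WPO$ and on reflexivity of $\GS_\WPO$, both of which are already available.
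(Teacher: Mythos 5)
Your proof is correct and follows the paper's own argument exactly: weak simplicity of $\A$ yields $s \AGS s_i$, and then case \prettyref{item:WPO-simp} fires via reflexivity of $\GS_\WPO$ (in \prettyref{def:WPO} the witness position ranges over all of $\{1,\dots,n\}$, so totality of $\sigma$ is indeed available here). The paper states this in a single line; your additional remarks about totality versus partial status are accurate but not needed for the argument.
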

\begin{proof}
	By the assumption, $f(\dots,s_i,\dots) \AGS s_i$ and hence
	$f(\dots,s_i,\dots) \GT_\WPO s_i$ by case \prettyref{item:WPO-simp}.
	\qedhere
\end{proof}

\bibliography{references}

\end{document}